\documentclass{lmcs} 
\pdfoutput=1

\usepackage[utf8]{inputenc}

\usepackage{lastpage}
\lmcsdoi{17}{4}{2}
\lmcsheading{}{\pageref{LastPage}}{}{}%
{Jun.~12,~2020}{Oct.~01,~2021}{}

\keywords{coinduction, inference systems,  regular trees, fixed points}

\usepackage{xcolor} 
\usepackage{mathtools}
\usepackage{amssymb}
\usepackage{amsmath}
\usepackage{stmaryrd}
\usepackage{xspace}
\usepackage[all,2cell]{xy}

\usepackage{pkg/math-gen}
\usepackage{pkg/is}
\usepackage{pkg/poset}

\newif\ifsubmit
\submittrue

\ifsubmit
\newcommand{\EZ}[1]{{#1}} 

\newcommand{\EZComm}[1]{} 
\newcommand{\FDComm}[1]{}
\else
\newcommand{\EZ}[1]{\textcolor{blue}{#1}} 
 
\newcommand{\EZComm}[1]{{\scriptsize\textcolor{blue}{[\bf{Elena: }#1}]}}
\newcommand{\FDComm}[1]{{\scriptsize\textcolor{red}{[\bf{Francesco: }#1}]}}
\fi

\newcommand{\Gr}{\mathit{G}} 
\newcommand{\node}{\mathit{v}} 
\newcommand{\anode}{\mathit{u}} 
\newcommand{\Edges}{\mathit{E}} 
\newcommand{\pthLen}[1]{\|{#1}\|} 
\newcommand{\gdist}[4]{\mathsf{dist}_{#1}(#2,#3,#4)} 
\newcommand{\Nodes}{\mathit{V}} 
\newcommand{\distis}{\is^{\mathsf{dist}}} 

\newcommand{\distSpec}{\mathcal{D}}
\newcommand{\distStar}{\ensuremath{(\star)}}

\newcommand{\addNum}[4]{\mathsf{add}(#1,#2,#3,#4)} 
\newcommand{\minElem}[2]{\mathsf{min}(#1,#2)} 
\newcommand{\allPos}[1]{\mathsf{allPos}(#1)} 

\newcommand{\posis}{\is^{>0}}

\newcommand{\posSpec}{\Spec^{>0}} 

\newcommand{\minSpec}{\Spec^{\mathsf{min}}} 
\newcommand{\addSpec}{\Spec^{\mathsf{add}}} 
\newcommand{\rep}[1]{\overline{#1}}

\newcommand{\toTreen}[1]{\mathsf{tr}^{#1}_{\Pair{X}{g}}} 
\newcommand{\toTree}[4]{\toTreen{#1}(#3,#4)} 

\newcommand{\rterm}[1]{\textsf{reg-red}(#1)}


\begin{document}

\title{Foundations of regular coinduction} 


\author[F.~Dagnino]{Francesco Dagnino}	
\address{DIBRIS, University of Genova, Italy}	
\email{francesco.dagnino@dibris.unige.it}  





\begin{abstract}
Inference systems are a widespread framework used to define possibly recursive predicates by means of inference rules. 
They allow both inductive and coinductive interpretations that are fairly well-studied. 
In this paper, we consider a middle way interpretation, called \emph{regular}, which combines advantages of both approaches: it allows non-well-founded reasoning while being finite. 
We show that the natural proof-theoretic definition of the regular interpretation, based on regular trees, coincides with a rational fixed point.  
Then, we provide an equivalent inductive characterization, which leads to an algorithm which looks for a regular derivation of a judgment.
Relying on these results, we define proof techniques for regular reasoning: the \emph{regular coinduction principle}, to prove completeness, and an inductive technique to prove soundness, based on the inductive characterization of the regular interpretation. 
Finally, we show the regular approach can be smoothly extended to \emph{inference systems with corules}, a recently introduced, generalised framework, which allows one to refine the coinductive interpretation, 
proving that also this flexible regular interpretation admits an equivalent inductive characterisation. 
\EZComm{sarebbe bello trovare un nome per la caratterizzazione  induttiva e la relativa tecnica di prova} 
\end{abstract}

\maketitle

%
%

\section{Introduction} \label{sect:intro}

Non-well-founded structures, such as graphs and streams,  are ubiquitous in computer science. 
Defining and proving properties of this kind of structures in a natural way is a challenging problem. 
Indeed, standard inductive techniques are not \EZ{adequate}, because they \EZ{require} to reach a base case in finitely many steps, and this is clearly not guaranteed\EZ{, since non-well-founded structures are conceptually infinite.}
The natural way to deal with such structures is by \emph{coinduction}, the dual of induction, which allows non-well-founded reasoning. 

A widespread approach to structure formal reasoning is by \emph{inference rules}, which define the steps we can do to prove judgements we are interested in. 
They support both inductive and coinductive reasoning in a pretty natural way: 
in inductive reasoning we are only allowed to use finite derivations, while in the coinductive one we can prove judgements by arbitrary, finite or infinite, derivations, 
 hence \EZ{coinductive reasoning} can properly handle non-well-founded structures. 

Coinductive reasoning is very powerful: it allows to derive judgements which require the proof of infinitely many different judgements. 
For instance, consider the following inference rule used to prove that a stream contains only positive elements: 
\[ \Rule{ \allPos{s} }{ \allPos{x\colon s} }\,x>0 \]
\EZ{To} prove that the stream of all odd natural numbers contains only positive elements, 
we can use the following  infinite derivation: 
\[ 
\Rule{
  \Rule{
    \Rule{\vdots}{ \allPos{5:7:9:\ldots} } 
  }{ \allPos{3:5:7:\ldots} } 
}{ \allPos{1:3:5\ldots} }  
\]
which is valid in coinductive reasoning and contains infinitely many different judgements. 

However, there are cases where, even though we need an infinite derivation, this derivation requires only the proof of \emph{finitely many} different judgements. 
This is often the case when dealing with cyclic structures, such as graphs or cyclic streams, since they are non-well-founded, but finitely representable.
For instance, if we want to prove that the stream of all 1's contains only positive elements, we \EZ{can use} the following derivation: 
\[
\Rule{
  \Rule{
    \Rule{\vdots}{ \allPos{1:1:1:\ldots} }
  }{ \allPos{1:1:1:\ldots} } 
}{ \allPos{1:1:1:\ldots} } 
\]
which is infinite, but requires only the proof of $\allPos{1:1:1:\ldots}$. 

Borrowing the terminology from trees \cite{Courcelle83}, we call a derivation requiring the proof of finitely many different judgments \emph{regular} (a.k.a. \emph{rational}\footnote{The terms regular and rational are synonyms. However we will mainly use the second one for the model-theoretic approach, see \refToSect{rfp}.}), 
and we call \emph{regular coinduction} (or \emph{regular reasoning})  the approach that allows only regular derivations. 

Whereas inductive and coinductive reasoning have well-known  semantic foundations and proof principles, to our knowledge regular reasoning by means of  inference rules  has never been explored at the same extent. \EZ{The aim of this paper is to fill this gap, by providing solid foundations also to the regular approach. Indeed, we believe that}
the regular approach provides a very interesting middle way between induction and coinduction. 

Indeed, inductive reasoning is  restricted to finite derivations, but, in return, we implicitly get an (abstract) algorithm, which looks for a derivation of a judgement. 
Such an algorithm is sound and complete with respect to derivable judgements. \EZ{That is, it may not terminate for judgements that do not have a finite derivation, but it is guaranteed to successfully terminate, finding a finite derivation, for all and only derivable judgments.}
Instead, coinductive reasoning allows also infinite derivations, but there is no hope, in general, to find an algorithm \EZ{which succesfully terminates for derivable judgments,} because, as we have seen,  a derivation may require infinitely many different judgements to be proved\footnote{This is just an intuitive explaination. This fact has been proved for logic programs in \cite{AnconaD15}, which are a particular, syntactic, instance of general rule-based definitions considered in this paper. }. 

Regular reasoning combines advantages of the two approaches:
on one hand, it is not restricted to finite derivations, going beyond limits of induction, but, on the other hand, it still has, like induction,  a finite nature, 
\EZ{hence it is possible to design an algorithm which finds a derivation for all and only derivable judgments, as we will show in the following.}

In detail, the contribution of this paper is the following.
\begin{itemize}
\item An equivalent \emph{model-theoretic} characterization of judgements derivable by a regular proof tree, showing it is an instance of the \emph{rational fixed point} \cite{AdamekMV06}.  
\EZ{This is important since it provides a purely semantic view of regular coinduction. Moreover, from}
this we get a proof principle, the \emph{regular coinduction principle}, which can be used to prove completeness of a set of inference rules against a set of valid judgements, that is, that all valid judgement are derivable by a regular proof tree. 
\item An equivalent \emph{inductive} characterization of judgements derivable by a regular proof tree. 
Essentially, following the structure of the operational model of coinductive logic programming \cite{SimonMBG06,AnconaD15}, 
but in the more abstract setting of rule-based definitions, 
we enrich judgements by a finite set of \emph{circular hypotheses}, used to keep  track of already encountered judgements so that, when the same judgement is found again, it can be used as an axiom. 
\EZ{This nicely formalizes, by an abstract construction in the general setting of rule-based definitions and a correcteness proof given once and for all, techniques used in different specific cases for dealing with cyclic structures inductively, by detecting cycles to ensure termination. }
Furthermore, this provides us with a sound and complete algorithm \EZ{to find a regular derivation for a judgment, if any.}
Finally, relying on the inductive characterization, we define a proof technique to show soundness of a set of inference rules against a set of valid judgements, that is, that  all derivable judgements are valid. 
\end{itemize}


Moreover, we show that all these results can be smoothly extended to a recently introduced generalisation of coinductive reasoning, \EZ{namely} \emph{flexible coinduction} \cite{Dagnino17,AnconaDZ17esop,Dagnino19}. 
Beside standard rules, this generalised framework allows  also \emph{corules}, 
which are special rules used to validate infinite derivations. 
As a result, using corules, we are able to filter out some undesired infinite derivations, having a much finer control on the set of derivable judgements. 
Flexible coinduction  smoothly extends standard coinduction, subsuming it, that is, \EZ{standard coinduction} can be recovered by a specific choice of corules. 
We will show this is the case also in the regular case, that is, flexible regular coinduction subsumes standard regular coinduction.

The rest of the paper is organized as follows. 
In \refToSect{is} we recall basic concepts on inference systems and define the regular interpretation in proof-theoretic terms. 
In \refToSect{rfp} we define the rational fixed point in a lattice-theoretic setting,  and in \refToSect{equiv} we prove that the regular interpretation coincides with a rational fixed point.
\refToSect{cycle} provides the equivalent inductive characterization of the regular interpretation and \refToSect{reasoning} discusses proof techniques for regular reasoning.
In \refToSect{corules} we extend all the previously presented results to flexible coinduction. \EZ{Finally, \refToSect{related}  discusses related work and 
\refToSect{conclu} concludes the paper, outlining future work. }

\subsubsection*{Notations}
Let $X$ be a set, we denote by $\wp(X)$ and $\finwp(X)$  its power-set and its finite power-set, respectively. 
For a function  $\fun{f}{X}{Y}$, $\fun{\img{f}}{\wp(X)}{\wp(Y)}$ and $\fun{\inv{f}}{\wp(Y)}{\wp(X)}$ are the direct image and the inverse image along $f$, respectively, hat is, 
$\img{f}(A) = \{ y \in Y \mid \exists x \in A. y = f(x) \}$ and 
$\inv{f}(B) = \{ x \in X \mid f(x) \in B \}$. 
We also denote by $\List{X}$ the set of finite sequences on $X$, by $\EList$ the empty sequence and by $\alpha\beta$ the concatenation of sequences $\alpha$ and $\beta$.

\section{Inference systems and regular derivations}  \label{sect:is} 

In this section, we recall basic definitions about inference systems \cite{Aczel77,LeroyG09,Sangiorgi11} and their standard semantics, 
and define their regular interpretation. 

Let us assume a \emph{universe} $\universe$, which is a set, whose elements $\judg$ are called \emph{judgements}. 
An \emph{inference system} $\is$ is a set of \emph{(inference) rules}, which are pairs  $\RulePair{\prem}{\conclu}$, also written $\Rule{\prem}{\conclu}$,  where $\prem \subseteq \universe$ is the set of \emph{premises}, while $\conclu \in \universe$ is the \emph{conclusion}. 
A \emph{proof tree} (a.k.a. \emph{derivation}) in $\is$  is a tree with nodes (labelled) in $\universe$ and such that, for each node $\judg$ with set of children $\prem$, there is a rule $\RulePair{\prem}{\judg}$ in $\is$. 
The \emph{inference operator} $\fun{\InfOp{\is}}{\wp(\universe)}{\wp(\universe)}$ is defined as follows: 
\[ \InfOp{\is}(X) = \{ \judg \in \universe \mid \exists \prem \subseteq X.\ \RulePair{\prem}{\judg} \in \is \} \] 
A subset $X\subseteq \universe$ is \emph{closed} if, for all rules $\RulePair{\prem}{\judg} \in \is$, if $\prem\subseteq X$ then, $\judg\in X$, that is,  $\InfOp{\is}(X) \subseteq X$, 
it is \emph{consistent} if, for all $\judg\in X$, there is a rule $\RulePair{\prem}{\judg} \in \is$, that is, $X\subseteq \InfOp{\is}(X)$, and 
it is an \emph{interpretation}, if it is both closed and consistent, namely, 
a fixed point $X = \InfOp{\is}(X)$. 

There are two main approaches to define interpretations of an inference system: the model-theoretic and the proof-theoretic one. 
The two standard interpretations, the inductive and the coinductive one, can be equivalently defined in proof-theoretic and in model-theoretic style: 
\begin{itemize}
\item the \emph{inductive interpretation} $\Ind{\is}$ is the set of judgements having a well-founded proof tree, \EZ{and} also the least fixed point of $\InfOp{\is}$, and 
\item the \emph{coinductive interpretation} $\CoInd{\is}$ is the set of judgements having an arbitrary (well-founded or not) proof tree, \EZ{and} also the greatest fixed point of $\InfOp{\is}$. 
\end{itemize}
In the following, we will write $\validInd{\is}{\judg}$ for $\judg \in \Ind{\is}$ and $\validCo{\is}{\judg}$ for $\judg \in \CoInd{\is}$. 

In this paper, we assume inference systems to be \emph{finitary}, that is, all rules have  a finite set of premises. 
Under this assumption, well-founded proof trees are always finite and infinite proof trees are always non-well-founded\footnote{This is an immediate consequence of the K\"{o}nig's lemma.}, hence we will use this simpler terminology.  

In the coinductive interpretation, since we allow arbitrary proof trees, we can derive judgements requiring infinitely many different judgements to be proved. 
However, there are cases where we still need infinite derivations, but only of finitely many judgements. 
This idea of an infinite proof tree containing only finitely many different judgements nicely corresponds to a well-known class of trees: \emph{regular trees} \cite{Courcelle83}. 
We say that a tree is \emph{regular}  if it has a finite number of different subtrees. 
Then, we can define another set of judgements: 

\begin{defi}[Regular interpretation] \label{def:rational-is}
The \emph{regular interpretation} of an inference system $\is$ is the set $\Reg{\is}$ of judgements having a regular proof tree.  
\end{defi}
In the following we will write $\validReg{\is}{\judg}$ for $\judg \in \Reg{\is}$. 
To \EZ{ensure} that the regular interpretation is well-defined, we have to check it is indeed an interpretation, namely, it is a fixed point of $\InfOp{\is}$. We refer to \refToSect{equiv} for this proof. 

Let us illustrate \EZ{regular proof trees by} a couple of examples. 
Consider $\lambda$-terms, ranged over by  $M,N$, with the usual full $\beta$-reduction $\to$. 
Denote by $R(M)$ the set of terms reachable from $M$, that is, $R(M) = \{ N \mid M\to^\star N\}$, where $\to^\star$ is the reflexive and transitive closure of $\to$, and by $S(M)$ the subset of $R(M)$ of those terms reachable in one step, that is, $S(M) = \{ N \mid M \to N\}$. 
Note that $S(M)$ is necessarily finite as the set of redexes in any $\lambda$-term is finite, while $R(M)$ may in general be infinite. 
We say that a term $M$ is \emph{regularly reducible} if the set $R(M)$ is finite. 
In other words, this means that its evaluation graph of $M$ is finite or, alternatively, its evaluation tree is regular, because the evaluation tree of $M$ has one different subtree for any term reachable from $M$. 
Obviously, all strongly normalising terms are regularly reducible, but also some non-normalising terms are regularly reducible, 
for instance, so is the term $\Omega = \Delta\, \Delta$, where $\Delta = \lambda x.x\,x$, as it always reduces to itself. 
Instead, a term like $\Omega_1 = \Delta_1\,\Delta_1$, with $\Delta_1 = \lambda x.x\,x\,x$, is not regularly reducible 
as it reduces to larger and larger terms 
($\Omega_1\to\Omega_1\, \Delta_1 \to \Omega_1\, \Delta_1\, \Delta_1 \to \ldots$). 

We can define a judgement $\rterm{M}$ characterising regularly reducible terms as the regular interpretation of the following rule: 
\[
\MetaRule{step}{
  \rterm{N_1}\Space\ldots\Space \rterm{N_k} 
}{ \rterm{M} }{ S(M) = \{ N_1,\ldots,N_k \} } 
\]
Indeed, the proof tree for $\rterm{M}$ coincides with the evaluation tree of $M$, hence judgements derivable by regular proof trees coincides with regularly reducible terms.

\label{page:ex-dist}
As another example, 
assume that we want to define the judgement $\gdist{\Gr}{\node}{\anode}{\delta}$, where $\Gr$ is a graph, $\node$ and $\anode$ are nodes in $\Gr$ and $\delta \in [0,\infty]$, stating that the distance from $\node$ to $\anode$ in $\Gr$ is $\delta$. 
We represent a graph by its accessibility function $\fun{\Gr}{\Nodes}{\wp(\Nodes)}$, where $\Nodes$ is the finite  set of nodes.
The judgement is defined by the following (meta-)rules, where we assume $\min \emptyset = \infty$:   
\[ 
\MetaRule{empty}{}{ \gdist{\Gr}{\node}{\node}{0} }{}  
\BigSpace 
\MetaRule{adj}{
  \gdist{\Gr}{\node_1}{\anode}{\delta_1} \Space \ldots \Space \gdist{\Gr}{\node_n}{\anode}{\delta_n}
}{ \gdist{\Gr}{\node}{\anode}{1 + \min \{\delta_1, \ldots, \delta_n\} }  }{ 
\node \ne \anode \\ 
\Gr(\node) = \{\node_1,\ldots,\node_n\} 
}
\]
Of course the inductive interpretation is not enough: it can only deal with acyclic graphs, because, in presence of cycles, we cannot reach a base case (an axiom) in finitely many steps. 
Hence, we need infinite derivations to handle cycles, and, 
since the set of nodes is finite, to compute the distance, we need only finitely many judgements, thus regular derivations should be enough. 
\refToFig{ex-dist} shows a concrete example of this feature of derivations in the above inference system. 

\begin{figure}
\[\begin{array}{cc}
\vcenter{\xymatrix@R=10ex@C=10ex{
a \ar@/^5pt/[r] \ar[d] & b \ar@/^5pt/[l] \ar[d] \\
d \ar@(dr,ur)   & c
} }  & 
\Rule{
  \Rule{
    \Rule{\vdots}{ \gdist{\Gr}{a}{c}{2} }
    \Space 
    \Rule{}{ \gdist{\Gr}{c}{c}{0} } 
  }{ \gdist{\Gr}{b}{c}{1} }  \Space 
  \Rule{
    \Rule{\vdots}{ \gdist{\Gr}{d}{c}{\infty} }
  }{ \gdist{\Gr}{d}{c}{\infty} }
}{ \gdist{\Gr}{a}{c}{2}  }
\end{array} \]
\caption{On the left side a concrete graph $\Gr$ with nodes $\{a,b,c,d\}$, and on the right side the regular derivation of the judgement $\gdist{\Gr}{a}{c}{2}$. } \label{fig:ex-dist} 
\end{figure}

As we said, standard inductive and coinductive interpretations are fixed points of the inference operator. 
In the next few sections, we will show that this is the case also for the regular interpretation.

\section{The rational fixed point} \label{sect:rfp} 

In this section we define the rational fixed point in a lattice-theoretic setting, which will be the basis for the fixed point characterisation of \EZ{the regular interpretation}. 
The construction we present in \refToDef{rfp} and \refToThm{rfp} is an instance of analogous constructions \cite{AdamekMV06,MiliusPW16,MiliusPW19} developed in a more general category-theoretic setting. 
We work in the lattice-theoretic setting, since it is enough for the aim of this paper and definitions and proofs are simpler and understandable by a wider audience. 

First, we report some basic definitions on lattices, for details we refer to \cite{DaveyP02}. 
A \emph{complete lattice} is a partially ordered set $\Pair{\lattice}{\order}$ where all $A \subseteq \lattice$ have a least upper bound (a.k.a.\ \emph{join}), denoted  by $\lub A$.
In particular, in $\lattice$  there are both a top element $\top = \lub \lattice$ and a bottom element $\bot = \lub \emptyset$.
Furthermore, it can be proved that all $A \subseteq \lattice$ have also a greatest lower bound (a.k.a.\ \emph{meet}), denoted by $\glb A$.
In the following, for all $x, y \in \lattice$,  we will write $x \join y$ for the binary join and $x \meet y$ for the binary meet. 
The paradigmatic example of complete lattice is the power-set $\wp(X)$  of a set $X$, ordered by set inclusion, where least upper bounds are given by unions. 

An element $x \in \lattice$ is \emph{compact} if, for all $A \subseteq \lattice$ such that $x \order \lub A$, there is a finite subset $B \subseteq A$ such that $x \order \lub B$. 
We denote by $\Compact{\lattice}$ the set of compact elements in $\lattice$. 
It is easy to check that  $\Compact{\lattice}$ is closed under binary joins, that is, if $x, y \in \lattice$ are compact, then  $x \join y$ is compact as well. 
In the power-set lattice, compact elements are finite subsets. 

An \emph{algebraic lattice} is a complete lattice $\Pair{\lattice}{\order}$ where each $x \in \lattice$ is the join of all compact elements below it, that is, 
$x = \lub \{ y \in \Compact{\lattice} \mid y \order x \}$. 
In other words, an algebraic lattice is generated by the set of its compact elements, since each element can be decomposed as a (possibly infinite) join of compact elements. 
The power-set lattice is algebraic, since each element can be decomposed as a union of singletons, which are obviously compact. 

Given a function $\fun{\MFun}{\lattice}{\lattice}$ and an element $x \in \lattice$, we say that
$x$ is a \emph{pre-fixed point} if $\MFun(x) \order x$, 
a \emph{post-fixed point} if $x \order \MFun(x)$, and 
a \emph{fixed point} if $x = \MFun(x)$.
We are interested in a special class of functions, called finitary functions (a.k.a. Scott-continuous functions), defined below. 
A subset $A \subseteq \lattice$ is \emph{directed} if, for all $x, y \in A$, there is $z \in A$ such that $x \order z$ and $y \order z$, then 
a \emph{finitary} function $\fun{\MFun}{\lattice}{\lattice}$ is a function preserving the joins of all directed subsets of $\lattice$, that is, 
for each directed subset $A \subseteq \lattice$, $\MFun(\lub A) = \lub \img{\MFun}(A)$. 
A finitary function is also monotone: if $x \order y$ then the set $\{x, y\}$ is directed and its join is $y$, hence we get $\MFun(y) = \MFun(x) \join \MFun(y)$, that is, $\MFun(x) \order \MFun(y)$.   
Monotone functions over a complete lattice have an important property:  thanks to the Knaster-Tarski theorem~\cite{Tarski55}, we know that they have both least and greatest fixed points, {that} we denote by $\lfp\MFun$ and $\gfp\MFun${,} respectively.
We will show that for a finitary function over an algebraic lattice we can construct another fixed point lying between the least and the greatest one. 
In the following we assume an algebraic lattice $\Pair{\lattice}{\order}$. 

\begin{defi} \label{def:rfp}
Let $\fun{\MFun}{\lattice}{\lattice}$ be a finitary function. 
The \emph{rational fixed point} of $\MFun$, denoted by $\rfp \MFun$, is the join of all compact post-fixed points of $\MFun$, that is, 
if $R_\MFun = \{ x \in \Compact{\lattice} \mid x \order \MFun(x) \}$, 
\[ \rfp \MFun = \lub R_\MFun  \]
\end{defi}
Note that, since both compact elements and post-fixed points are closed under binary joins, we have that, for all $x, y \in R_\MFun$, $x\join y \in R_\MFun$, but, in general, $\rfp \MFun$ is not compact, because it is the join of an infinite set. 

The following theorem ensures that the rational fixed point is well-defined, that is, it is indeed a fixed point. 
Such result is a consequence of a general category-theoretic analysis \cite{AdamekMV06}, we rephrase the proof in our more specific setting as it is much simpler. 

\begin{thm} \label{thm:rfp}
Let $\fun{\MFun}{\lattice}{\lattice}$ be a finitary function, then $\rfp \MFun$ is a fixed point of $\MFun$. 
\end{thm}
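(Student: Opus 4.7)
My plan is to prove the two inequalities $\rfp\MFun \order \MFun(\rfp\MFun)$ and $\MFun(\rfp\MFun) \order \rfp\MFun$ separately. The first one is essentially routine: $R_\MFun$ is directed (being closed under binary joins, as noted after \refToDef{rfp}), so I can apply finitariness of $\MFun$ to get $\MFun(\rfp\MFun) = \MFun(\lub R_\MFun) = \lub \img{\MFun}(R_\MFun)$, and then for each $x \in R_\MFun$ the inequality $x \order \MFun(x) \order \MFun(\rfp\MFun)$ gives the bound on the join.

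The harder direction is $\MFun(\rfp\MFun) \order \rfp\MFun$, and this is where I expect the real work. My plan is to exploit algebraicity: since $\MFun(\rfp\MFun) = \lub \{ y \in \Compact{\lattice} \mid y \order \MFun(\rfp\MFun) \}$, it suffices to show that every compact $y$ below $\MFun(\rfp\MFun)$ is in fact below $\rfp\MFun$. Fix such a $y$. Using the identity $\MFun(\rfp\MFun) = \lub \img{\MFun}(R_\MFun)$ from the previous paragraph, compactness of $y$ and directedness of $\img{\MFun}(R_\MFun)$ (which is directed because $R_\MFun$ is directed and $\MFun$ is monotone) yield a single element $x \in R_\MFun$ with $y \order \MFun(x)$.

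The key trick now is to consider $x \join y$ and argue that it lies in $R_\MFun$. It is compact since compact elements are closed under binary joins, and it is a post-fixed point because $x \order \MFun(x) \order \MFun(x \join y)$ and $y \order \MFun(x) \order \MFun(x \join y)$ by monotonicity, so $x \join y \order \MFun(x \join y)$. Therefore $y \order x \join y \order \rfp\MFun$ as required. The main conceptual obstacle is seeing that one needs this "saturation" step $x \rightsquigarrow x \join y$: having $y$ bounded by $\MFun(x)$ for some witness $x \in R_\MFun$ is not immediately enough, and the closure of $R_\MFun$ under binary joins is precisely what repairs this. Combining the two inequalities gives $\rfp\MFun = \MFun(\rfp\MFun)$.
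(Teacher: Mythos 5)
Your proposal is correct and follows essentially the same route as the paper's proof: algebraicity reduces the pre-fixed-point inequality to compact elements, directedness of $R_\MFun$ plus finitariness gives $\MFun(\rfp\MFun)=\lub\img{\MFun}(R_\MFun)$, compactness extracts a single witness $x\in R_\MFun$ with $y\order\MFun(x)$, and the saturation step $x\join y\in R_\MFun$ is exactly the paper's $x\join w$ argument. The only cosmetic difference is that the paper obtains the witness as $w=\lub W$ for a finite $W\subseteq R_\MFun$ rather than by appealing to directedness of the image, and it gets post-fixedness of $\rfp\MFun$ directly from monotonicity without invoking finitariness.
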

\begin{proof}
Since $\rfp\MFun$ is defined as the least upper bound of a set of post-fixed points, it is post-fixed as well. 
Hence, we have only to check that $\MFun(\rfp\MFun) \order \rfp\MFun$. 

First, since $\lattice$ is algebraic we have $\MFun(\rfp\MFun) = \lub \{ x \in \Compact{\lattice} \mid x \order \MFun(\rfp\MFun)\}$, hence it is enough to prove that, for all $x \in \Compact{\lattice}$ such that $x \order \MFun(\rfp\MFun)$, we have $x \order \rfp\MFun$. 
Consider $x \in \Compact{\lattice}$ such that $x \order \MFun(\rfp\MFun)$. 
Note that $R_\MFun$ is a directed set, indeed, if $X \subseteq  R_\MFun$ is finite, then $\lub X \in R_\MFun$,  hence, since $\MFun$ is finitary, we have $\MFun(\rfp\MFun) = \MFun(\lub R_\MFun) = \lub \img{\MFun}(R_\MFun)$. 
Therefore, $x \order \lub \img{\MFun}(R_\MFun)$ and, since $x$ is compact, there is a finite subset $W \subseteq R_\MFun$ such that $x \order \lub \img{\MFun}(W)$. 
Set $w = \lub W$, since $\MFun$ is monotone, we get $x \order \lub \img{\MFun}(W) \order \MFun(\lub W) = \MFun(w)$. 
By definition, $w\in R_\MFun$, namely, it is compact and post-fixed, hence we get $x\join w \order \MFun(w) \order \MFun(x\join w)$, since $\MFun$ is monotone. 
Finally, $x\join w $ is compact, as it is the join of compact elements, hence $x \join w \in R_\MFun$, and this implies $x \order x \join w \order \rfp \MFun$, as needed. 
\end{proof}

As for least and greatest fixed points, as an immediate consequence of \refToDef{rfp}, we get a proof principle to show that an element is below $\rfp\MFun$.
\begin{prop} \label{prop:below-rfp}
Let $\fun{\MFun}{\lattice}{\lattice}$ be a finitary function and $z \in \lattice$, then, 
 if there is a set $X \subseteq \Compact{\lattice}$ such that 
\begin{itemize}
\item for all $x \in X$, $x \order \MFun(x)$, and 
\item $z \order \lub X$,
\end{itemize}
then $z \order \rfp \MFun$. 
\end{prop}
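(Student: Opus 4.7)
The plan is to derive the inequality directly from the definition of $\rfp\MFun$ given in \refToDef{rfp}, exploiting the fact that $\rfp\MFun$ is constructed precisely as the join of the set $R_\MFun$ of compact post-fixed points of $\MFun$. The two hypotheses in the statement are tailored exactly to witness that the set $X$ is contained in $R_\MFun$, so no additional lattice-theoretic machinery (beyond monotonicity of the join operation) should be required.

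First, I would recall the definition $R_\MFun = \{x \in \Compact{\lattice} \mid x \order \MFun(x)\}$ and observe that the two bulleted assumptions on $X$ say precisely that every $x \in X$ belongs to $\Compact{\lattice}$ and satisfies $x \order \MFun(x)$; hence $X \subseteq R_\MFun$. Second, since the join is monotone with respect to set inclusion of its argument, from $X \subseteq R_\MFun$ we immediately get $\lub X \order \lub R_\MFun = \rfp\MFun$. Finally, combining the assumption $z \order \lub X$ with this inequality via transitivity of $\order$ yields $z \order \rfp\MFun$, as required.

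I do not expect any real obstacle here: the proposition is essentially a convenient repackaging of \refToDef{rfp}, and its usefulness as a proof principle lies not in the depth of its proof but in the fact that, in applications, one only needs to exhibit a family of compact post-fixed points whose join dominates the target element $z$, rather than reasoning explicitly about $\rfp\MFun$ itself. The only subtlety worth mentioning in the write-up is that well-definedness of $\rfp\MFun$ as a fixed point is guaranteed by \refToThm{rfp}, so the statement $z \order \rfp\MFun$ is meaningful.
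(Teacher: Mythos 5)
Your proof is correct and follows exactly the same route as the paper: observe that the two hypotheses say $X \subseteq R_\MFun$, then conclude $z \order \lub X \order \lub R_\MFun = \rfp\MFun$ by monotonicity of the join and transitivity. Nothing further is needed.
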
 
\begin{proof}
If these conditions hold, then we have $X \subseteq R_\MFun$, hence $z\order \lub X \order \lub R_\MFun = \rfp \MFun$. 
\end{proof}

\section{Fixed point semantics for regular coinduction} \label{sect:equiv}

In this section, we prove that the regular interpretation $\Reg{\is}$ of a (finitary)  inference system $\is$ (\refToDef{rational-is}) coincides with the rational fixed point of the inference operator $\InfOp{\is}$. 
\EZ{Rather than giving an ad-hoc proof, we present a general framework where to express in a uniform and systematic way the equivalence between proof-theoretic and model-theoretic semantics, and then prove such equivalence for the regular case. 
To do this, we need a more formal account of proof trees.} 
These definitions work for any inference system, even non-finitary ones. 

\subsubsection*{A formal account of proof trees}
To carry out the proof, we need a more formal account of proof trees (see \cite{Dagnino19} for details). 
\EZComm{c'era una parte ripetuta due volte, ho recuperata la nota della prima versione e l'ho inserita sotto ma solo perch\'e non sapevo se la volevi, \`e uguale}

A \emph{tree language} on a set $A$ is a non-empty and prefix-closed subset $L \subseteq \List{A}$, that is,  such that, for all $\alpha \in \List{A}$ and $x \in A$, if $\alpha x \in L$ then $\alpha \in L$, hence, in particular, the empty sequence belongs to any tree language. 

A  \emph{tree} $\tr$ on a set $A$ is a pair $\Pair{r}{L}$ where $L$ is a tree language on $A$ and $r\in A$ is the \emph{root} of the tree. 
We set $\TrNodes{\tr} = L$ and $\rt(\tr) = r$. 
Intuitively, a sequence $\alpha \in L$ represents a node of the tree labelled by $\tr(\alpha)$, defined as the last element of the sequence $\rt(\tr)\alpha$. 
Therefore, a tree $\tr$ on $A$ induces a partial function from $\List{A}$ to $A$ whose domain is  a tree language. 
Differently from the literature \cite{Courcelle83,AczelAMV03}, our definition  forces trees to be unordered and, more importantly, it ensures there cannot be two sibling nodes with the same label, we refer to \cite{Dagnino19}\EZ{, where these trees are called children injective,} for a detailed comparison. 
These two additional requirements will be essential to prove a crucial 
result of this section, \refToThm{universal-tree}, and \EZ{are} reasonable to define proof trees as we will see below. 

Given a tree $\tr$ and a node $\alpha \in \TrNodes{\tr}$, we denote by $\subtr{\tr}{\alpha}$ the \emph{subtree of $\tr$ rooted at $\alpha$}, defined as the pair $\Pair{\tr(\alpha)}{\{\beta \in \List{A} \mid \alpha\beta \in L\}}$.
Hence $\tr$ is \emph{regular} iff the set $\SubTr{\tr} = \{ \subtr{\tr}{\alpha} \mid \alpha \in \TrNodes{\tr}  \}$ is finite. 
We also define $\chltr{\tr}{\alpha} =  \{ \subtr{\tr}{\beta}  \mid \exists x\in A. \beta = \alpha x,\beta\in\TrNodes{\tr} \}$ the set of \emph{children} of $\alpha$ in $\tr$ and 
$\dsubtr{\tr} = \chltr{\tr}{\EList}$ the set of \emph{direct subtrees} of $\tr$, which are the children of the root of $\tr$. 
Note that, for all $\alpha \in \TrNodes{\tr}$, we have $\tr(\alpha) = \rt(\subtr{\tr}{\alpha})$ and $\chltr{\tr}{\alpha} = \dsubtr{\subtr{\tr}{\alpha}}$. 

Having these notations, we say that a tree $\tr$ on the universe $\universe$ is a \emph{proof tree} in $\is$ iff, for all nodes $\alpha \in \TrNodes{\tr}$,  we have 
$\RulePair{\rtdir(\chltr{\tr}{\alpha})}{\tr(\alpha)} \in \is$. 
In the following, as it is customary,  we often represent proof trees using stacks of rules, that is, if $\RulePair{\prem}{\conclu} \in \is$ and $T = \{\tr_i\mid i \in I\}$ is a collection of trees such that $\rtdir(T) = \prem$ and $\rt(\tr_i) = \rt(\tr_j)$ implies $i=j$, we denote by 
$\Rule{T}{\conclu}$ the proof tree $\tr = \Pair{\conclu}{\TrNodes{\tr}}$ where  
\[
\TrNodes{\tr} = \{\EList\} \cup \bigcup_{i \in I} \rt(\tr_i) \TrNodes{\tr_i} 
\]
Finally, we say that a tree $\tr$ is a \emph{proof tree for a judgement} $\judg \in \universe$ if it is a proof tree rooted in $\judg$, that is, $\rt(\tr) = \judg$.

\subsubsection*{Relating proof-theoretic and model-theoretic semantics}
We now present a systematic approach to define and relate proof-theoretic and model-theoretic semantics of inference systems. 
Let $\ProofT{\is}$ be the set of all (well-founded or not) proof trees in $\is$ and, abusing a bit the notation, let $\fun{\rt}{\ProofT{\is}}{\universe}$ be  the function that maps a proof tree to its root. 
Then the direct image and the inverse image along $\rt$ are $\fun{\rtdir}{\wp(\ProofT{\is})}{\wp(\universe)}$ and $\fun{\rtinv}{\wp(\universe)}{\wp(\ProofT{\is})}$, respectively. 
There is an adjunction $\rtdir \dashv \rtinv$, that is, for all $X\subseteq \ProofT{\is}$ and $Y\subseteq \universe$, $\rtdir(X) \subseteq Y$ iff $X \subseteq \rtinv(Y)$. 
In other words, $\rtdir$ behaves as an abstraction function \cite{CousotC77}. 
We define the \emph{tree inference operator} $\fun{\TInfOp{\is}}{\wp(\ProofT{\is})}{\wp(\ProofT{\is})}$ as follows: 
\[ \TInfOp{\is}(X) = \{ \tr \in \ProofT{\is} \mid \dsubtr{\tr} \subseteq X \mbox{ and } \RulePair{\rtdir(\dsubtr{\tr})}{\rt(\tr)} \in \is \}  \] 
It is easy to check that $\TInfOp{\is}$ is monotone. 
This function behaves the same way as $\InfOp{\is}$, but keeps track of the trees used to derive the premises of the rule. 
More formally, $\InfOp{\is}$ and $\TInfOp{\is}$ are related by the following proposition, which is an immediate consequence of their definitions. 

\begin{prop} \label{prop:infop-tinfop}
$\rtdir \circ \TInfOp{\is} = \InfOp{\is} \circ \rtdir$ and 
$\TInfOp{\is} \circ \rtinv \subseteq \rtinv \circ \InfOp{\is}$. 
\end{prop}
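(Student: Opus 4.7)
The plan is to prove both parts by straightforward unfolding of the definitions, handling the first (equality) by double inclusion and the second (inclusion) directly.

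For the equality $\rtdir \circ \TInfOp{\is} = \InfOp{\is} \circ \rtdir$, I fix $X \subseteq \ProofT{\is}$ and show two inclusions. For $\subseteq$, pick $\judg \in \rtdir(\TInfOp{\is}(X))$; then there is $\tr \in \TInfOp{\is}(X)$ with $\rt(\tr) = \judg$, so $\dsubtr{\tr} \subseteq X$ and $\RulePair{\rtdir(\dsubtr{\tr})}{\judg} \in \is$. Setting $\prem = \rtdir(\dsubtr{\tr}) \subseteq \rtdir(X)$ witnesses $\judg \in \InfOp{\is}(\rtdir(X))$. For $\supseteq$, pick $\judg \in \InfOp{\is}(\rtdir(X))$; then there is $\prem \subseteq \rtdir(X)$ with $\RulePair{\prem}{\judg} \in \is$. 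For each $\judg' \in \prem$ choose some $\tr_{\judg'} \in X$ with $\rt(\tr_{\judg'}) = \judg'$ and build the tree $\tr$ with root $\judg$ and direct subtrees $\{\tr_{\judg'} \mid \judg' \in \prem\}$. By construction $\dsubtr{\tr} \subseteq X$ and $\rtdir(\dsubtr{\tr}) = \prem$, so $\tr \in \TInfOp{\is}(X)$ with $\rt(\tr) = \judg$.

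For the inclusion $\TInfOp{\is} \circ \rtinv \subseteq \rtinv \circ \InfOp{\is}$, I fix $Y \subseteq \universe$ and $\tr \in \TInfOp{\is}(\rtinv(Y))$. Then $\dsubtr{\tr} \subseteq \rtinv(Y)$, which means $\rtdir(\dsubtr{\tr}) \subseteq Y$; combined with $\RulePair{\rtdir(\dsubtr{\tr})}{\rt(\tr)} \in \is$, this gives $\rt(\tr) \in \InfOp{\is}(Y)$, i.e.\ $\tr \in \rtinv(\InfOp{\is}(Y))$.

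The only subtlety, and the reason the second statement is only an inclusion rather than an equality, is that a tree $\tr$ with $\rt(\tr) \in \InfOp{\is}(Y)$ may use at its root a different rule than the one witnessing membership of $\rt(\tr)$ in $\InfOp{\is}(Y)$, so its direct subtrees need not lie in $\rtinv(Y)$. The only place where a mild wrinkle could arise is in the $\supseteq$ direction of the first claim, where one must exhibit a well-formed tree: here the ``children injective'' and unordered constraints on trees from the preceding formal setup cause no issue, because the chosen children $\{\tr_{\judg'} \mid \judg' \in \prem\}$ automatically have pairwise distinct roots, namely the elements of $\prem$.
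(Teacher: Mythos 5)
Your proof is correct and matches the paper's intent: the paper states this proposition as an immediate consequence of the definitions and gives no explicit argument, and your write-up is exactly the routine unfolding that claim presupposes, including the one point worth making explicit (that in the $\supseteq$ direction the chosen witnesses have pairwise distinct roots, so the stacking construction of a proof tree applies).
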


\begin{cor} \label{cor:infop-tinfop}
Let $X \subseteq \ProofT{\is}$ and $Y\subseteq \universe$, then 
\begin{itemize}
\item if $X\subseteq \TInfOp{\is}(X)$ then $\rtdir(X) \subseteq \InfOp{\is}(\rtdir(X))$, 
\item if $\TInfOp{\is}(X)\subseteq X$ then $\InfOp{\is}(\rtdir(X))\subseteq \rtdir(X)$, and 
\item if $\InfOp{\is}(Y) \subseteq Y$ then $\TInfOp{\is}(\rtinv(Y)) \subseteq \rtinv(Y)$. 
\end{itemize}
\end{cor}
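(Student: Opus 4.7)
The plan is to derive all three items as routine corollaries of \refToProp{infop-tinfop}, combined with the monotonicity of the direct and inverse image functors and the adjunction $\rtdir \dashv \rtinv$ noted just before the statement. Nothing more delicate than these ingredients should be needed, since \refToProp{infop-tinfop} already packages the essential interaction between $\InfOp{\is}$, $\TInfOp{\is}$ and the root map $\rt$.

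For the first item, starting from $X \subseteq \TInfOp{\is}(X)$, I would apply $\rtdir$ to both sides (which is monotone as a direct image) obtaining $\rtdir(X) \subseteq \rtdir(\TInfOp{\is}(X))$, and then rewrite the right-hand side using the equality $\rtdir \circ \TInfOp{\is} = \InfOp{\is} \circ \rtdir$ from \refToProp{infop-tinfop}. The second item is symmetric: from $\TInfOp{\is}(X) \subseteq X$ apply $\rtdir$ and use the same equality, now read in the opposite direction, to conclude $\InfOp{\is}(\rtdir(X)) = \rtdir(\TInfOp{\is}(X)) \subseteq \rtdir(X)$.

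For the third item, which concerns $\rtinv$ rather than $\rtdir$, I would apply $\rtinv$ (again monotone) to the hypothesis $\InfOp{\is}(Y) \subseteq Y$, yielding $\rtinv(\InfOp{\is}(Y)) \subseteq \rtinv(Y)$, and then use the inclusion $\TInfOp{\is} \circ \rtinv \subseteq \rtinv \circ \InfOp{\is}$ from \refToProp{infop-tinfop} to obtain $\TInfOp{\is}(\rtinv(Y)) \subseteq \rtinv(\InfOp{\is}(Y)) \subseteq \rtinv(Y)$, chaining the two inclusions.

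There is no real obstacle here: the whole proof is just two-line chains of inclusions obtained by applying a monotone functor ($\rtdir$ or $\rtinv$) to the hypothesis and then substituting via \refToProp{infop-tinfop}. The mild asymmetry between the first two items (which use the equality) and the third (which uses only an inclusion) mirrors exactly the asymmetry already present in the proposition, reflecting the fact that $\rtdir$ is the left adjoint and hence behaves more tightly with respect to the inference operators than $\rtinv$.
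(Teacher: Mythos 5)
Your proof is correct and matches the paper's intent exactly: the paper states this corollary without proof as an immediate consequence of \refToProp{infop-tinfop}, and your two-line inclusion chains (monotonicity of $\rtdir$/$\rtinv$ plus the equality, respectively the inclusion, from that proposition) are precisely the intended argument.
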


It can be proved\footnote{We omit the proofs because the focus of the paper is on regular trees.\EZComm{va chiarito; vuoi dire che ci sono da qualche parte, o meglio che prove classiche potrebbero facilmente essere rifrasate cos\`i?}}  that $\lfp\TInfOp{\is}$ is the set of well-founded proof trees, while $\gfp\TInfOp{\is}$ is the set of all proof trees. 
Then, the correspondence between proof-theoretic and model-theoretic approaches in the standard inductive and coinductive cases can be \EZ{succinctly expressed} by the equalities: 
$\Ind{\is} = \rtdir(\lfp\TInfOp{\is}) = \lfp\InfOp{\is}$ and $\CoInd{\is} = \rtdir(\gfp\TInfOp{\is}) = \gfp\InfOp{\is}$.

\subsubsection*{The regular case}
First of all, we state a fundamental property of finitary inference systems:

\begin{prop} \label{prop:finitary-is} 
If $\is$ is finitary, then $\InfOp{\is}$ and $\TInfOp{\is}$ are finitary. 
\end{prop}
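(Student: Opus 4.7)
The plan is to check that both $\InfOp{\is}$ and $\TInfOp{\is}$ preserve directed unions. Since both operators are monotone, for any directed family $\{X_i\}_{i\in I}$ the inclusion $\bigcup_i \InfOp{\is}(X_i) \subseteq \InfOp{\is}(\bigcup_i X_i)$ (and likewise for $\TInfOp{\is}$) is immediate, so all the work lies in the reverse inclusion.

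For $\InfOp{\is}$, take $\judg \in \InfOp{\is}(\bigcup_i X_i)$; then there is a rule $\RulePair{\prem}{\judg} \in \is$ with $\prem \subseteq \bigcup_i X_i$. Since $\is$ is finitary, the set $\prem$ is finite, so by directedness of the family there exists a single index $i_0$ with $\prem \subseteq X_{i_0}$, yielding $\judg \in \InfOp{\is}(X_{i_0})$. This is essentially the standard argument that finitary rules induce a Scott-continuous one-step operator.

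For $\TInfOp{\is}$, take $\tr \in \TInfOp{\is}(\bigcup_i X_i)$, so $\dsubtr{\tr} \subseteq \bigcup_i X_i$ and $\RulePair{\rtdir(\dsubtr{\tr})}{\rt(\tr)} \in \is$. To conclude by the same directedness argument it suffices to show that $\dsubtr{\tr}$ itself is finite. Here the children-injective property of our trees is crucial: distinct elements of $\dsubtr{\tr}$ are subtrees rooted at distinct children of the root of $\tr$, hence have distinct roots, so the restriction of $\rt$ to $\dsubtr{\tr}$ is injective. Since $\is$ is finitary, $\rtdir(\dsubtr{\tr})$ is finite, and injectivity transports finiteness back to $\dsubtr{\tr}$. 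Then by directedness, $\dsubtr{\tr} \subseteq X_{i_0}$ for some $i_0$, so $\tr \in \TInfOp{\is}(X_{i_0})$.

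The only non-routine step is the second one, where one must notice that finitariness of $\is$ bounds only the arity of the rule (that is, $|\rtdir(\dsubtr{\tr})|$), not a priori the cardinality of the multiset of direct subtrees; the children-injective definition of tree adopted earlier in the section is exactly what turns this arity bound into a bound on $|\dsubtr{\tr}|$ itself, which is what the directedness argument needs.
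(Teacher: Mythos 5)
Your proof is correct and follows essentially the same route as the paper: reduce to the non-trivial inclusion, observe that the relevant set ($\prem$, resp.\ $\dsubtr{\tr}$) is finite, and use directedness to land in a single member of the family. The only difference is that you spell out why $\dsubtr{\tr}$ is finite via the children-injectivity of trees, a detail the paper's proof asserts without elaboration (and the paper treats only $\TInfOp{\is}$ explicitly, leaving $\InfOp{\is}$ implicit).
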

\begin{proof}
We do the proof for $\TInfOp{\is}$. 
Let $X\subseteq \wp(\universe)$ be a directed subset. 
Since $\TInfOp{\is}$ is monotone, we have $\bigcup \img{{\TInfOp{\is}}}(X) \subseteq \TInfOp{\is}(\bigcup X)$, hence we have only to check the other inclusion. 
If $\tr \in \TInfOp{\is}(\bigcup X)$,  $\dsubtr{\tr}\subseteq \bigcup X$ and, since $\dsubtr{\tr}$ is finite, as $\tr$ is a proof tree and $\is$ is finitary, there is a finite subset $Y\subseteq X$ such that $\dsubtr{\tr} \subseteq \bigcup Y$. 
Then, since $X$ is directed, there is $A \in X$ such that $\bigcup Y \subseteq A$, thus $\dsubtr{\tr} \subseteq A$, and so $\tr \in \TInfOp{\is}(A) \subseteq \bigcup \img{{\TInfOp{\is}}}(X)$, as needed. 
\end{proof}

Let us denote by $\RegT{\is}$ the set of regular proof trees in $\is$. 
Thanks to \refToProp{finitary-is} and \refToThm{rfp}, we know that $\rfp\InfOp{\is}$ and $\rfp \TInfOp{\is}$ are both well-defined. 
Then, the theorem we have to prove is the following:
 
\begin{thm} \label{thm:rational-sem}
$\Reg{\is} = \rtdir(\rfp\TInfOp{\is}) = \rfp\InfOp{\is}$. 
\end{thm}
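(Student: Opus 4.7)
The plan is to prove the two equalities $\Reg{\is} = \rtdir(\rfp\TInfOp{\is})$ and $\rtdir(\rfp\TInfOp{\is}) = \rfp\InfOp{\is}$ separately, using repeatedly that compact elements in the power-set lattices $\wp(\ProofT{\is})$ and $\wp(\universe)$ are exactly the finite subsets, and leveraging \refToProp{finitary-is} and \refToThm{rfp} to ensure the rational fixed points are well-defined.

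For $\Reg{\is} = \rtdir(\rfp\TInfOp{\is})$, the $\subseteq$ direction is direct: given a regular proof tree $\tr$ for $\judg$, the set $\SubTr{\tr}$ is finite by regularity, and since every $\tr' \in \SubTr{\tr}$ is itself a proof tree whose direct subtrees lie in $\SubTr{\tr}$, we have $\SubTr{\tr} \subseteq \TInfOp{\is}(\SubTr{\tr})$. Hence $\SubTr{\tr}$ is a compact post-fixed point of $\TInfOp{\is}$, so $\SubTr{\tr} \subseteq \rfp\TInfOp{\is}$ by \refToDef{rfp}; since $\tr = \subtr{\tr}{\EList} \in \SubTr{\tr}$, we conclude $\judg \in \rtdir(\rfp\TInfOp{\is})$. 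For the converse, if $\tr \in \rfp\TInfOp{\is}$ with $\rt(\tr) = \judg$, unfolding $\rfp\TInfOp{\is}$ yields a finite $X$ with $X \subseteq \TInfOp{\is}(X)$ and $\tr \in X$. A short induction on $|\alpha|$ shows $\subtr{\tr}{\alpha} \in X$ for all $\alpha \in \TrNodes{\tr}$: the base case is $\tr \in X$, and if $\alpha = \beta x$ with $\subtr{\tr}{\beta} \in X \subseteq \TInfOp{\is}(X)$, then $\dsubtr{\subtr{\tr}{\beta}} \subseteq X$ and $\subtr{\tr}{\alpha}$ is one of those direct subtrees. Hence $\SubTr{\tr} \subseteq X$ is finite, so $\tr$ is regular and $\judg \in \Reg{\is}$.

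For $\rtdir(\rfp\TInfOp{\is}) = \rfp\InfOp{\is}$, the $\subseteq$ direction follows from \refToCor{infop-tinfop}: whenever $X$ is a finite post-fixed point of $\TInfOp{\is}$, the image $\rtdir(X)$ is a finite post-fixed point of $\InfOp{\is}$, hence is contained in $\rfp\InfOp{\is}$; distributing $\rtdir$ over the union defining $\rfp\TInfOp{\is}$ gives the inclusion. The $\supseteq$ direction is the main obstacle, as it requires explicitly lifting a finite post-fixed point of $\InfOp{\is}$ to one of $\TInfOp{\is}$. Given finite $Y$ with $Y \subseteq \InfOp{\is}(Y)$, choose for each $\judg \in Y$ a rule $\RulePair{\prem_\judg}{\judg} \in \is$ with $\prem_\judg \subseteq Y$, and define a tree $\tr_\judg = \Pair{\judg}{L_\judg}$ where
\[
L_\judg \;=\; \{\EList\} \,\cup\, \{\judg_1\judg_2\ldots\judg_k \mid k \geq 1,\ \judg_1 \in \prem_\judg,\ \judg_{i+1} \in \prem_{\judg_i}\ \text{for}\ 1\le i<k\}.
\]
One checks that $L_\judg$ is non-empty and prefix-closed, that every label along a path belongs to $Y$, and that for each $\alpha \in L_\judg$ ending in (or equal to the root) $\judg'$, the children of $\alpha$ are indexed by $\prem_{\judg'}$, so children-injectivity holds automatically because the children are labelled by the distinct elements of the set $\prem_{\judg'}$, and the rule $\RulePair{\prem_{\judg'}}{\judg'} \in \is$ witnesses that $\tr_\judg$ is a proof tree.

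Finally, observing that the direct subtrees of $\tr_\judg$ are exactly $\{\tr_{\judg'} \mid \judg' \in \prem_\judg\}$, the set $X = \{\tr_\judg \mid \judg \in Y\}$ is finite, satisfies $X \subseteq \TInfOp{\is}(X)$, and has $\rtdir(X) = Y$. Therefore $Y \subseteq \rtdir(\rfp\TInfOp{\is})$, and since $\rfp\InfOp{\is}$ is the join of all such compact post-fixed points $Y$, we conclude $\rfp\InfOp{\is} \subseteq \rtdir(\rfp\TInfOp{\is})$, closing the chain of equalities.
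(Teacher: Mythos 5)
Your proof is correct, and it follows the paper's overall decomposition (first equality via finite post-fixed points of $\TInfOp{\is}$, second via \refToCor{infop-tinfop} for one inclusion and an explicit tree construction for the other), but it diverges in how the hard inclusion $\rfp\InfOp{\is} \subseteq \rtdir(\rfp\TInfOp{\is})$ is discharged. The paper routes this step through a separate general result (\refToThm{universal-tree}): given the graph $g(\judg) = \prem_\judg$ on the finite post-fixed point $Y$ and the injective labelling of $Y$ into $\universe$, it obtains a \emph{unique} injective map $p$ into regular trees and then checks that each $p(\judg)$ is a proof tree. You instead inline the construction, defining $L_\judg$ directly as the set of all finite paths through the chosen rules; this is exactly the tree language $L_x$ built inside the paper's proof of \refToThm{universal-tree}, but you bypass the uniqueness and injectivity claims, which are indeed not needed for this theorem (they are stated in that generality for reuse elsewhere). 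A second, minor difference: you conclude by exhibiting $X = \{\tr_\judg \mid \judg \in Y\}$ as a finite post-fixed point of $\TInfOp{\is}$ with $\rtdir(X) = Y$, landing directly in $\rtdir(\rfp\TInfOp{\is})$, whereas the paper verifies that each $p(\judg)$ is a regular proof tree and lands in $\rtdir(\RegT{\is})$; the two endpoints coincide by the first equality, so both are fine. Your route is more self-contained and elementary; the paper's buys a reusable universal property of tree unfoldings at the cost of extra machinery. For the first equality, your inline arguments reproduce the content of \refToLem{tinfop-post-fixed} and \refToLem{regtr-rfp} (the induction on node addresses showing $\SubTr{\tr}$ is contained in any post-fixed point containing $\tr$), so that part is essentially identical.
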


The first step of the proof is to show that the set of regular proof trees coincides with the rational fixed point of $\TInfOp{\is}$. 
To this end, we have the following characterization of post-fixed points of $\TInfOp{\is}$. 

\begin{lem} \label{lem:tinfop-post-fixed}
Let $X \subseteq \ProofT{\is}$ be a set of proof trees, then  $X \subseteq \TInfOp{\is}(X)$ if and only if  
\[ X = \bigcup_{\tr \in X} \SubTr{\tr} \]
\end{lem}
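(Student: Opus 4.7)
The plan is to prove the two inclusions separately, where the key observation is that $\TInfOp{\is}(X)$ is precisely the set of proof trees whose direct subtrees all lie in $X$ (and whose root applies a rule of $\is$). So the condition $X \subseteq \TInfOp{\is}(X)$ essentially says ``$X$ is closed under taking direct subtrees'', which by iteration is the same as being closed under taking arbitrary subtrees --- and that is exactly the right-hand side.

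For the forward direction ($\Rightarrow$), assume $X \subseteq \TInfOp{\is}(X)$. The inclusion $X \subseteq \bigcup_{\tr \in X} \SubTr{\tr}$ is trivial, since $\tr = \subtr{\tr}{\EList} \in \SubTr{\tr}$ for every $\tr \in X$. The nontrivial inclusion is $\bigcup_{\tr \in X} \SubTr{\tr} \subseteq X$: I would prove by induction on the length of $\alpha \in \TrNodes{\tr}$ that $\subtr{\tr}{\alpha} \in X$ whenever $\tr \in X$. The base case $\alpha = \EList$ is immediate. For the inductive step $\alpha = \beta x$, the induction hypothesis gives $\subtr{\tr}{\beta} \in X$, hence $\subtr{\tr}{\beta} \in \TInfOp{\is}(X)$ by hypothesis, so $\dsubtr{\subtr{\tr}{\beta}} = \chltr{\tr}{\beta} \subseteq X$; and since $\subtr{\tr}{\beta x}$ belongs to $\chltr{\tr}{\beta}$, it lies in $X$.

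For the backward direction ($\Leftarrow$), assume $X = \bigcup_{\tr \in X} \SubTr{\tr}$, and pick an arbitrary $\tr \in X$. To show $\tr \in \TInfOp{\is}(X)$ I must verify two conditions. The rule condition $\RulePair{\rtdir(\dsubtr{\tr})}{\rt(\tr)} \in \is$ is automatic from $\tr$ being a proof tree (apply the definition at the node $\alpha = \EList$). For the closure condition $\dsubtr{\tr} \subseteq X$, note that every direct subtree of $\tr$ has the form $\subtr{\tr}{x}$ for some $x \in \universe$ with $x \in \TrNodes{\tr}$, and hence belongs to $\SubTr{\tr} \subseteq X$.

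No real obstacle is expected: the argument is a direct unfolding of the definitions of $\TInfOp{\is}$, $\dsubtr{\cdot}$, $\SubTr{\cdot}$, and $\subtr{\cdot}{\cdot}$, together with a one-step induction on path length. The only care required is to correctly use the identity $\dsubtr{\subtr{\tr}{\alpha}} = \chltr{\tr}{\alpha}$ recorded after the definition of subtrees, in order to chain the ``direct subtree'' relation along arbitrary paths.
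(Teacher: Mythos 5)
Your proof is correct and follows essentially the same route as the paper's: the trivial inclusion via $\tr \in \SubTr{\tr}$, an induction on the node $\alpha$ chaining the identity $\dsubtr{\subtr{\tr}{\beta}} = \chltr{\tr}{\beta}$ for the hard inclusion of the forward direction, and the direct unfolding of the definition of $\TInfOp{\is}$ for the backward direction. No gaps.
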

\begin{proof}
We start from the left-to-right implication. 
The inclusion $\subseteq$ is trivial, since $\tr \in \SubTr{\tr}$ for any tree $\tr$. 
To prove the other inclusion, \EZ{set} $\tr \in X$, we have to show that, for all $\alpha \in \TrNodes{\tr}$, $\subtr{\tr}{\alpha} \in X$. 
The proof is by induction on $\alpha$. 
\begin{description}
\item [Base] If $\alpha$ is empty, then $\subtr{\tr}{\alpha} = \tr \in X$ by hypothesis. 
\item [Induction] If $\alpha = \beta \judg$, then $\subtr{\tr}{\beta\judg} = \subtr{(\subtr{\tr}{\alpha})}{\judg}$ and $\subtr{\tr}{\alpha} \in X$ by induction hypothesis. 
Since $X \subseteq \TInfOp{\is}(X)$, we have  $\subtr{(\subtr{\tr}{\alpha})}{\judg} \in\dsubtr{\subtr{\tr}{\alpha}} \subseteq X$, as needed. 
\end{description}
The other implication is trivial: if $\tr \in X$, $\dsubtr{\tr}\subseteq X$ by hypothesis and $\Rule{\rtdir(\dsubtr{\tr})}{\rt(\tr)} \in \is$, as $\tr$ is a proof tree, hence $\tr \in \TInfOp{\is}(X)$.
\end{proof}

\begin{lem} \label{lem:regtr-rfp}
$\RegT{\is} = \rfp\TInfOp{\is}$.
\end{lem}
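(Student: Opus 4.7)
The plan is to unpack both sides of the equality in terms of the power-set lattice $\wp(\ProofT{\is})$ and then use Lemma~\ref{lem:tinfop-post-fixed} to bridge them. Recall that in a power-set lattice the compact elements are exactly the finite subsets and the join is union, so by \refToDef{rfp} we have
\[
\rfp\TInfOp{\is} \;=\; \bigcup \{ X \subseteq \ProofT{\is} \mid X \text{ finite and } X \subseteq \TInfOp{\is}(X) \}.
\]
By Lemma~\ref{lem:tinfop-post-fixed}, the finite post-fixed points of $\TInfOp{\is}$ are precisely the finite sets of proof trees that are closed under taking subtrees (equivalently, finite $X$ with $X = \bigcup_{\tr\in X}\SubTr{\tr}$). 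So proving the lemma reduces to showing that $\RegT{\is}$ is the union of all such finite subtree-closed sets.

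For the inclusion $\RegT{\is} \subseteq \rfp\TInfOp{\is}$, I would take an arbitrary regular proof tree $\tr$ and consider $X := \SubTr{\tr}$. This set is finite by definition of regularity, and it is trivially closed under taking subtrees, since every subtree of a subtree of $\tr$ is itself a subtree of $\tr$. Hence by Lemma~\ref{lem:tinfop-post-fixed}, $X$ is a compact post-fixed point of $\TInfOp{\is}$, so $\tr \in X \subseteq \rfp\TInfOp{\is}$.

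For the converse inclusion, suppose $\tr \in \rfp\TInfOp{\is}$. Then by the description above there is a finite $X \subseteq \ProofT{\is}$ with $\tr \in X$ and $X \subseteq \TInfOp{\is}(X)$. By Lemma~\ref{lem:tinfop-post-fixed}, $X = \bigcup_{\tr'\in X}\SubTr{\tr'}$, and in particular $\SubTr{\tr} \subseteq X$. Since $X$ is finite, so is $\SubTr{\tr}$, which means $\tr$ is regular.

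I do not expect any real obstacle: the two nontrivial facts, namely the characterization of post-fixed points of $\TInfOp{\is}$ and the existence of the rational fixed point for a finitary operator (\refToProp{finitary-is} plus \refToThm{rfp}), have already been proved. The only point worth stating explicitly is the reminder that compactness in $\wp(\ProofT{\is})$ coincides with finiteness, so that the definition of $\rfp\TInfOp{\is}$ specializes to the union of finite post-fixed points.
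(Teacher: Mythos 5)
Your proposal is correct and follows essentially the same route as the paper: both directions rest on Lemma~\ref{lem:tinfop-post-fixed}, with $\SubTr{\tr}$ serving as the finite post-fixed point witnessing $\tr \in \rfp\TInfOp{\is}$ in one direction, and subtree-closedness of a finite post-fixed point yielding regularity in the other. The only cosmetic difference is that you unfold $\rfp\TInfOp{\is}$ as the union of finite post-fixed points where the paper invokes \refToProp{below-rfp}, which is the same fact.
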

\begin{proof}
To prove $\RegT{\is} \subseteq \rfp\TInfOp{\is}$, let $\tr$ be a regular tree, then $\SubTr{\tr}$ is finite and, by \refToLem{tinfop-post-fixed}, it is a post-fixed point.
Hence, $\tr \in \SubTr{\tr} \subseteq \rfp\TInfOp{\is}$, by \refToProp{below-rfp}, as needed. 

To prove $\rfp\TInfOp{\is} \subseteq \RegT{\is}$, let $X \subseteq \ProofT{\is}$ be a finite post-fixed point of $\TInfOp{\is}$, then we have to show that $X\subseteq \RegT{\is}$. 
Let $\tr \in X$, then, by \refToLem{tinfop-post-fixed}, we have $\SubTr{\tr} \subseteq X$, hence $\SubTr{\tr}$ is finite, that is, $\tr$ is regular, as needed. 
\end{proof}
\noindent Thanks to \refToLem{regtr-rfp} and \refToDef{rational-is}, we trivially get the first equality: $\Reg{\is} = \rtdir(\rfp\TInfOp{\is})$. 

To prove the second equality of \refToThm{rational-sem}, we need a general property of regular trees, which is a stronger version of a result proved in \cite{Dagnino19}. 
To this end, assume a set $A$ and denote by $\Trees{A}$ and $\RegTr{A}$ the sets of all trees and regular trees on $A$, respectively. 
Note that functions $\fun{\dsubtrn}{\Trees{A}}{\wp(\Trees{A})}$ and $\fun{\rt}{\Trees{A}}{A}$, mapping a tree to its direct subtrees and to its root, respectively, are well-defined and restrict to $\RegTr{A}$, because subtrees of a regular tree are regular as well. 
Basically, we show that, starting from a graph structure on a subset of $A$, for each node of the graph there is a unique way to construct a tree coherent with the graph structure, and, moreover, if this subset is finite, all the constructed trees are regular. 
In this context a graph is a function $\fun{g}{X}{\wp(X)}$, modelling the adjacency function, that is, $X$ is the set of nodes and, for all $x \in X$, $g(x)$ is the set of adjacents of $x$. 

\begin{thm} \label{thm:universal-tree}
Let $\fun{g}{X}{\wp(X)}$ be a function and $\fun{v}{X}{A}$ be an injective function. 
Then, there exists a unique function $\fun{p}{X}{\Trees{A}}$ such that the following diagram commutes: 
\[\xymatrix{
X \ar[r]^{p} \ar[d]_{\Pair{g}{v}} & \Trees{A} \ar[d]^{\Pair{\dsubtrn}{\rt}} \\
\wp(X) \times A \ar[r]^{\img{p} \times \id{A}} & \wp(\Trees{A}) \times A
}\]
furthermore, if $X$ is finite, then $p$ corestricts to $\RegTr{A}$, that is, the following diagram commutes: 
\[\xymatrix{
X \ar[r]^{p} \ar[d]_{\Pair{g}{v}} & \RegTr{A} \ar[d]^{\Pair{\dsubtrn}{\rt}} \\
\wp(X) \times A \ar[r]^{\img{p} \times \id{A}} & \wp(\RegTr{A}) \times A
}\]
Finally, $p$ is injective. 
\end{thm}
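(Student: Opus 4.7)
The natural approach is to construct $p$ explicitly from paths in the graph encoded by $g$, then verify each property in turn.

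For $x \in X$ I would define $p(x) = \Pair{v(x)}{L_x}$, where $L_x$ consists of all sequences $v(x_1)\ldots v(x_n) \in \List{A}$ (with $n \geq 0$, including the empty sequence) such that $x_0 = x, x_1, \ldots, x_n$ is a path in the graph, i.e., $x_{i+1} \in g(x_i)$ for each $i < n$. Non-emptiness and prefix-closure of $L_x$ are immediate, so $p(x)$ is a tree on $A$. Moreover, the children-injective condition is preserved, since the labels $\{v(y) \mid y \in g(x)\}$ of the direct children of the root are pairwise distinct by injectivity of $v$, and the same argument applies at every subsequent node, where the subtree rooted at $v(x_1)\ldots v(x_n)$ is structurally a copy of $p(x_n)$.

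The first diagram then commutes almost by definition: $\rt(p(x)) = v(x)$, and each direct subtree of $p(x)$ is rooted at some $v(y)$ with $y \in g(x)$ and equals $p(y)$, giving $\dsubtrn(p(x)) = \img{p}(g(x))$. For the regularity claim, an easy induction shows that every element of $\SubTr{p(x)}$ has the form $p(y)$ for some $y$ reachable from $x$ in the graph, so when $X$ is finite $\SubTr{p(x)} \subseteq \img{p}(X)$ is finite, hence $p(x) \in \RegTr{A}$. Injectivity of $p$ follows at once: $p(x) = p(x')$ forces $v(x) = v(x')$ by comparing roots, whence $x = x'$ since $v$ is injective.

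The delicate point is uniqueness. Given another $p'$ satisfying the same commuting diagram, I would prove by induction on $n$ that, for every $x \in X$ and every $\alpha$ of length at most $n$, one has $\alpha \in \TrNodes{p(x)}$ iff $\alpha \in \TrNodes{p'(x)}$, with the same label assigned in either case. The base $n = 0$ is immediate from $\rt \circ p = v = \rt \circ p'$. For the inductive step, from the commutation condition $\dsubtrn \circ p' = \img{p'} \circ g$ together with the children-injective property of trees, each label occurring at depth one in $p'(x)$ must be of the form $v(y)$ for a unique $y \in g(x)$, forcing $\subtr{p'(x)}{v(y)} = p'(y)$, and analogously for $p$; the induction hypothesis applied at $y$ then lifts to longer sequences. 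I expect this extraction step to be the main technical subtlety: without the unordered, children-injective character of our trees, combined with the injectivity of $v$, distinct elements of $g(x)$ could produce the same labelled direct child and the identification $\subtr{p'(x)}{v(y)} = p'(y)$ would not be forced, breaking the induction.
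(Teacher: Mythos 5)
Your proposal is correct and follows essentially the same route as the paper: both define $p(x)$ as the tree of label-sequences of paths out of $x$ in the graph $g$, verify commutation by identifying $\subtr{p(x)}{v(y)}$ with $p(y)$ (which, as you rightly stress, hinges on injectivity of $v$ together with the children-injective tree representation), prove uniqueness by induction on the length of nodes, and obtain regularity from the fact that every subtree of $p(x)$ is $p(y)$ for some $y \in X$. The only cosmetic difference is that the paper builds $L_x$ as a stratified union $\bigcup_n L_{x,n}$ and proves prefix-closure by induction on $n$, whereas you describe the same set directly via paths.
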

\begin{proof}
For all $x\in X$, we define the set $L_{x,n}$ of paths of length $n$ starting from $x$ and the set $L_x$ of all paths starting from $x$ as follows: 
\begin{align*}
L_x = \bigcup_{n\in \N} L_{x,n}   &&
\begin{array}{rl} 
L_{x,0}   =& \{\EList\} \\
L_{x,n+1} =& \displaystyle\bigcup_{y \in g(x)} \{ v(y) \alpha \mid \alpha \in L_{y,n}\} 
\end{array} 
\end{align*} 
Trivially we have, for all $x \in X$, $L_x \subseteq \List{A}$. 
We show, by induction on $n$, that for all $n\in \N$, $x\in X$, $\alpha\in\List{A}$ and $a\in A$, if $\alpha a \in L_{x,n+1}$ then $\alpha \in L_{x,n}$. 
\begin{description} 
\item [Base] Since $\alpha a \in L_{x,1}$, we have $\alpha = \EList \in L_{x,0}$, as needed. 
\item [Induction] We prove the thesis for $n+1$. 
Since $\alpha a \in L_{x,n+2}$, by definition of $L_{x,n+2}$, we have $\alpha = v(y)\beta$, for some $y\in g(x)$, and $\beta a \in L_{y,n+1}$. 
By induction hypothesis, we get $\beta\in L_{y,n}$, then, by definition of $L_{x,n+1}$, we get $\alpha = v(y)\beta \in L_{x,n+1}$, as needed. 
\end{description} 
This implies that $L_x$ is prefix-closed, thus a tree language, and so 
$\Pair{x}{L_x}$ is a tree on $A$. 
We define $p(x) = \Pair{x}{L_x}$. 

To prove that the diagram commutes, we have to show that, for all $x \in X$ and $\tr \in \Trees{A}$, $\rt(p(x)) = v(x)$, which is true by construction of $p$,  and  $\tr \in \dsubtr{p(x)}$ iff $\tr = p(y)$ for some $y \in g(x)$. 
First of all, note that, for all $y\in g(x)$ and $\alpha\in \List{A}$, we have $v(y)\alpha\in L_x$ iff $\alpha\in L_y$: 
if $\alpha\in L_y$ then $\alpha\in L_{y,n}$, for some $n\in\N$, thus $v(y)\alpha\in L_{x,n+1}\subseteq L_x$, and, 
if $v(y)\alpha\in L_x$ then \mbox{$v(y)\alpha\in L_{x,n+1}$}, for some $n\in \N$, thus there is $z\in g(x)$ such that $v(z)=v(y)$ and $\alpha\in L_{z,n}\subseteq L_z$, but, since $v$ is  injective, we get $z = y$ and so $\alpha\in L_y$. 
From this fact we immediately get that $p(y)\in \dsubtr{p(x)}$, for all $y\in g(x)$. 
On the other hand, if $\tr\in\dsubtr{p(x)}$, then $\tr = \subtr{p(x)}{a}$, for some $a\in A$, that is,  
$\tr = \Pair{a}{\{ \alpha \in \List{A} \mid a\alpha \in L_x \}}$. 
In particular, we have $a\in L_{x,1}\subseteq L_x$, hence $a = v(y)$, for some $y\in g(x)$. 
Therefore, again thanks to the fact above, we get $\tr = \Pair{v(y)}{L_y} = p(y)$, as needed. 

To prove uniqueness, consider a function $\fun{q}{X}{\Trees{A}}$ making the diagram commute. 
Then, $\rt(q(x)) = v(x) = \rt(p(x))$, hence we have only to show that $\TrNodes{q(x)} = L_x$. 
Therefore, we prove by induction on $\alpha \in \List{A}$ that, for all $x \in X$,  $\alpha \in \TrNodes{q(x)}$ iff $\alpha \in L_x$. 
\begin{description} 
\item [Base] The thesis is trivial. 
\item [Induction] We prove the thesis for $a\alpha$. 
We have $a\alpha \in \TrNodes{q(x)}$ iff $\alpha\in \TrNodes{\subtr{q(x)}{a}}$ and, since the diagram commutes, hence $\subtr{q(x)}{a} = q(y)$, for some $y\in g(x)$, 
this is equivalent to $a = \rt(\subtr{q(x)}{a}) = \rt(q(y)) = v(y)$ and $\alpha\in \TrNodes{q(y)}$, for some $y\in g(x)$. 
By induction hypothesis, this is equivalent to $a = v(y)$ and $\alpha \in L_y$, which is equivalent to $a\alpha\in L_x$. 
\end{description} 

To prove that $p$ corestricts to $\RegTr{A}$, we just have to show that, if $X$ is finite, then $p(x)$ is regular, for all $x \in X$. 
We prove, by induction on $\alpha$,  that, for all $\alpha \in \TrNodes{p(x)}$, there exists $y \in X$ such that $\subtr{p(x)}{\alpha} = p(y)$. 
\begin{description}
\item [Base] We have $\subtr{p(x)}{\EList} = p(x)$, as needed. 
\item [Induction] We prove the thesis for $\beta a$. 
We have $\subtr{p(x)}{\beta a} = \subtr{(\subtr{p(x)}{\beta})}{a}$ and by induction hypothesis, there is $z \in X$ such that $\subtr{p(x)}{\beta} = p(z)$. 
Therefore, we have $\subtr{p(x)}{\alpha} = \subtr{p(z)}{a} \in \dsubtr{p(z)}$, hence, since the diagram commutes, there is $y \in g(z) \subseteq X$ such that $\subtr{p(z)}{a} = p(y)$, as needed. 
\end{description}

Finally, we note that $p$ is injective: if $p(x) = p(y)$ then $v(x) = \rt(p(x)) = \rt(p(y)) = v(y)$, hence $ x = y$ because $v$ is injective. 
\end{proof}
This result looks like a final coalgebra theorem for the functor mapping a set $X$ to $\wp(X)\times A$, but it is not. 
Indeed, such a functor cannot have a final coalgebra, because this would imply the existence of a bijection between a set $Z$, the carrier of such a final coalgebra, and $\wp(Z)\times A$, which is not possible for cardinality reasons. 
Here we manage to have a unique function making the diagram commute, 
because we additionally require the function $\fun{v}{X}{A}$ to be injective. 

We can now prove \refToThm{rational-sem}.

\begin{proofOf}{\refToThm{rational-sem}} 
By \refToLem{regtr-rfp} we get $\rtdir(\RegT{\is}) = \rtdir(\rfp\TInfOp{\is})$. 
Recall that in $\wp(\ProofT{\is})$, compact elements are finite subsets, hence the set of all compact elements is $\finwp(\ProofT{\is})$. 
Then, by definition of the rational fixed point and since $\rtdir$ preserves arbitrary unions (it is a left adjoint), we get 
$\rtdir(\rfp \TInfOp{\is}) = \bigcup \{ \rtdir(X) \mid X  \in \finwp(\ProofT{\is}) \mbox{ and }  X \subseteq \TInfOp{\is}(X)\}$. 
Hence, if $X\in \finwp(\ProofT{\is})$ and $X \subseteq \TInfOp{\is}(X)$,  $\rtdir(X)$ is obviously finite and, by \refToCor{infop-tinfop}, it is also post-fixed. 
Therefore, by definition of the rational fixed point, we get $\rtdir(X) \subseteq \rfp \InfOp{\is}$, and this proves $\rtdir(\rfp\TInfOp{\is}) \subseteq \rfp \InfOp{\is}$. 

To conclude the proof, we show that $\rfp\InfOp{\is} \subseteq \rtdir(\RegT{\is})$. 
To this end, we just have to  prove that, given a finite set $X \in \finwp(\universe)$ such that $X \subseteq \InfOp{\is}(X)$, each judgement $\judg \in X$ has a regular proof tree. 
Since $X \subseteq \InfOp{\is}(X)$, for each $\judg \in X$, there is $\prem_\judg \subseteq X$ such that $\RulePair{\prem_\judg}{\judg} \in \is$. 
Hence, applying \refToThm{universal-tree}, where $g$ maps $\judg$ to $\prem_\judg$ and $v$ is the restriction of the identity on $\universe$ to $X$, we get an injective function $\fun{p}{X}{\RegTr{\universe}}$. 
We have still to prove that $p(\judg)$ is a proof tree. 
It is easy to prove by induction on $\alpha$ that, for all $\alpha \in \TrNodes{p(\judg)}$, there is $\judg' \in X$ such that $\subtr{p(\judg)}{\alpha} = p(\judg')$. 
Therefore, 
$\RulePair{\rtdir(\dsubtr{\subtr{p(\judg)}{\alpha}})}{\rt(\subtr{p(\judg)}{\alpha})} = 
 \RulePair{\rtdir(\dsubtr{p(\judg')})}{\rt(p(\judg'))} = 
 \RulePair{\prem_{\judg'}}{\judg'}$, because $\dsubtrn\circ p = \img{p} \circ g$ and $\rt \circ p = v$, by \refToThm{universal-tree}. 
This proves that $p(\judg)$ is a proof tree in $\is$, as $\RulePair{\prem_{\judg'}}{\judg'} \in \is$ by hypothesis. 
\end{proofOf}

\section{An inductive characterization} \label{sect:cycle} 

Although the regular interpretation is essentially coinductive, as it allows non-well-founded derivations, 
it has an intrinsic finite nature, because it requires proof trees to be regular, that is, to have only finitely many subtrees. 
Given this finiteness, a natural question is the following: 
is it possible to find a finite presentation of derivations for judgements belonging to the regular interpretation? 
f a

In this section we show this is the case, by providing an inductive characterization of the regular interpretation. 

The idea behind such an inductive characterisation is simple. 
Regular trees are basically cyclic structures. 
Usually, to deal with cyclic structures inductively, we need to use auxiliary structures to detect cycles, to ensure termination. 
For instance, in order to perform a visit of a graph, we detect cycles by marking already encountered nodes. 
The inductive characterization \EZ{described below models} such cycle detection mechanism in an abstract and canonical way, in the general setting of inference systems.
The idea is the following: 
during the proof, we keep track of already encountered judgements and, if we find again the same judgement, we can use it as an axiom. 

This approach is intuitively correct, since in  a regular proof tree there are only finitely many subtrees, hence infinite paths \emph{must} contain repeated judgements, and this mechanism is designed precisely to detect such repetitions. 
\EZComm{non lo direi qui: This approach is an abstract version of operational models of programming languages supporting regular corecusion \cite{SimonMBG06,AnconaZ12,AnconaD15}, and can serve as a general framework to check correctness of such models against a more abstract semantics.}

We now formally define the construction and prove its correctness. 
Let $\is$ be a finitary inference system on the universe $\universe$. 
We consider judgements of shape $\LoopJ{\LoopHp}{\judg}$ where $\LoopHp \subseteq \universe$ is a finite set of judgements, called \emph{circular hypotheses}, and $\judg \in \universe$ is a judgement. 
Then, we have the following definition.

\begin{defi} \label{def:cycle-is}
The inference system $\Loopis{\is}$ consists of the following rules: 
\[
\MetaRule{hp}{}{ \LoopJ{\LoopHp}{\judg}}{\judg\in\LoopHp}\BigSpace 
\MetaRule{unfold}{ 
  \LoopJ{\LoopHp\cup\{\judg\}}{\judg_1} 
  \Space\ldots\Space
  \LoopJ{\LoopHp\cup\{\judg\}}{\judg_n}
}{ \LoopJ{\LoopHp}{\judg} }
{\RulePair{\{\judg_1,\ldots,\judg_n\}}{\judg}\in\is}
\]
\end{defi}
Therefore, in the system $\Loopis{\is}$, we have the same rules as in $\is$, that, however, extend the set of circular hypotheses by adding the conclusion of the rule as an hypothesis in the premises. 
Furthermore, $\Loopis{\is}$ has also an additional axiom that allows to apply circular hypotheses. 

\EZ{The correctness of the construction in \refToDef{cycle-is} is expressed by the fact that,} a judgement $\judg$ has a regular proof tree in $\is$ if and only if it has a finite derivation in $\Loopis{\is}$ without circular \EZ{hypotheses}, as formally stated by the next theorem. 

\begin{thm} \label{thm:cycle-is}
$\validInd{\Loopis{\is}}{\LoopJ{\emptyset}{\judg}}$ iff $\validReg{\is}{\judg}$.
\end{thm}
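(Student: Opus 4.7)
The plan is to prove each direction separately, moving from regular derivations in $\is$ to finite ones in $\Loopis{\is}$ and back. For the forward implication, let $\tr$ be a regular proof tree for $\judg$ in $\is$ and let $R = \rtdir(\SubTr{\tr})$, which is finite by regularity of $\tr$. I plan to establish by induction on $|R \setminus \LoopHp|$ the strengthened claim that for every $\tr' \in \SubTr{\tr}$ and every $\LoopHp \subseteq R$, the judgment $\LoopJ{\LoopHp}{\rt(\tr')}$ is derivable in $\Loopis{\is}$. If $\rt(\tr') \in \LoopHp$, apply rule (hp); otherwise apply (unfold) with the root rule $\RulePair{\rtdir(\dsubtr{\tr'})}{\rt(\tr')} \in \is$ of $\tr'$, invoking the induction hypothesis on each direct subtree with circular hypotheses $\LoopHp \cup \{\rt(\tr')\}$, whose complement in $R$ is strictly smaller because $\rt(\tr') \in R \setminus \LoopHp$. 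Instantiating $\LoopHp = \emptyset$ and $\tr' = \tr$ yields the goal.

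For the converse, suppose $\tau$ is a finite derivation of $\LoopJ{\emptyset}{\judg}$ in $\Loopis{\is}$. I will extract from $\tau$ a finite $X \subseteq \universe$ with $\judg \in X$ together with a function $\fun{g}{X}{\wp(X)}$ satisfying $\RulePair{g(\judg')}{\judg'} \in \is$ for every $\judg' \in X$, and then invoke \refToThm{universal-tree} with the (trivially injective) inclusion $\fun{v}{X}{\universe}$ to obtain an injective $\fun{p}{X}{\RegTr{\universe}}$. The extraction traverses $\tau$ top-down: whenever an (unfold) sub-derivation of $\LoopJ{\LoopHp'}{\judg'}$ with $\judg' \notin X$ is encountered, add $\judg'$ to $X$ and set $g(\judg')$ to the premises of the rule of $\is$ used at that step. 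The root of $\tau$ must be an (unfold) step since (hp) is inapplicable when the hypotheses are empty, so $\judg$ is added first, and finiteness of $X$ follows from finiteness of $\tau$.

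The key check is that $g(\judg') \subseteq X$: when processing a premise $\LoopJ{\LoopHp' \cup \{\judg'\}}{\judg_i}$ of the chosen rule, either it is derived by (hp), so $\judg_i \in \LoopHp' \cup \{\judg'\}$, and every member of $\LoopHp'$ is the conclusion of an (unfold) ancestor already visited in the traversal (hence in $X$), while $\judg'$ is in $X$ by construction; or it is derived by (unfold), and $\judg_i$ is added to $X$ during the recursive call. The closing step mirrors the last part of the proof of \refToThm{rational-sem}: an induction on $\alpha \in \TrNodes{p(\judg)}$ using commutativity of the diagram of \refToThm{universal-tree} gives $\subtr{p(\judg)}{\alpha} = p(\judg'')$ for some $\judg'' \in X$, whence the rule at each node is $\RulePair{g(\judg'')}{\judg''} \in \is$; thus $p(\judg)$ is a proof tree in $\is$, and it is regular by the corestriction clause of \refToThm{universal-tree} since $X$ is finite. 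The main difficulty is precisely the invariant underpinning the check $g(\judg') \subseteq X$, namely that at every point of the top-down traversal of $\tau$ the accumulated circular hypotheses coincide with the conclusions of (unfold) ancestors, all of which have already been placed in $X$ by the time they are needed as premises.
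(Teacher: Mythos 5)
Your proof is correct, but it takes a somewhat different route from the paper's. The paper factors the theorem through two lemmas parameterized by an arbitrary finite set of circular hypotheses (\refToLem{cycle-sound} and \refToLem{cycle-complete}), both stated for the extended system $\Extendis{\is}{\LoopHp}$ and both leaning on the characterization $\Reg{\is}=\rfp\InfOp{\is}$ from \refToThm{rational-sem}: soundness of $\Loopis{\is}$ is a rule induction that combines finite post-fixed points of the inference operator (no trees are built), and completeness extracts a finite post-fixed point $X$ from the rational fixed point and runs the auxiliary function $\toTreen{\LoopHp}$, whose termination measure is the cardinality of $X\setminus(\LoopHp\cup S)$ --- exactly your $R\setminus\LoopHp$ with $R=\rtdir(\SubTr{\tr})$, so your first direction is essentially \refToProp{toTree-total} applied directly to the subtree roots of the given regular proof tree instead of to a post-fixed point. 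The genuinely different part is your converse: in place of the paper's compositional rule induction on $\Loopis{\is}$, you perform a one-shot extraction of $(X,g)$ from the whole finite derivation, justified by the invariant that the circular hypotheses at any node are exactly the conclusions of its \rn{unfold} ancestors (which is the right invariant, and your check that $g(\judg')\subseteq X$ goes through), and you then rebuild the regular proof tree via \refToThm{universal-tree}, effectively inlining the last part of the proof of \refToThm{rational-sem}; you could stop earlier, since your $X$ satisfies $X\subseteq\InfOp{\is}(X)$ and \refToProp{below-rfp} already gives $\judg\in\Reg{\is}$. The paper's route buys a stronger, reusable statement for arbitrary $\LoopHp$ (exploited again in \refToSect{corules}) and lets \refToThm{rational-sem} be used as a black box; yours buys a self-contained, purely proof-theoretic argument that never mentions $\Extendis{\is}{\LoopHp}$, at the cost of carrying the ancestor invariant explicitly through the traversal.
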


We prove a more general version of the theorem. 
First of all, if $X \subseteq \universe$, we denote by $\Extendis{\is}{X}$ the system obtained from $\is$ by adding an axiom for each element of $X$, hence we have
$\InfOp{\Extendis{\is}{X}}(Y) = \InfOp{\is}(Y) \cup X$, for all $Y\subseteq \universe$. 
Then, the left-to-right implication of \refToThm{cycle-is} is an immediate consequence of the following lemma: 
\begin{lem} \label{lem:cycle-sound}
If $\validInd{\Loopis{\is}}{\LoopJ{\LoopHp}{\judg}}$ then $\validReg{\Extendis{\is}{\LoopHp}}{\judg}$. 
\end{lem}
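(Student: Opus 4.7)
The approach is induction on the height of the well-founded derivation of $\LoopJ{\LoopHp}{\judg}$ in $\Loopis{\is}$. In the base case the rule \textsc{hp} was applied, so $\judg \in \LoopHp$; then $\judg$ is an axiom of $\Extendis{\is}{\LoopHp}$ and the one-node tree labelled $\judg$ is a (trivially) regular proof tree. In the inductive step the last rule used is \textsc{unfold} from some $\RulePair{\{\judg_1,\ldots,\judg_n\}}{\judg}\in\is$, with sub-derivations of $\LoopJ{\LoopHp\cup\{\judg\}}{\judg_i}$ for each $i$. The induction hypothesis supplies, for each $i$, a regular proof tree $t_i$ of $\judg_i$ in $\Extendis{\is}{\LoopHp\cup\{\judg\}}$.

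Rather than stitching the $t_i$'s together directly as tree objects, the plan is to pass through the model-theoretic characterisation and invoke \refToThm{rational-sem}. I would set
\[ Y \;=\; \{\judg\} \cup \bigcup_{i=1}^{n} \rtdir(\SubTr{t_i}), \]
which is a finite subset of $\universe$ because each $t_i$ is regular, and verify that $Y$ is a post-fixed point of $\InfOp{\Extendis{\is}{\LoopHp}}$. For the element $\judg$, the rule $\RulePair{\{\judg_1,\ldots,\judg_n\}}{\judg}$ of $\is$ works, since $\judg_i = \rt(t_i)\in Y$. For an arbitrary $\judg'\in \rtdir(\SubTr{t_i})$, \refToLem{tinfop-post-fixed} gives that $\SubTr{t_i}$ is post-fixed under $\TInfOp{\Extendis{\is}{\LoopHp\cup\{\judg\}}}$, hence by \refToCor{infop-tinfop} the set $\rtdir(\SubTr{t_i})$ is post-fixed under $\InfOp{\Extendis{\is}{\LoopHp\cup\{\judg\}}}$. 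Thus either $\judg'$ is derivable by a rule of $\is$ from premises in $\rtdir(\SubTr{t_i})\subseteq Y$, or $\judg'$ is one of the added axioms: if $\judg'\in\LoopHp$ it is still an axiom of $\Extendis{\is}{\LoopHp}$, while if $\judg'=\judg$ it is handled by the same rule of $\is$ used above, whose premises all lie in $Y$.

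Once $Y$ is established as a finite post-fixed point, \refToThm{rational-sem} yields $Y\subseteq \rfp\InfOp{\Extendis{\is}{\LoopHp}} = \Reg{\Extendis{\is}{\LoopHp}}$, and in particular $\validReg{\Extendis{\is}{\LoopHp}}{\judg}$, as required.

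The step I expect to be the main obstacle is precisely the discharge of the added axiom $\judg$: subtrees of some $t_i$ rooted at $\judg$ may have been justified in $\Extendis{\is}{\LoopHp\cup\{\judg\}}$ by the new axiom, and these must be re-justified inside $\Extendis{\is}{\LoopHp}$. The key observation that makes this work is that the rule $\RulePair{\{\judg_1,\ldots,\judg_n\}}{\judg}$ already lies in $\is$ and its premises are in $Y$, so the axiom use is replaced by a genuine rule application. This is exactly the ``tying the knot'' that turns the well-founded derivation with circular hypotheses into a cyclic, regular proof tree; phrasing the argument via a finite post-fixed point sidesteps the need to manipulate cyclic trees by hand.
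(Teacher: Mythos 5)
Your proof is correct and follows essentially the same route as the paper: rule induction on $\Loopis{\is}$, and in the \rn{unfold} case assembling a finite post-fixed point of $\InfOp{\Extendis{\is}{\LoopHp}}$ from finite post-fixed points witnessing the induction hypotheses together with $\judg$, discharging the extra axiom for $\judg$ via the rule $\RulePair{\{\judg_1,\ldots,\judg_n\}}{\judg}\in\is$ whose premises lie in the assembled set. The only cosmetic difference is that you extract the witnesses explicitly as $\rtdir(\SubTr{t_i})$ via \refToLem{tinfop-post-fixed} and \refToCor{infop-tinfop}, where the paper simply takes arbitrary finite post-fixed points $X_i$ containing $\judg_i$ supplied by \refToThm{rational-sem}.
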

\begin{proof}
The proof is by induction on rules in $\Loopis{\is}$. 
There are two types of rules, hence we distinguish two cases:
\begin{description}
\item [\rn{hp}] we have $\judg \in \LoopHp$, hence $\RulePair{\emptyset}{\judg} \in \Extendis{\is}{\LoopHp}$, thus $\validReg{\Extendis{\is}{\LoopHp}}{\judg}$. 
\item [\rn{unfold}] we have a rule $\RulePair{\{\judg_1,\ldots,\judg_n\}}{\judg} \in \is$ and, by induction hypothesis, we get $\validReg{\Extendis{\is}{\LoopHp\cup\{\judg\}}}{\judg_i}$, for all $i \in 1..n$. 
Since $\Reg{\Extendis{\is}{\LoopHp\cup\{\judg\}}}$ is a rational fixed point (\refToThm{rational-sem}), for all $i \in 1..n$, there is a finite set $X_i$ such that 
$\judg_i \in X_i \subseteq \InfOp{\Extendis{\is}{\LoopHp\cup\{\judg\}}}(X_i) = \InfOp{\Extendis{\is}{\LoopHp}}(X_i) \cup \{\judg\}$. 
Set $X = \bigcup_{i=1}^n X_i$, then $X$ is finite, $X \subseteq \InfOp{\Extendis{\is}{\LoopHp}}(X) \cup \{\judg\}$, as $\InfOp{\Extendis{\is}{\LoopHp}}$ is monotone,   and $\judg \in \InfOp{\is}(X)$, because, by construction, $\{\judg_1,\ldots,\judg_n\}\subseteq X$.
Thus, we get  $X \cup \{\judg\} \subseteq \InfOp{\Extendis{\is}{\LoopHp}}(X) \cup \{\judg\} = \InfOp{\Extendis{\is}{\LoopHp}}(X)$, because $\judg \in \InfOp{\is}(X) \subseteq \InfOp{\Extendis{\is}{\LoopHp}}(X)$, hence $X \cup \{\judg\}$ is a post-fixed point of $\InfOp{\Extendis{\is}{\LoopHp}}$, since it is monotone. 
Therefore, since  $X\cup\{\judg\}$ is a finite post-fixed point, 
by \refToProp{below-rfp} and \refToThm{rational-sem}, we get $\validReg{\Extendis{\is}{\LoopHp}}{\judg}$. 
\qedhere
\end{description}
\end{proof}

The proof of the other implication relies on an auxiliary  family of functions indexed over 
finite subsets of $\universe$ and finite graphs $\fun{g}{X}{\wp(X)}$, with $X\in\finwp(\universe)$, mapping 
a judgement $\judg \in  X$ and a subset $S \subseteq X$ to a tree whose nodes are judgements of shape $\LoopJ{\LoopHp'}{\judg'}$. 
This function is recursively defined as follows: 
\[
\toTree{\LoopHp}{g}{\judg}{S} = \begin{cases}
\Rule{}{\LoopJ{\LoopHp\cup S}{\judg}} & \judg \in \LoopHp \cup S \\[3ex]
\Rule{
  \toTree{\LoopHp}{g}{\judg_1}{S\cup\{\judg\}} \Space\ldots\Space \toTree{\LoopHp}{g}{\judg_n}{S\cup\{\judg\}} 
}{ \LoopJ{\LoopHp\cup S}{\judg} }  &  
  \begin{array}{l} 
    \judg \notin \LoopHp\cup S \\
    g(\judg) = \{\judg_1,\ldots,\judg_n\}  
  \end{array} 
\end{cases} 
\]
The function $\toTreen{\LoopHp}$ enjoys the following properties: 
\begin{prop} \label{prop:toTree-total}
For all $\LoopHp\in\finwp(\universe)$, $\fun{g}{X}{\wp(X)}$ with $X \in \finwp(\universe)$, $\judg \in X$ and $S\subseteq X$, 
$\toTree{\LoopHp}{g}{\judg}{S}$ is defined. 
\end{prop}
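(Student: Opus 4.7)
The plan is to prove totality of $\toTreen{\LoopHp}$ by well-founded induction on the natural number $n = |X \setminus S|$, with $\LoopHp$, $g$, and $X$ fixed (so the induction is uniform in $\judg$ and varies only $S$). The intuition is that each recursive call enlarges $S$ by the current judgement $\judg$, and in the recursive case $\judg$ is genuinely new with respect to $S$, so the measure $|X \setminus S|$ strictly decreases.

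In the base case, $n = 0$, so $X \subseteq S$. Since $\judg \in X$, we have $\judg \in S \subseteq \LoopHp \cup S$, falling into the first clause of the definition, which produces the leaf $\Rule{}{\LoopJ{\LoopHp\cup S}{\judg}}$ and is thus defined without any recursive call.

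For the inductive step, assume the statement for all $S'$ with $|X \setminus S'| < n$ (and all $\judg \in X$). If $\judg \in \LoopHp \cup S$, we again land in the first clause and are done. Otherwise $\judg \notin \LoopHp \cup S$, so in particular $\judg \in X \setminus S$, whence $|X \setminus (S \cup \{\judg\})| = n - 1 < n$. The definition requires us to compute $\toTree{\LoopHp}{g}{\judg_i}{S\cup\{\judg\}}$ for each $\judg_i \in g(\judg)$. Since $g(\judg) \subseteq X$, each $\judg_i$ lies in $X$, so the induction hypothesis applies and yields that every such subtree is defined. Moreover $g(\judg)$ is finite, because $g(\judg) \subseteq X$ and $X \in \finwp(\universe)$; hence we combine finitely many defined subtrees by the second clause, obtaining a well-defined tree.

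There is no real obstacle here: the only point that deserves care is ensuring the measure is legitimately well-founded and that the recursive calls remain within the scope of the induction. Both are immediate since $\LoopHp$, $g$, and $X$ stay fixed across recursive calls, the argument $S$ only grows (and stays inside $X$), and the switch from $S$ to $S \cup \{\judg\}$ strictly decreases $|X \setminus S|$ exactly when we need to recurse.
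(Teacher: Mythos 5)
Your proof is correct and follows essentially the same route as the paper: induction on a cardinality measure that strictly decreases at each recursive call because the recursed-on judgement is genuinely new. The only (immaterial) difference is the choice of measure, $|X\setminus S|$ instead of the paper's $|X\setminus(\LoopHp\cup S)|$; both strictly decrease in the recursive case since there $\judg\notin\LoopHp\cup S$.
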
 
\begin{proof}
Denote by $c(S)$ the cardinality of the set $X\setminus (\LoopHp \cup S)$. 
We prove that, for all $n \in \N$ and $S\subseteq X$, if $c(S) = n$ then $\toTree{\LoopHp}{g}{\judg}{S}$ is defined. 
The proof is by induction on $n$. 
\begin{description}
\item [Base] If $c(S) = n = 0$, then $X \subseteq \LoopHp \cup S$, hence $\judg \in \LoopHp \cup S$ hence $\toTree{\LoopHp}{g}{\judg}{S} = \Rule{}{\LoopJ{\LoopHp\cup S}{\judg}}$. 
\item [Induction] If $c(S) = n+1$, if $\judg \in \LoopHp \cup S$ then $\toTree{\LoopHp}{g}{\judg}{S}$ is defined as before; 
otherwise, we have $\judg \notin \LoopHp\cup S$ and, if $g(\judg) = \{\judg_1,\ldots,\judg_k\}$, then, for all $i \in 1..n$,  $\toTree{\LoopHp}{g}{\judg_i}{S\cup\{\judg\}} = \tr_i$ by induction hypothesis, as $c(S\cup\{\judg\}) = n$ since $\judg \notin S$, 
hence $\toTree{\LoopHp}{g}{\judg}{S} = \Rule{\tr_1 \Space\ldots\Space \tr_n}{\LoopJ{\LoopHp\cup S}{\judg}}$, as needed. 
\qedhere
\end{description}
\end{proof}

\begin{prop} \label{prop:toTree-valid}
For all $\LoopHp\in\finwp(\universe)$, $\fun{g}{X}{\wp(X)}$ with $X \in \finwp(\universe)$, $\judg \in X$ and $S\subseteq X$, 
if $\RulePair{g(\judg')}{\judg'} \in \is$, for all $\judg' \in X\setminus\LoopHp$, then 
$\toTree{\LoopHp}{g}{\judg}{S}$ is a finite proof tree for $\LoopJ{\LoopHp\cup S}{\judg}$ in $\Loopis{\is}$.
\end{prop}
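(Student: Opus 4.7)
The plan is to mimic the strategy of \refToProp{toTree-total}: proceed by induction on the natural number $c(S) = |X\setminus(\LoopHp\cup S)|$, so that the case analysis exactly follows the two branches of the defining equation of $\toTreen{\LoopHp}$. The induction principle is well-suited since $c(S\cup\{\judg\}) = c(S)-1$ whenever $\judg\in X\setminus(\LoopHp\cup S)$, which is precisely the situation of the recursive branch.

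In the base case, either $c(S) = 0$ (so $X \subseteq \LoopHp\cup S$ and hence $\judg\in\LoopHp\cup S$) or $\judg\in\LoopHp\cup S$ directly; in both situations $\toTree{\LoopHp}{g}{\judg}{S}$ equals $\Rule{}{\LoopJ{\LoopHp\cup S}{\judg}}$, which is a one-node (hence finite) proof tree whose unique node is obtained by the \rn{hp} axiom of $\Loopis{\is}$, since $\judg\in\LoopHp\cup S$. Its root is $\LoopJ{\LoopHp\cup S}{\judg}$ as required.

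In the inductive step, assume $c(S)>0$ and $\judg\notin\LoopHp\cup S$. Then $\judg\in X\setminus\LoopHp$, so the standing hypothesis gives $\RulePair{g(\judg)}{\judg}\in\is$; write $g(\judg) = \{\judg_1,\ldots,\judg_n\}$. Because $\judg\notin S$, we have $c(S\cup\{\judg\}) = c(S)-1$, so the induction hypothesis applies to each $\judg_i\in g(\judg)\subseteq X$ with the enlarged set $S\cup\{\judg\}$, yielding finite proof trees $\tr_i = \toTree{\LoopHp}{g}{\judg_i}{S\cup\{\judg\}}$ for $\LoopJ{\LoopHp\cup S\cup\{\judg\}}{\judg_i}$ in $\Loopis{\is}$. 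The roots of the $\tr_i$ are precisely the premises of the \rn{unfold} rule applied to the underlying rule $\RulePair{\{\judg_1,\ldots,\judg_n\}}{\judg}\in\is$ with circular hypotheses $\LoopHp\cup S$, whose conclusion is $\LoopJ{\LoopHp\cup S}{\judg}$. Hence stacking these subtrees under that rule instance yields a finite proof tree for $\LoopJ{\LoopHp\cup S}{\judg}$, and this is by construction $\toTree{\LoopHp}{g}{\judg}{S}$.

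Essentially no step is delicate: the only point to watch is the matching between the circular-hypothesis annotations produced by $\toTreen{\LoopHp}$ and those required by the \rn{unfold} rule, namely that enlarging $S$ by $\judg$ on the recursive side precisely corresponds to enlarging $\LoopHp\cup S$ by $\judg$ in the premises of \rn{unfold}. Once this bookkeeping is made explicit, the induction closes immediately.
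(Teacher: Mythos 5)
Your proof is correct and follows essentially the same route as the paper, which simply performs a straightforward induction on the definition of $\toTreen{\LoopHp}$; your use of the measure $c(S)=|X\setminus(\LoopHp\cup S)|$ just makes explicit the well-founded induction already underlying that recursive definition (as in \refToProp{toTree-total}). The bookkeeping you highlight, that $(\LoopHp\cup S)\cup\{\judg\}=\LoopHp\cup(S\cup\{\judg\})$ matches the premises of \rn{unfold}, is exactly the point the paper leaves implicit.
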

\begin{proof}
The proof is a straightforward induction on the definition of $\toTreen{\LoopHp}$. 
\end{proof}

We can now prove the following lemma, which concludes the proof of \refToThm{cycle-is}. 
\begin{lem} \label{lem:cycle-complete}
If $\validReg{\Extendis{\is}{\LoopHp}}{\judg}$ then $\validInd{\Loopis{\is}}{\LoopJ{\LoopHp}{\judg}}$. 
\end{lem}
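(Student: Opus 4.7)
The strategy is to convert the semantic witness of $\validReg{\Extendis{\is}{\LoopHp}}{\judg}$ into an input for the tree-building function $\toTreen{\LoopHp}$, and then invoke the two properties \refToProp{toTree-total} and \refToProp{toTree-valid} that have just been established.

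Concretely, my first step would be to apply \refToThm{rational-sem} to the inference system $\Extendis{\is}{\LoopHp}$. This gives $\judg \in \rfp\InfOp{\Extendis{\is}{\LoopHp}}$, so by definition of the rational fixed point there exists a finite set $X \in \finwp(\universe)$ with $\judg \in X$ and $X \subseteq \InfOp{\Extendis{\is}{\LoopHp}}(X) = \InfOp{\is}(X) \cup \LoopHp$. The case $\judg \in \LoopHp$ is handled immediately by rule \rn{hp} of $\Loopis{\is}$, so assume $\judg \notin \LoopHp$, whence $\judg \in \InfOp{\is}(X)$.

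Next I would build a graph $\fun{g}{X}{\wp(X)}$ witnessing this post-fixed-point condition rule by rule: for every $\judg' \in X\setminus\LoopHp$ we have $\judg' \in \InfOp{\is}(X)$, so pick some set $\prem_{\judg'} \subseteq X$ with $\RulePair{\prem_{\judg'}}{\judg'} \in \is$ and set $g(\judg') = \prem_{\judg'}$; for $\judg' \in X \cap \LoopHp$, the value of $g$ is irrelevant (take $g(\judg') = \emptyset$) because those judgements are absorbed by the circular-hypothesis axiom and will never cause an unfold step in the construction of $\toTreen{\LoopHp}$.

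Then I would invoke \refToProp{toTree-total} with the chosen data $\LoopHp$, $g$, $\judg \in X$ and $S = \emptyset$ to conclude that $\toTree{\LoopHp}{g}{\judg}{\emptyset}$ is defined, and \refToProp{toTree-valid}, whose hypothesis $\RulePair{g(\judg')}{\judg'} \in \is$ for all $\judg' \in X\setminus\LoopHp$ holds by construction of $g$, to conclude that this tree is a finite proof tree in $\Loopis{\is}$ for $\LoopJ{\LoopHp \cup \emptyset}{\judg} = \LoopJ{\LoopHp}{\judg}$, which is exactly $\validInd{\Loopis{\is}}{\LoopJ{\LoopHp}{\judg}}$.

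There is no real obstacle once the auxiliary propositions are in place; the only subtle point is making sure the domain of $g$ is set up so that the recursion of $\toTreen{\LoopHp}$ terminates on all judgements it actually visits, which is precisely why \refToProp{toTree-valid} only demands that $\RulePair{g(\judg')}{\judg'} \in \is$ on $X\setminus\LoopHp$, leaving us free to define $g$ arbitrarily on $X \cap \LoopHp$.
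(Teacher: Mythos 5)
Your proposal is correct and follows essentially the same route as the paper's proof: extract a finite post-fixed point $X$ of $\InfOp{\Extendis{\is}{\LoopHp}}$ via \refToThm{rational-sem}, define $g(\judg') = \prem_{\judg'}$ on $X\setminus\LoopHp$ and $\emptyset$ on $X\cap\LoopHp$, and conclude with \refToProp{toTree-total} and \refToProp{toTree-valid} applied at $S=\emptyset$. The only (harmless) difference is your preliminary case split on $\judg\in\LoopHp$, which is unnecessary since the definition of $\toTreen{\LoopHp}$ already absorbs that case.
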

\begin{proof}
If $\judg \in \Reg{\Extendis{\is}{\LoopHp}}$, since $\Reg{\Extendis{\is}{\LoopHp}} = \rfp\InfOp{\Extendis{\is}{\LoopHp}}$ (\refToThm{rational-sem}), we have that there exists a finite set $X \subseteq \universe$ such that $\judg \in X \subseteq \InfOp{\Extendis{\is}{\LoopHp}}(X) = \InfOp{\is}(X)\cup\LoopHp$. 
Then, for each $\judg' \in X\setminus\LoopHp$, there is $\prem_{\judg'} \subseteq X$ such that $\RulePair{\prem_{\judg'}}{\judg'} \in \is$. 
Define $\fun{g}{X}{\wp(X)}$ by $g(\judg') = \prem_{\judg'}$, if $\judg'\in X\setminus\LoopHp$, and $g(\judg') = \emptyset$ otherwise. 
Therefore, by \refToProp{toTree-total}, $\toTree{\LoopHp}{g}{\judg}{\emptyset}$ is defined and, by \refToProp{toTree-valid}, it is a finite proof tree for $\LoopJ{\LoopHp}{\judg}$ in $\Loopis{\is}$, hence we get $\validInd{\Loopis{\is}}{\LoopJ{\LoopHp}{\judg}}$, as needed.
\end{proof}

We conclude the section by discussing \EZ{a} more operational aspect of \refToDef{cycle-is}. 
In this definition,  we aimed at being as liberal as possible, 
hence, the two types of rules are not mutually exclusive: for a judgement $\LoopJ{\LoopHp}{\judg}$ with $\judg \in \LoopHp$ we can either apply the circular hypothesis or use a rule from $\is$. 
Since \EZ{here} we are \EZ{only} interested in derivability, this aspect is not that relevant, however, it becomes more interesting from an algorithmic perspective. 
Indeed, we can consider an alternative definition, where 
we allow the second type of rule only if $\judg \notin \LoopHp$, 
in other words, we apply circular hypotheses as soon as we can. 

In this way we would have less valid proof trees in $\Loopis{\is}$, but the set of derivable judgements remains the same. 
Indeed, \refToLem{cycle-sound} ensures soundness also of the deterministic version, because any proof tree in the deterministic version is also a proof tree in the non-deterministic one. 
On the other hand,  \refToLem{cycle-complete} ensures completeness of the deterministic version, because the functions $\toTreen{\LoopHp}$ build a proof tree in the deterministic version, since they perform the additional check $\judg \notin \LoopHp$ to apply rules from $\is$.

\section{Regular reasoning} 
\label{sect:reasoning}

In this section we discuss proof techniques for regular reasoning, which \EZ{can be defined} thanks to the results proved in \refToSect{equiv} and \refToSect{cycle}.

Let $\is$ be a finitary inference system on the universe $\universe$. 
We are typically interested in comparing the \EZ{regular} interpretation of $\is$ to a set of judgements $\Spec \subseteq \universe$ (\emph{specification}). 
In particular, we focus on two properties:
\begin{description}
\item [soundness] all derivable judgements belong to $\Spec$, that is, $\Reg{\is} \subseteq \Spec$, 
\item [completeness] all judgements in $\Spec$ are derivable, that is, $\Spec \subseteq \Reg{\is}$. 
\end{description}

For completeness proofs, we rely on the fixed point characterization of $\Reg{\is}$ (\refToThm{rational-sem}). 
Indeed, since $\Reg{\is} = \rfp\InfOp{\is}$, from \refToProp{below-rfp} we get a proof principle, \EZ{which we call} the \emph{regular coinduction principle}, \EZ{expressed by} the following proposition: 

\begin{prop}[Regular coinduction] \label{prop:reg-coind} 
Let $\Spec \subseteq \universe$ be a set of judgements, then 
if, for all $\judg \in \Spec$, there is a finite set $X \subseteq \universe$ such that 
\begin{itemize}
\item $X \subseteq \InfOp{\is}(X)$, and 
\item $\judg \in X$, 
\end{itemize} 
then, $\Spec \subseteq \Reg{\is}$. 
\end{prop}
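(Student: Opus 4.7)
The plan is to apply Proposition \ref{prop:below-rfp} directly, with the ambient lattice being $\wp(\universe)$ ordered by inclusion. By Proposition \ref{prop:finitary-is}, the inference operator $\InfOp{\is}$ is finitary, and $\wp(\universe)$ is an algebraic lattice whose compact elements are precisely the finite subsets of $\universe$ (i.e.\ $\Compact{\wp(\universe)} = \finwp(\universe)$). Hence the rational fixed point $\rfp\InfOp{\is}$ is well-defined, and by Theorem \ref{thm:rational-sem} it agrees with $\Reg{\is}$, so the goal is reduced to proving $\Spec \subseteq \rfp\InfOp{\is}$.

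First, for each $\judg \in \Spec$ I would use the hypothesis to pick a witnessing finite set $X_\judg \subseteq \universe$ with $\judg \in X_\judg$ and $X_\judg \subseteq \InfOp{\is}(X_\judg)$. Collecting these, set $\mathcal{X} = \{ X_\judg \mid \judg \in \Spec \}$; by construction $\mathcal{X} \subseteq \Compact{\wp(\universe)}$ and every element of $\mathcal{X}$ is a post-fixed point of $\InfOp{\is}$. Moreover, since $\judg \in X_\judg$ for each $\judg \in \Spec$, we have $\Spec \subseteq \bigcup_{\judg \in \Spec} X_\judg = \lub \mathcal{X}$. Applying Proposition \ref{prop:below-rfp} with $z = \Spec$ and $X = \mathcal{X}$ then yields $\Spec \subseteq \rfp\InfOp{\is} = \Reg{\is}$, as required.

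There is no real obstacle: the proposition is essentially a repackaging of Proposition \ref{prop:below-rfp} together with Theorem \ref{thm:rational-sem} in the particular setting of the inference operator. The only mild point of care is to check that the ingredients of Proposition \ref{prop:below-rfp} are in place, namely that $\InfOp{\is}$ is finitary (so the rational fixed point exists) and that the compact elements of $\wp(\universe)$ are exactly the finite subsets (so that the finite witnesses $X_\judg$ qualify as compact post-fixed points).
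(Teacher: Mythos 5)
Your proof is correct and follows exactly the route the paper takes: the paper's proof is simply ``Immediate from Proposition~\ref{prop:below-rfp}'', relying on the identification $\Reg{\is} = \rfp\InfOp{\is}$ from Theorem~\ref{thm:rational-sem} established just before the proposition. You have merely spelled out the details (finitariness of $\InfOp{\is}$, compactness of finite sets in $\wp(\universe)$, and the collection of witnesses $X_\judg$) that the paper leaves implicit.
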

\begin{proof}
Immediate from \refToProp{below-rfp}. 
\end{proof}
This is very much like the usual coinduction principle, but it additionally requires $X$ to be finite. 
The condition $X \subseteq \InfOp{\is}(X)$ can be equivalently expressed  as follows: 
for all $\judg \in X$, there is a rule $\RulePair{\prem}{\judg} \in \is$ such that $\prem \subseteq X$. 

\begin{exa}
To show how proofs by regular coinduction work, as first example we consider the introductory one: the definition of the judgement $\allPos{s}$, where $s$ is  a stream of natural numbers, which, intuitively, should hold when $s$ is positive, that is, it contains only  positive elements.
We report here the inference system $\posis$ defining this predicate: 
\[ \Rule{\allPos{s}}{\allPos{x\cons s}}\ x > 0 \] 
The specification $\posSpec$ is the set of judgements $\allPos{s}$, where $s$ is rational,   meaning that it has finitely many different substreams,    and positive. 
Then, the completeness statement is the following:
\begin{quote}
If $s$ is rational and positive, then $\validReg{\posis}{\allPos{s}}$. 
\end{quote}
To prove the result, let $s$ be a rational stream containing only positive elements and set $X_s = \{ \allPos{s'} \mid s = x_1\cons \ldots \cons x_n\cons s'\}$. 
Clearly, $X_s$ is finite, because $s$ is rational,  and  $\allPos{s} \in X_s$, hence we have only to prove that it is consistent. 
Let $\allPos{s'} \in X_s$, then $s' = x\cons s''$, thus $s = x_1\cons \ldots\cons x_n \cons x \cons s''$, and so $x > 0$, because it is an element of $s$, and $\allPos{s''} \in X_s$, by definition of $X_s$, and this proves that $X_s$ is a post-fixed point. 
Therefore, by the regular coinduction principle we get the thesis. 
\end{exa}

Let us now focus on the soundness property. 
If we interpreted $\is$ inductively, we would prove soundness by induction on rules, but in the regular case this technique is not available, since it is unsound. 
However, in \refToThm{cycle-is}, we proved that $\judg \in \Reg{\is}$  if and only if $\LoopJ{\emptyset}{\judg}$ is derivable in $\Loopis{\is}$, which is interpreted inductively. 
Therefore, we can exploit the induction principle associated with $\Loopis{\is}$ to prove soundness, as the following proposition states: 

\begin{prop} \label{prop:sound-ind}
Let $\Spec \subseteq \universe$ be a set of judgements, then, 
if there is a family $(\Spec_\LoopHp)_{\LoopHp \in \finwp(\universe)}$  such that $\Spec_\LoopHp \subseteq \universe$ and $\Spec_\emptyset \subseteq  \Spec$, and, 
for all $\LoopHp \in \finwp(\universe)$, 
\begin{itemize}
\item $\LoopHp \subseteq \Spec_\LoopHp$, and 
\item for all rules $\RulePair{\prem}{\judg} \in \is$, if $\prem \subseteq \Spec_{\LoopHp \cup\{\judg\}}$ then $\judg \in \Spec_\LoopHp$, 
\end{itemize}
then $\Reg{\is} \subseteq \Spec$. 
\end{prop}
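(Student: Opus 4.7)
The plan is to reduce the statement to an ordinary rule induction on $\Loopis{\is}$ by exploiting the inductive characterisation provided by \refToThm{cycle-is}. Concretely, I will prove the following strengthening: for every $\LoopHp \in \finwp(\universe)$ and every $\judg \in \universe$, if $\validInd{\Loopis{\is}}{\LoopJ{\LoopHp}{\judg}}$ then $\judg \in \Spec_\LoopHp$. Once this is established, applied to $\LoopHp = \emptyset$ and combined with $\Spec_\emptyset \subseteq \Spec$ and \refToThm{cycle-is}, it immediately yields $\Reg{\is} \subseteq \Spec$.

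To prove the strengthened statement, I would proceed by induction on the derivation of $\LoopJ{\LoopHp}{\judg}$ in $\Loopis{\is}$, distinguishing the two shapes of rules in \refToDef{cycle-is}. For rule \rn{hp}, we have $\judg \in \LoopHp$, and the first hypothesis on the family, namely $\LoopHp \subseteq \Spec_\LoopHp$, gives $\judg \in \Spec_\LoopHp$. For rule \rn{unfold}, there is a rule $\RulePair{\{\judg_1,\ldots,\judg_n\}}{\judg} \in \is$ and, for each $i \in 1..n$, a sub-derivation of $\LoopJ{\LoopHp\cup\{\judg\}}{\judg_i}$; by the inductive hypothesis applied with circular hypotheses $\LoopHp\cup\{\judg\}$, we obtain $\judg_i \in \Spec_{\LoopHp\cup\{\judg\}}$ for all $i$, and the second hypothesis on the family then yields $\judg \in \Spec_\LoopHp$.

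The argument is essentially mechanical once the right formulation is in place, so there is no serious technical obstacle. The only subtle point is recognising that the strengthening over all finite sets $\LoopHp$ is needed to carry the induction through: the statement about $\Spec$ alone cannot be directly proved by rule induction on $\Loopis{\is}$, because the premises of \rn{unfold} involve the enlarged set $\LoopHp \cup \{\judg\}$, which is exactly why the family $(\Spec_\LoopHp)_{\LoopHp \in \finwp(\universe)}$ must be indexed over all finite sets of hypotheses and the two closure conditions must be assumed uniformly in $\LoopHp$.
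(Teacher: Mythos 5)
Your proof is correct and follows exactly the same route as the paper's: a rule induction on $\Loopis{\is}$ establishing that $\validInd{\Loopis{\is}}{\LoopJ{\LoopHp}{\judg}}$ implies $\judg \in \Spec_\LoopHp$, followed by an application of \refToThm{cycle-is} at $\LoopHp = \emptyset$ together with $\Spec_\emptyset \subseteq \Spec$. You merely spell out the two rule cases that the paper dismisses as immediate.
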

\begin{proof}
By induction on $\Loopis{\is}$, we immediately get that $\validInd{\Loopis{\is}}{\LoopJ{\LoopHp}{\judg}}$ implies $\judg \in \Spec_\LoopHp$. 
Therefore, if $\judg \in \Reg{\is}$, by \refToThm{cycle-is}, we have $\validInd{\Loopis{\is}}{\LoopJ{\emptyset}{\judg}}$, hence $\judg \in \Spec_\emptyset \subseteq  \Spec$. 
\end{proof}
In other words, given a specification $\Spec \subseteq \universe$, to prove soundness we have first to generalize the specification to a family of specifications,
indexed over finite sets of judgements, in order to take into account circular hypotheses. 
Then, we reason by induction on rules in the equivalent inductive system (see \refToDef{cycle-is}) and, since $\Spec_\emptyset \subseteq \Spec$, we get soundness. 

\begin{exa}
We illustrate the technique again on the definition of $\allPos{s}$. 
The soundness statement is the following:
\begin{quote}
If $\validReg{\posis}{\allPos{s}}$, then $s$ is rational and positive. 
\end{quote}
The first step is to generalize the specification to a family of sets $\posSpec_\LoopHp$, indexed over finite subsets of judgements $\LoopHp$. 
\begin{quote}
$\allPos{s} \in \posSpec_\LoopHp$ iff 
either $s$ is rational and positive, 
or $s = x_1\cons\ldots \cons x_n\cons s'$ with $x_i > 0$, for all $i \in 1..n$, and $\allPos{s'} \in \LoopHp$. 
\end{quote}
It is easy to see that $\posSpec_\emptyset \subseteq \posSpec$ and, for all $\LoopHp \in \finwp(\universe)$, $\LoopHp \subseteq \posSpec_\LoopHp$, by definition of $\posSpec_\LoopHp$. 
Hence, we have only to check that the sets $\posSpec_\LoopHp$ are closed with respect to the rule, as formulated in \refToProp{sound-ind}. 

Let us assume $\allPos{s} \in \posSpec_{\LoopHp'}$, with $\LoopHp' = \LoopHp \cup \{\allPos{x\cons s}\}$ and $x > 0$. 
We have the following cases:
\begin{itemize}
\item If $s$ is rational and positive, this is true for $x\cons s$ as well, because $x>0$ by hypothesis. 
\item If $s = x_1\cons\ldots\cons x_n\cons s'$ with $x_i >0$, for all $i \in 1..n$, and $\allPos{s'} \in \LoopHp'$, then, 
if \mbox{$\allPos{s'} \in \LoopHp$}, since $x\cons s = x\cons x_1\cons \ldots\cons x_n\cons s'$ and $x>0$, we have the thesis; 
if $s' = x\cons s$ then $x\cons s = x\cons x_1\cons\ldots\cons x_n \cons x\cons s$, thus it is rational and positive, as $x>0$. 
\qedhere
\end{itemize}
\end{exa} 

We now consider a more complex example: the definition of the distance in a graph (see page \pageref{page:ex-dist}), proving it is sound and complete with respect to the expected meaning. 
\begin{exa}
For the reader's convenience, we report here the rules defining this judgement:
\begin{small}
\[ 
\MetaRule{empty}{}{ \gdist{\Gr}{\node}{\node}{0} }{}  
\BigSpace 
\MetaRule{adj}{
  \gdist{\Gr}{\node_1}{\anode}{\delta_1} \Space \ldots \Space \gdist{\Gr}{\node_n}{\anode}{\delta_n}
}{ \gdist{\Gr}{\node}{\anode}{1 + \min \{\delta_1, \ldots, \delta_n\} }  }{ 
\node \ne \anode \\ 
\Gr(\node) = \{\node_1,\ldots,\node_n\} 
}
\]
\end{small}
We denote by $\distis$ the above inference system. 
We recall for the reader's convenicence a few definitions we need in the proof.  
Let us assume a graph $\fun{\Gr}{\Nodes}{\wp(\Nodes)}$. 
An edge in $\Gr$ is a pair $\Pair{\node}{\anode}$ such that $\anode \in \Gr(\node)$, often written $\node\anode$. We denote by $\Edges$ the set of edges in $\Gr$. 
A path from $\node_0$ to $\anode_n$ in $\Gr$ is a non-empty finite sequence of nodes $\alpha = \node_0\ldots\node_n$ with $n\ge 0$, such that, for all $i \in 1..n$, $\node_{i-1}\node_i \in \Edges$. 
The empty path starting from the node $\node$ to itself is the sequence $\node$.
If $\alpha$ is a path in $\Gr$, then we denote by $\pthLen{\alpha}$ the length of $\alpha$, that is, the number of edges in $\alpha$, and we write $\node\in\alpha$ when  the node $\node$ occurs in $\alpha$, that is, the path $\alpha$ traverses $\node$. 
The distance from a node $\node$ to a node $\anode$, denoted by $\delta(\node,\anode)$,  is the least length of a path from $\node$ to $\anode$, that is, 
$\delta(\node,\anode) = \min\{\pthLen{\alpha} \mid \alpha\mbox{ is a path from $\node$ to $\anode$}\}$, 
hence, if there is no path from $\node$ to $\anode$,  $\delta(\node,\anode)  = \min\emptyset =  \infty$. 
We say a path $\alpha = \node_0\ldots\node_n$ is \emph{simple} if it visits every node at most once, that is, $\node_i = \node_j$ implies $i = j$, for all $i,j\in 0..n$. 
Note that the empty path is trivially simple. 
It is also important to note that $\delta(\node,\anode)$ is the least length of a simple  path from $\node$ to $\anode$. 
Then, the specification $\distSpec$ is the set of judgements $\gdist{\Gr}{\node}{\anode}{\delta}$ with $\delta = \delta(\node,\anode)$. 

We can now state that the definition of $\gdist{\Gr}{\node}{\anode}{\delta}$ is sound and complete with respect to the specification $\distSpec$. 
\begin{quote}
$\validReg{\distis}{\gdist{\Gr}{\node}{\anode}{\delta}}$ iff $\delta = \delta(\node,\anode)$. 
\end{quote}

\subsubsection*{Completeness proof}
The proof is by regular coinduction. 
Let us consider a judgement $\gdist{\Gr}{\node}{\anode}{\delta(\node,\anode)}$. 
Let $R_\node \subseteq \Nodes$ be the set of nodes reachable from $\node$ and let us define $X_\node = \{\gdist{\Gr}{\node'}{\anode}{\delta(\node',\anode)}  \mid \node' \in R_\node\}$. 
This set is clearly finite and, moreover,  $\gdist{\Gr}{\node}{\anode}{\delta(\node,\anode)} \in X_\node$, because $\node$ is reachable from itself. 
Hence, we have only to prove that $X_\node$ is a post-fixed point. 
Let $\node' \in R_\node$, then we have to find a rule with conclusion $\gdist{\Gr}{\node'}{\anode}{\delta(\node',\anode)}$ and whose premises are in $X_\node$. 
We have two cases: 
\begin{itemize}
\item If $\node' = \anode$, then $\delta(\node',\anode) = 0$ and so we have the thesis by rule \rn{empty}. 
\item If $\node'\ne \anode$, then 
we have $\delta(\node',\anode) = 1+\min\{\delta(\node'',\anode) \mid \node'' \in \Gr(\node')\}$, hence, 
since $\Gr(\node') \subseteq R_\node$, all the premises $\gdist{\Gr}{\node''}{\anode}{\delta(\node'',\anode)}$, for $\node''\in\Gr(\node')$,  belong to $X_\node$, as needed. 
\end{itemize}

\subsubsection*{Soundness proof} 
To apply \refToProp{sound-ind}, we generalize the specification $\distSpec$ to a family of sets $\distSpec_\LoopHp$, indexed over finite sets of judgements, defined below. 
\begin{quote}
\distStar $\gdist{\Gr}{\node}{\anode}{\delta} \in \distSpec_\LoopHp$ iff there is a set of paths $P$ and a function $\fun{f}{P}{\N\cup\{\infty\}}$ such that 
\begin{enumerate}
\item\label{itm:diststar:1} for all $\alpha \in P$, 
either $\alpha$ goes from $\node$ to $\anode$ and $f(\alpha) = 0$, 
or $\alpha$ goes from $\node$ to $\node'$ and $\gdist{\Gr}{\node'}{\anode}{f(\alpha)} \in \LoopHp$;
\item\label{itm:diststar:2} for each simple path $\beta$ from $\node$ to $\anode$, there is $\alpha \in P$ such that $\beta = \alpha\beta'$; 
\item\label{itm:diststar:3} $\delta = \min\{\pthLen{\alpha}+f(\alpha) \mid \alpha \in P\}$. 
\end{enumerate}
\end{quote}
First, we have to check that  $\distSpec_\emptyset \subseteq \distSpec$.  
Let $\gdist{\Gr}{\node}{\anode}{\delta} \in \distSpec_\emptyset$, then, by \refToItm{diststar:3} of \distStar,  $\delta = \min\{ \pthLen{\alpha}+f(\alpha) \mid \alpha \in P \}$, for some set of paths $P$ and function $\fun{f}{P}{\N\cup\{\infty\}}$. 
Since $\LoopHp$ is empty, by \refToItm{diststar:1} of \distStar, we have that, for all $\alpha \in P$, $\alpha$ is a path from $\node$ to $\anode$ and $f(\alpha) = 0$, hence $\delta(\node,\anode) \le \pthLen{\alpha} + f(\alpha)$ for all $\alpha\in P$ and so $\delta(\node,\anode) \le \delta$. 
To prove the other inequality, let $\beta$ be a simple path from $\node$ to $\anode$, then, 
by \refToItm{diststar:2}  of \distStar,  there is $\alpha \in P$ such that $\beta = \alpha\beta'$, but, by \refToItm{diststar:1} of \distStar, $\alpha$ goes from $\node$ to $\anode$ and $f(\alpha) = 0$, hence $\beta = \alpha$, because it cannot traverse twice $\anode$; 
thus we have $\delta \le \pthLen{\beta} = \pthLen{\alpha}+f(\alpha)$, for any simple path $\beta$, and so $\delta \le \delta(\node,\anode)$. 

The fact that $\LoopHp\subseteq \distSpec_\LoopHp$ is immediate because, if $\gdist{\Gr}{\node}{\anode}{\delta} \in \LoopHp$, then, to get the thesis, it is enough to take as $P$ the set containing only the empty path with $f(\node) = \delta$, which trivially satisfies all conditions in \distStar. 

Then, we have only to check that all sets $\distSpec_\LoopHp$ are closed  with respect to the rules \rn{empty} and \rn{adj}, as formulated in \refToProp{sound-ind}. 
\begin{description}
\item [\rn{empty}] If $\node = \anode$ and $\delta = 0$, then it is enough to take as $P$ the set containing only the empty path, with $f(\node) = 0$. 
\item [\rn{adj}] We have $\node \ne \anode$, $\Gr(\node) = \{\node_1, \ldots, \node_n\}$ and, for all $i \in 1..n$, $\gdist{\Gr}{\node_i}{\anode}{\delta_i} \in \distSpec_{\LoopHp'}$ with $\LoopHp' = \LoopHp\cup\{\gdist{\Gr}{\node}{\anode}{\delta}\}$. 
If $n = 0$, then $\Gr(\node)$ is empty, $\delta = \min\emptyset = \infty$ and there is no path from $\node$ to $\anode$. 
Hence, the thesis follows by taking $P = \emptyset$. 

Then, let us assume $n \ge 1$. 
By hypothesis,  \mbox{$\delta = 1 + \min \{\delta_1,\ldots,\delta_n\} = 1 + \delta_k$,} for some $k\in 1..n$,  since we are considering rule \rn{adj}. 
Since $\gdist{\Gr}{\node_i}{\anode}{\delta_i} \in \distSpec_{\LoopHp'}$, for all $i \in 1..n$, there are $P_i$ and $\fun{f_i}{P_i}{\N\cup\{\infty\}}$ satisfying \distStar, in particular, by \refToItm{diststar:3}, we have that  $\delta_i = \min \{ \pthLen{\alpha} + f_i(\alpha) \mid \alpha \in P_i \}$. 
We define $P$ as the set of paths $\node\alpha$ with $\alpha\in P_i$ such that, if $\alpha$ ends in $\node$, then $f_i(\alpha) \ne \delta$, and 
$\fun{f}{P}{\N\cup\{\infty\}}$   is   defined by $f(\node\alpha) = f_i(\alpha)$ when $\alpha \in P_i$. 
Clearly, $P$ satisfies \refToItm{diststar:1}  of \distStar with respect to $\LoopHp$. 
To check that \refToItm{diststar:2}  holds, let $\beta$ be a simple path from $\node$ to $\anode$, then $\beta = \node\node_i\beta'$, for some $i \in 1..n$. 
Hence, $\node_i\beta'$ is a simple path from $\node_i$ to $\anode$ and $\node \notin \node_i\beta'$, thus, by \refToItm{diststar:2} of \distStar applied to $P_i$, there is $\alpha'\in P_i$ such that $\node_i\beta'=\alpha' \gamma$, 
and $\node\notin \alpha'$, because $\node\notin\node_i\beta'$. 
Therefore, $\node\alpha' \in P$ and $\node\alpha'\gamma = \node\node_i\beta' = \beta$, as needed. 

We now prove \refToItm{diststar:3} of \distStar, that is, $\delta = \min \{ \pthLen{\alpha} + f(\alpha) \mid \alpha \in P \}$. 
Let $\alpha = \node\node_i\alpha' \in P$, for some $i\in 1..n$,  then $\node_i\alpha'\in P_i$, hence, $\delta_k\le \delta_i \le \pthLen{\node_i\alpha'} + f_i(\node_i\alpha')$, thus 
$$\delta = 1+\delta_k \le 1+ \pthLen{\node_i\alpha'} + f_i(\node_i\alpha') = \pthLen{\alpha} + f(\alpha)$$ 
and this implies 
$\delta \le \min\{\pthLen{\alpha}+f(\alpha) \mid \alpha \in P\}$. 
To conclude, we have to prove the other inequality, 
hence we distinguish the following cases:
\begin{itemize}
\item if $\delta_k = \infty$, then $\delta = \infty$ and this proves the thesis, since $\infty \ge x$ for all $x\in \N\cup\{\infty\}$; 
\item otherwise, $\delta_k = \pthLen{\alpha'} + f_k(\alpha')$, for some $\alpha' \in P_k$.
If $\alpha'$ ends in $\node$ and $f_k(\alpha') = \delta$, then $\delta_k = \pthLen{\alpha'} + \delta = \pthLen{\alpha'} + 1 + \delta_k$, which implies $\delta_k = \infty$ that is  absurd. 
Otherwise, $\node\alpha' \in P$ and  $f(\node\alpha') = f_k(\alpha')$, thus 
$$\min\{\pthLen{\alpha}+f(\alpha) \mid \alpha \in P\} \le \pthLen{\node\alpha'}+f(\node\alpha') = 1 + \pthLen{\alpha'}+f_k(\alpha') = 1+\delta_k = \delta$$ 
as needed. 
\end{itemize}
\end{description}
\end{exa}

\section{Flexible regular coinduction} \label{sect:corules}

Infinite derivations are a very powerful tool, which make it possible  to deal with a variety of situations that cannot be handled by only finite derivations. 
However, in some cases, they have an unexpected behaviour, allowing the derivation of intuitively incorrect judgements. 
The same issue affects also regular derivations.
Let us explain this by an example.
Consider the following rules, defining the judgement $\minElem{x}{l}$, where $x$ is an integer and $s$ is a rational stream, stating that $x$ is the minimum of the stream $s$. 
\[
\Rule{ \minElem{y}{s} }{ \minElem{z}{x\cons s} }\,z=\min\{x,y\}  
\]
In \refToFig{min-trees} we report three infinite regular derivations, thus valid for the regular interpretation of the above rules, 
where, however, only the first one is intuitively correct: judgements $\minElem{0}{2\cons 2\cons\ldots}$ and $\minElem{1}{2\cons 2\cons \ldots}$ should not be derivable, as $0$ and $1$ do not belong to the stream. 

\begin{figure}
\[
\Rule{
  \Rule{
    \Rule{\vdots}{ \minElem{2}{2\cons 2\cons 2\cons\ldots} }
  }{ \minElem{2}{2\cons 2\cons 2\cons\ldots} }
}{ \minElem{2}{2\cons 2\cons 2\cons\ldots} }
\BigSpace
\Rule{
  \Rule{
    \Rule{\vdots}{ \minElem{1}{2\cons 2\cons 2\cons\ldots} }
  }{ \minElem{1}{2\cons 2\cons 2\cons\ldots} }
}{ \minElem{1}{2\cons 2\cons 2\cons\ldots} }
\BigSpace
\Rule{
  \Rule{
    \Rule{\vdots}{ \minElem{0}{2\cons 2\cons 2\cons\ldots} }
  }{ \minElem{0}{2\cons 2\cons 2\cons\ldots} }
}{ \minElem{0}{2\cons 2\cons 2\cons\ldots} }
\]
\caption{Some infinite regular derivation for the judgement $\minElem{x}{s}$. }  \label{fig:min-trees}
\end{figure}

Inference systems with corules \cite{AnconaDZ17esop,Dagnino19} have been recently designed precisely to address this issue for the coinductive interpretation, where arbitrary infinite derivations are allowed. 
Beside standard inference rules, they introduce special rules, called \emph{corules}, which allow one to refine the coinductive interpretation,
by filtering out some, undesired, infinite derivations. 
More precisely, an \emph{inference system with corules}, or \emph{generalised inference system}, is a pair $\Pair{\is}{\cois}$ where $\is$ and $\cois$ are inference systems, whose elements are called \emph{rules} and \emph{corules}, respectively.
A corule is also denoted by $\CoRule{\prem}{\conclu}$, very much like a rule, but with a thicker line. 

The semantics of such a pair, denoted by $\FlexCo{\is}{\cois}$, is constructed in two steps: 
\begin{itemize}
\item first, we take the inductive   interpretation of the union $\is\cup\cois$, that is, $\Ind{\is\cup\cois}$, 
\item then, we take the coinductive interpretation of $\is$ restricted to $\Ind{\is\cup\cois}$. 
\end{itemize}
In symbols, we have $\FlexCo{\is}{\cois} = \CoInd{\is\Restrict{\Ind{\is\cup\cois}}}$, where 
$\is\Restrict{X}$ is the inference system obtained from $\is$ by keeping only rules with conclusion in $X\subseteq \universe$. 

In terms of proof trees, $\FlexCo{\is}{\cois}$ is the set of judgements with an arbitrary (finite or not) proof tree in $\is$, whose nodes all have a finite proof tree in $\is {\cup} \cois$. 
\EZComm{tagliato mi sembrava una ripetizione: Essentially, by corules we filter out some, undesired, infinite proof trees.}
In \cite{Dagnino19}, $\FlexCo{\is}{\cois}$ is proved to be an interpretation of $\is$, that is, a fixed point of $\InfOp{\is}$. 

\EZ{In this section, we show that the results previously given for regular coinduction smoothly extend to generalised inference systems.
The technical development in the following is partly repetitive; this could have been avoided by  presenting the results in the generalized framework since the beginning. 
However, to have separation of concerns, we preferred to first give a presentation using only standard notions, limiting to this section non-standard ones.}

We start by defining the regular interpretation of an inference system with corules. 
\begin{defi}\label{def:reg-corules}
Let $\Pair{\is}{\cois}$ be an inference system with corules. 
The \emph{regular interpretation} $\FlexReg{\is}{\cois}$ of $\Pair{\is}{\cois}$ is defined by 
$\FlexReg{\is}{\cois} = \Reg{\is\Restrict{\Ind{\is\cup\cois}}}$. 
\end{defi}
As we will see later in this section (\refToCor{pt-corules}), but it is not difficult to be convinced of it, 
in proof-theoretic terms this is equivalent to say that $\FlexReg{\is}{\cois}$ is the set of judgements with a regular proof tree in $\is$, 
whose nodes all have a finite proof tree in $\is\cup\cois$. 
In this way, we can filter out some, undesired, regular derivations. 
In the following, we will write $\validReg{\Pair{\is}{\cois}}{\judg}$ for $\judg \in \FlexReg{\is}{\cois}$. 

Coming back to the example, using corules, we can provide a correct definition of the judgement $\minElem{x}{s}$ as follows: 
\[
\Rule{ \minElem{y}{s} }{ \minElem{z}{x\cons s} }\,z=\min\{x,y\}  \BigSpace
\CoAxiom{ \minElem{x}{x\cons s} }
\]
The additional constraint, imposed by the coaxiom, allows us to build regular infinite derivation using only judgements $\minElem{x}{s}$ where $x$ belongs to $s$;
thus filtering out the second and third incorrect proof trees in \refToFig{min-trees}, since they involve judgements with no finite derivation using also the coaxiom. 

All the results discussed so far  for the regular interpretation can be smoothly extended to the regular interpretation of an inference system with corules. 
We will now develop all the tecnical machinery needed for this, adapting constructions in \cite{AnconaDZ17esop,Dagnino19} to the regular case. 

\subsection{Bounded rational fixed point}
To construct such a fixed point, we come back to the lattice-theoretic setting of \refToSect{rfp}.
Let us assume an algebraic lattice $\Pair{\lattice}{\order}$.

Let $\fun{\MFun,\aMFun}{\lattice}{\lattice}$ be two functions, 
we write $\MFun\join\aMFun$ for the pointwise join of $\MFun$ and $\aMFun$,
and, for all $z\in \lattice$, $\FMeet{\MFun}{z}$ for the function defined by $\FMeet{\MFun}{z}(x) = \MFun(x)\meet z$. 
It is easy to see that, if $\MFun$ and $\aMFun$ are monotone, then $\MFun\join\aMFun$ is monotone as well, hence, by the Tarski theorem, it has a least fixed point $\lfp(\MFun\join\aMFun)$. 
It is also easy to check that if $z\in\lattice$ is a  pre-fixed point of $\MFun\join\aMFun$, then it is a pre-fixed point of $\MFun$ as well, because $\MFun(z) \order \MFun(z)\join\aMFun(z) \order z$; 
this will be crucial for the following construction. 

We can now define the bounded rational fixed point: 
\begin{defi}\label{def:brfp}
Let $\fun{\MFun}{\lattice}{\lattice}$ be finitary and  $\fun{\aMFun}{\lattice}{\lattice}$ be monotone.
The \emph{rational fixed point bounded by $\aMFun$}, $\brfp{\MFun}{\aMFun}$ is defined by 
\[ \brfp{\MFun}{\aMFun} = \rfp \FMeet{\MFun}{\lfp(\MFun\join\aMFun)} \]
\end{defi}
In other words, $\brfp{\MFun}{\aMFun}$ is the least upper bound of all compact elements \emph{below} the least fixed point of $\MFun\join\aMFun$, that is, 
\[ \brfp{\MFun}{\aMFun} = \lub \{ x \in \Compact{\lattice} \mid x\order\MFun(x),\, x\order \lfp(\MFun\join\aMFun) \} \]
To see \EZ{that} $\brfp{\MFun}{\aMFun}$ is well-defined, that is, it is indeed a fixed point of $\MFun$, we have the following propositions:
\begin{prop}\label{prop:fmeet-finitary}
If $\fun{\MFun}{\lattice}{\lattice}$ is finitary, then, for all $z\in\lattice$, $\FMeet{\MFun}{z}$ is finitary as well.
\end{prop}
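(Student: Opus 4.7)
The plan is to unfold the definition of finitary and reduce the statement to a distributivity law of binary meets over directed joins, which will then follow from the algebraicity of $\lattice$.

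Concretely, given a directed $A \subseteq \lattice$, I need to show that
\[ \FMeet{\MFun}{z}(\lub A) = \MFun(\lub A) \meet z \]
equals $\lub \img{\FMeet{\MFun}{z}}(A) = \lub \{ \MFun(x) \meet z \mid x \in A \}$. Since $\MFun$ is finitary, $\MFun(\lub A) = \lub \img{\MFun}(A)$, so the statement reduces to showing
\[ \bigl(\lub \img{\MFun}(A)\bigr) \meet z = \lub \{ y \meet z \mid y \in \img{\MFun}(A) \}. \]
Moreover, since $\MFun$ is finitary it is monotone, hence $\img{\MFun}(A)$ is itself directed. Thus it suffices to prove the general distributivity law: for any directed $B \subseteq \lattice$ and any $z \in \lattice$,
\[ (\lub B) \meet z = \lub \{ b \meet z \mid b \in B \}. \]

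The $\geqslant$ direction is immediate by monotonicity of $\meet$. For the $\leqslant$ direction I would use algebraicity: writing $w = (\lub B) \meet z$ as the join $w = \lub \{ k \in \Compact{\lattice} \mid k \order w \}$, it is enough to show $k \order \lub \{ b \meet z \mid b \in B \}$ for each compact $k \order w$. From $k \order w$ we get $k \order \lub B$ and $k \order z$; since $k$ is compact and $B$ is directed, $k \order \lub B$ yields a finite subset $B_0 \subseteq B$ with $k \order \lub B_0$, and directedness gives $b \in B$ with $\lub B_0 \order b$, so $k \order b$. Combined with $k \order z$ this gives $k \order b \meet z \order \lub \{ b' \meet z \mid b' \in B \}$, as required.

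The one subtle point to pay attention to will be the $\leqslant$ inequality, since such distributivity fails in arbitrary complete lattices; the argument crucially exploits that $\lattice$ is algebraic, via both the compact approximation of $w$ and the defining property of compact elements against directed joins. Once this step is in place, the rest of the proof is purely mechanical bookkeeping, unfolding the definition of $\FMeet{\MFun}{z}$ and pushing the meet with $z$ through the finitary join preservation of $\MFun$.
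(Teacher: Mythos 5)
Your proof is correct and follows essentially the same route as the paper's: monotonicity gives the easy inequality, and the other is obtained by algebraicity (reducing to compact elements below $\MFun(\lub A)\meet z$) together with compactness against the directed join and directedness to find a single witness. The only cosmetic difference is that you package the key step as a general distributivity law of binary meets over directed joins and apply compactness to the directed set $\img{\MFun}(A)$, whereas the paper applies it to extract a finite subset of $A$ directly; the two are interchangeable.
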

\begin{proof}
Let $D\subseteq\lattice$ be a directed set. 
Since $\MFun$ is finitary, it is monotone, hence $\FMeet{\MFun}{z}$ is monotone as well, therefore we get 
$\lub (\img{\MFun}(D)\meet z) \order \MFun\left(\lub D\right) \meet z$. 
To prove the other inequality, it is enough to show that, for any compact element $y\order \MFun\left(\lub D\right) \meet z$, $y\order \lub (\img{\MFun}(D)\meet z)$, because the lattice is algebraic. 
Since $\MFun$ is finitary, we have $\MFun\left(\lub D\right) = \lub \img{\MFun}(D)$. 
We know that $y\order z$ and $y\order \MFun\left(\lub D\right) = \lub \img{\MFun}(D)$ and, since $y$ is compact, there is a finite subset $W\subseteq D$ such that $y\order \lub \img{\MFun}(W)$. 
Since $D$ is directed and $W$ is finite, there is $w\in D$ such that $\lub W\order w$, hence $\lub\img{\MFun}(W) \order \MFun(w) \order \lub \img{\MFun}(D)$, because $\MFun$ is monotone and $w\in D$. 
Therefore, we get $y \order \MFun(w)\meet z \order \lub (\img{\MFun}(D) \meet z)$, as needed. 
\end{proof}

\begin{prop}\label{prop:brfp-wf}
Let $\fun{\MFun}{\lattice}{\lattice}$ be finitary and  $\fun{\aMFun}{\lattice}{\lattice}$ be monotone, then 
$\brfp{\MFun}{\aMFun}$ is a fixed point of $\MFun$. 
\end{prop}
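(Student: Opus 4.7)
The plan is to unfold the definition $\brfp{\MFun}{\aMFun} = \rfp \FMeet{\MFun}{z}$ with $z = \lfp(\MFun\join\aMFun)$, and then argue that the $\meet z$ is redundant when applied to $\brfp{\MFun}{\aMFun}$ itself, so that the fixed-point equation for $\FMeet{\MFun}{z}$ collapses into the fixed-point equation for $\MFun$.

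First I would invoke \refToProp{fmeet-finitary} to ensure that $\FMeet{\MFun}{z}$ is finitary, so that \refToThm{rfp} applies and $\brfp{\MFun}{\aMFun} = \rfp\FMeet{\MFun}{z}$ is a fixed point of $\FMeet{\MFun}{z}$. This gives the equation
\[
 \brfp{\MFun}{\aMFun} \;=\; \MFun\bigl(\brfp{\MFun}{\aMFun}\bigr) \meet z,
\]
which in particular yields the bound $\brfp{\MFun}{\aMFun} \order z$.

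The key observation is then that $z = \lfp(\MFun\join\aMFun)$, being a fixed point of $\MFun\join\aMFun$, is in particular a pre-fixed point of $\MFun$, since $\MFun(z) \order \MFun(z)\join\aMFun(z) = z$ (this is exactly the remark made just before \refToDef{brfp}). By monotonicity of $\MFun$ and the bound $\brfp{\MFun}{\aMFun} \order z$, we therefore get $\MFun(\brfp{\MFun}{\aMFun}) \order \MFun(z) \order z$. Hence taking the meet with $z$ changes nothing: $\MFun(\brfp{\MFun}{\aMFun}) \meet z = \MFun(\brfp{\MFun}{\aMFun})$. Combining this with the displayed equation above gives $\brfp{\MFun}{\aMFun} = \MFun(\brfp{\MFun}{\aMFun})$, as required.

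There is really no hard step here: the whole proof is a two-line manipulation once \refToProp{fmeet-finitary} and \refToThm{rfp} have done the heavy lifting. The only conceptual point to be careful about is using the pre-fixed-point property of $\lfp(\MFun\join\aMFun)$ with respect to $\MFun$ (not $\MFun\join\aMFun$) to justify dropping the $\meet z$; this is precisely the motivation for introducing the auxiliary function $\MFun\join\aMFun$ in the construction.
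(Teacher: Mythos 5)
Your proof is correct and follows exactly the same route as the paper's: apply \refToProp{fmeet-finitary} and \refToThm{rfp} to get $\brfp{\MFun}{\aMFun} = \MFun(\brfp{\MFun}{\aMFun})\meet z$, then use the fact that $z=\lfp(\MFun\join\aMFun)$ is a pre-fixed point of $\MFun$ together with monotonicity to show the meet with $z$ is redundant. Nothing to add.
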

\begin{proof}
Set $z=\lfp(\MFun\join\aMFun)$ and note that $\MFun(z)\order \MFun(z)\join\aMFun(z) = z$, as $z$ is a fixed point of $\MFun\join\aMFun$. 
By \refToProp{fmeet-finitary}, $\FMeet{\MFun}{z}$ is finitary, hence, by \refToDef{brfp} and \refToThm{rfp} we have $\brfp{\MFun}{\aMFun} = \MFun(\brfp{\MFun}{\aMFun})\meet z$,
and from this we derive  $\brfp{\MFun}{\aMFun} \order z$ and $\MFun(\brfp{\MFun}{\aMFun}) \order \MFun(z) \order z$. 
Therefore, we get $\brfp{\MFun}{\aMFun} = \MFun(\brfp{\MFun}{\aMFun}) \meet z = \MFun(\brfp{\MFun}{\aMFun})$, as needed. 
\end{proof}

In \cite{AnconaDZ17esop,Dagnino19} \EZ{the} authors show that the least and the greatest fixed point are instances of the bounded fixed point. 
Analogously, we show that the least and the rational fixed point are instances of the rational fixed point bounded by a function $\aMFun$, that is, they can be recovered for specific choices of $\aMFun$. 
In the following, for all $z\in\lattice$, we write $\fun{K_z}{\lattice}{\lattice}$ for the constant function, that is, $K_z(x) = z$, for all $x\in \lattice$. 
\begin{prop}\label{prop:brfp-lfp-rfp}
Let $\fun{\MFun}{\lattice}{\lattice}$ be a finitary function, then the following hold:
\begin{enumerate}[beginpenalty=99,midpenalty=99]
\item\label{prop:brfp-lfp-rfp:1} $\lfp\MFun = \brfp{\MFun}{K_\bot}$, and 
\item\label{prop:brfp-lfp-rfp:2} $\rfp\MFun = \brfp{\MFun}{K_\top}$.
\end{enumerate}
\end{prop}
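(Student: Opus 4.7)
The plan is to unfold \refToDef{brfp} in each case and observe that the two choices $\aMFun=K_\bot$ and $\aMFun=K_\top$ collapse the construction to, respectively, $\lfp\MFun$ and $\rfp\MFun$.

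For item \eqref{prop:brfp-lfp-rfp:1}, I first note that $\MFun\join K_\bot = \MFun$, since $\MFun(x)\join\bot = \MFun(x)$ for all $x\in\lattice$. Hence $\lfp(\MFun\join K_\bot) = \lfp\MFun$, and the definition gives $\brfp{\MFun}{K_\bot} = \rfp\FMeet{\MFun}{\lfp\MFun}$. Unfolding the rational fixed point, every compact element contributing to this join lies below $\lfp\MFun$, hence $\brfp{\MFun}{K_\bot}\order\lfp\MFun$. On the other hand, by \refToProp{brfp-wf}, $\brfp{\MFun}{K_\bot}$ is a fixed point of $\MFun$, so by minimality of $\lfp\MFun$ the reverse inequality holds, giving equality.

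For item \eqref{prop:brfp-lfp-rfp:2}, I observe that $\MFun\join K_\top = K_\top$, since $\MFun(x)\join\top = \top$. The least (indeed, only) fixed point of $K_\top$ is $\top$ itself, so $\lfp(\MFun\join K_\top) = \top$. Consequently $\FMeet{\MFun}{\top}(x) = \MFun(x)\meet\top = \MFun(x)$, that is, $\FMeet{\MFun}{\top} = \MFun$, and therefore $\brfp{\MFun}{K_\top} = \rfp\MFun$ directly from \refToDef{brfp}.

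Neither direction presents a genuine difficulty: both reductions are purely symbolic once one checks that the join in the first step simplifies as claimed. If anything, the one place requiring slight care is recording explicitly, in item \eqref{prop:brfp-lfp-rfp:1}, that the bounded rational fixed point lies below $\lfp(\MFun\join\aMFun)$, since this is used to invoke minimality of $\lfp\MFun$; this is however immediate from the form $\brfp{\MFun}{\aMFun} = \lub\{x\in\Compact{\lattice}\mid x\order\MFun(x),\ x\order\lfp(\MFun\join\aMFun)\}$ already spelled out right after \refToDef{brfp}.
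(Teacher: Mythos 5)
Your proposal is correct and follows essentially the same route as the paper: both reduce $\lfp(\MFun\join K_\bot)$ to $\lfp\MFun$ and $\lfp(\MFun\join K_\top)$ to $\top$, then conclude item \eqref{prop:brfp-lfp-rfp:2} immediately and item \eqref{prop:brfp-lfp-rfp:1} by combining $\brfp{\MFun}{K_\bot}\order\lfp\MFun$ with minimality of $\lfp\MFun$ over (pre-)fixed points of $\MFun$. The only cosmetic difference is that for the inequality $\brfp{\MFun}{K_\bot}\order\lfp\MFun$ you use the explicit description as a join of compact elements below $\lfp\MFun$, whereas the paper reads it off the fixed-point equation $\brfp{\MFun}{K_\bot}=\MFun(\brfp{\MFun}{K_\bot})\meet\lfp\MFun$; both are immediate.
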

\begin{proof}
To prove \ref{prop:brfp-lfp-rfp:1}, note that $\lfp\MFun \order \brfp{\MFun}{K_\bot}$, as $\brfp{\MFun}{K_\bot}$ is a pre-fixed point, and 
$\brfp{\MFun}{K_\bot}\order\lfp\MFun$, as $\lfp(\MFun\join K_\bot) = \lfp\MFun$ and $\brfp{\MFun}{K_\bot} = \MFun(\brfp{\MFun}{K_\bot})\meet \lfp\MFun \order \lfp\MFun$. 
To prove \ref{prop:brfp-lfp-rfp:2}, note that $\lfp(\MFun\join K_\top) = \top$, hence we have $\FMeet{\MFun}{\lfp(\MFun\join K_\top)} = \MFun$, thus $\brfp{\MFun}{K_\top} = \rfp\MFun$, as needed. 
\end{proof}

\subsection{Fixed point semantics} 
Let $\Pair{\is}{\cois}$ be an inference system with corules where $\is$ is finitary. 
We have two goals: first we want to justify the proof-theoretic characterisation provided at the beginning of this section and, 
second, we want to prove that the rational interpretation generated by corules is indeed an interpretation of the first inference system. 

To get the proof-theoretic characterisation, it is enough to observe the following property: 
\begin{prop}\label{prop:pt-corules}
Let $X\subseteq\universe$, then 
$\tr\in\RegT{\is\Restrict{X}}$ iff $\tr\in\RegT{\is}$ and, for all $\alpha\in\TrNodes{\tr}$, $\tr(\alpha)\in X$. 
\end{prop}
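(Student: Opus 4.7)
The proof is essentially a direct unfolding of the definitions of $\is\Restrict{X}$, proof tree in an inference system, and regular tree. The key observation is that regularity ($\SubTr{\tr}$ being finite) is a property purely of the tree as a labelled graph, independent of which inference system we view it in; so what needs to be checked is only the ``proof tree'' condition on both sides.

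For the left-to-right direction, I would proceed as follows. Suppose $\tr \in \RegT{\is\Restrict{X}}$. Since $\is\Restrict{X} \subseteq \is$ by definition, every rule used at a node of $\tr$ is also a rule of $\is$, so $\tr$ is a proof tree in $\is$, and it is regular by hypothesis. Moreover, by definition of $\is\Restrict{X}$, the conclusion of each rule in $\is\Restrict{X}$ belongs to $X$; since for each $\alpha\in\TrNodes{\tr}$ the pair $\RulePair{\rtdir(\chltr{\tr}{\alpha})}{\tr(\alpha)}$ is a rule of $\is\Restrict{X}$, we get $\tr(\alpha)\in X$.

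For the right-to-left direction, suppose $\tr\in\RegT{\is}$ and $\tr(\alpha)\in X$ for all $\alpha\in\TrNodes{\tr}$. For each node $\alpha$, we have $\RulePair{\rtdir(\chltr{\tr}{\alpha})}{\tr(\alpha)} \in \is$ by assumption, and since $\tr(\alpha)\in X$, this rule is retained in $\is\Restrict{X}$. Hence $\tr$ is a proof tree in $\is\Restrict{X}$, and it is still regular since the set $\SubTr{\tr}$ does not depend on the inference system.

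No significant obstacle is expected here: the statement is essentially a sanity check that the proof-theoretic notion of restriction and the syntactic one agree, and the argument is just a rewriting of definitions. The only mildly delicate point is to keep in mind that ``$\tr$ is a proof tree in $\is$'' is a pointwise condition at every node, so restricting the allowed set of conclusions amounts exactly to the labelling condition $\tr(\alpha)\in X$ for every $\alpha$.
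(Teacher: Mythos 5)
Your proof is correct and follows essentially the same route as the paper's (which is just a terser version of the same definition-unfolding: $\is\Restrict{X}\subseteq\is$ gives one direction, and the fact that every node of a proof tree is labelled by the conclusion of a rule gives the labelling condition and the converse). Your version merely spells out the pointwise check at each node more explicitly.
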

\begin{proof}
By definition we have $\is\Restrict{X}\subseteq \is$, hence $\RegT{\is\Restrict{X}} \subseteq \RegT{\is}$, and, 
all rules in $\is\Restrict{X}$ have conlcusion in $X$.
Then, the thesis is immediate since, by definition, all nodes of a proof tree are labelled by the conclusion of some rule. 
\end{proof}

Recall that we have described $\FlexReg{\is}{\cois}$ in proof-theoretic terms  as the set of judgements having a regular proof tree in $\is$, whose nodes all have a finite proof tree in $\is\cup\cois$.
Formally, we have the following corollary:
\begin{cor}\label{cor:pt-corules}
$\validReg{\Pair{\is}{\cois}}{\judg}$ iff there is $\tr\in\RegT{\is}$ such that $\rt(\tr) = \judg$ and, for all $\alpha\in\TrNodes{\tr}$, $\tr(\alpha)\in\Ind{\is\cup\cois}$. 
\end{cor}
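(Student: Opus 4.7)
The plan is to unfold the definition of $\FlexReg{\is}{\cois}$ and then apply \refToProp{pt-corules} directly, so the proof should be very short.

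First, I would observe that by \refToDef{reg-corules} we have
\[ \FlexReg{\is}{\cois} = \Reg{\is\Restrict{\Ind{\is\cup\cois}}}, \]
so $\validReg{\Pair{\is}{\cois}}{\judg}$ means precisely that $\judg \in \Reg{\is\Restrict{\Ind{\is\cup\cois}}}$. By \refToDef{rational-is}, this is equivalent to the existence of a regular proof tree $\tr \in \RegT{\is\Restrict{\Ind{\is\cup\cois}}}$ with $\rt(\tr) = \judg$.

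Second, I would apply \refToProp{pt-corules} instantiated with $X = \Ind{\is\cup\cois}$: a tree $\tr$ belongs to $\RegT{\is\Restrict{\Ind{\is\cup\cois}}}$ if and only if $\tr \in \RegT{\is}$ and, for all $\alpha \in \TrNodes{\tr}$, $\tr(\alpha) \in \Ind{\is\cup\cois}$. Combining these two equivalences yields the statement of the corollary.

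There is essentially no obstacle: all the work has been done in \refToProp{pt-corules}, which reduces the problem to noting that restricting the rules of $\is$ to those with conclusion in a given set $X$ precisely cuts out the trees whose nodes are all labelled in $X$. The only point worth double-checking is that a regular proof tree in $\is$ with every node in $X$ really is a proof tree in $\is\Restrict{X}$, since each rule used in $\tr$ has conclusion $\tr(\alpha)$ for some node $\alpha$, and hence is retained by the restriction; this is exactly what \refToProp{pt-corules} records.
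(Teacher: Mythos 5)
Your proof is correct and follows essentially the same route as the paper: both reduce the statement to \refToProp{pt-corules} instantiated with $X = \Ind{\is\cup\cois}$. The only cosmetic difference is that the paper passes through \refToThm{rational-sem} to identify $\Reg{\is\Restrict{X}}$ with $\rtdir(\RegT{\is\Restrict{X}})$, whereas you obtain that identification directly from \refToDef{rational-is}, which is equally valid.
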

\begin{proof}
Set $X=\Ind{\is\cup\cois}$. 
From \refToThm{rational-sem} and \refToDef{reg-corules}, we get 
$\FlexReg{\is}{\cois} = \Reg{\is\Restrict{X}} = \rtdir(\rfp\TInfOp{(\is\Restrict{X})}) = \rfp \InfOp{(\is\Restrict{X})} = \rtdir(\RegT{\is\Restrict{X}})$. 
Applying \refToProp{pt-corules} with $X = \Ind{\is\cup\cois}$,
we get the thesis.
\end{proof}

Towards the second goal, we show that the regular interpretation of $\Pair{\is}{\cois}$  coincides with the rational fixed point of $\InfOp{\is}$ bounded by $\InfOp{\cois}$ (see \refToDef{brfp}), which is an immediate consequence of the following proposition: 
\begin{prop} \label{prop:brfp-corules}
$\FlexReg{\is}{\cois} = \brfp{\InfOp{\is}}{\InfOp{\cois}}$. 
\end{prop}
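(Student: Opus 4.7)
The plan is to unfold the definitions on both sides and reduce the claim to two simple operator-level identities that hold in the power-set lattice. Concretely, I will show that the inference operator interacts with the set-theoretic constructions on inference systems (restriction and union) exactly as the lattice-theoretic constructions ($\FMeet{(-)}{z}$ and $\join$) do, and then invoke \refToThm{rational-sem}.

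The first key identity is that for any $X \subseteq \universe$, the operator $\InfOp{\is\Restrict{X}}$ coincides with $\FMeet{\InfOp{\is}}{X}$. This follows directly from the definition of $\is\Restrict{X}$ (keeping only rules with conclusion in $X$) and of $\InfOp{\is}$: a judgement $\judg$ belongs to $\InfOp{\is\Restrict{X}}(Y)$ iff $\judg \in X$ and there is a rule $\RulePair{\prem}{\judg} \in \is$ with $\prem \subseteq Y$, which is the same as $\judg \in \InfOp{\is}(Y) \cap X$. In the power-set lattice meet is intersection, so this gives $\InfOp{\is\Restrict{X}} = \FMeet{\InfOp{\is}}{X}$.

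The second identity is $\InfOp{\is \cup \cois} = \InfOp{\is} \join \InfOp{\cois}$, again by inspection of the definitions, since the rule set is a union and join is union in $\wp(\universe)$. As a consequence, $\Ind{\is \cup \cois} = \lfp \InfOp{\is \cup \cois} = \lfp(\InfOp{\is} \join \InfOp{\cois})$.

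Putting the pieces together, I chain the equalities:
\[
\FlexReg{\is}{\cois} \;=\; \Reg{\is\Restrict{\Ind{\is\cup\cois}}} \;=\; \rfp \InfOp{\is\Restrict{\Ind{\is\cup\cois}}} \;=\; \rfp \FMeet{\InfOp{\is}}{\Ind{\is\cup\cois}} \;=\; \rfp \FMeet{\InfOp{\is}}{\lfp(\InfOp{\is}\join\InfOp{\cois})} \;=\; \brfp{\InfOp{\is}}{\InfOp{\cois}},
\]
using \refToDef{reg-corules} for the first step, \refToThm{rational-sem} for the second (which applies since $\is$ and hence its restriction are finitary, so by \refToProp{finitary-is} the associated inference operator is finitary, as required), the two identities above for the third and fourth steps, and \refToDef{brfp} for the last. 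There is no real obstacle; the only point that deserves a brief remark is checking that $\is\Restrict{\Ind{\is\cup\cois}}$ is still finitary, which is immediate since restriction only removes rules.
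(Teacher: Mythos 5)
Your proof is correct and follows essentially the same route as the paper: both reduce the claim to the operator-level identities $\InfOp{\is\Restrict{X}} = \FMeet{(\InfOp{\is})}{X}$ and $\InfOp{\is\cup\cois} = \InfOp{\is}\cup\InfOp{\cois}$ and then chain the definitions through \refToThm{rational-sem}. The only difference is cosmetic: you verify the first identity directly from the definitions, whereas the paper cites \cite{Dagnino19} for it, and you additionally make explicit the (immediate) check that the restricted system remains finitary.
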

\begin{proof}
By \refToDef{reg-corules} and \refToThm{rational-sem}, we know that $\FlexReg{\is}{\cois} = \rfp\InfOp{(\is\Restrict{\Ind{\is\cup\cois}})}$. 
By \refToDef{brfp}, we have $\brfp{\InfOp{\is}}{\InfOp{\cois}} = \rfp \FMeet{(\InfOp{\is})}{\lfp(\InfOp{\is}\cup\InfOp{\cois})}$ and, by definition of the inference operator, we have $\InfOp{\is\cup\cois} = \InfOp{\is}\cup\InfOp{\cois}$, hence $\Ind{\is\cup\cois} = \lfp(\InfOp{\is}\cup\InfOp{\cois})$ and $\FMeet{(\InfOp{\is})}{\lfp(\InfOp{\is}\cup\InfOp{\cois})} = \FMeet{(\InfOp{\is})}{\Ind{\is\cup\cois}}$. 
Therefore, as proved in \cite{Dagnino19}, we have $\InfOp{(\is\Restrict{\Ind{\is\cup\cois}})} = \FMeet{(\InfOp{\is})}{\Ind{\is\cup\cois}}$, which implies the thesis. 
\end{proof}
Then, this proposition, together with \refToProp{brfp-wf}, in particular ensures that $\FlexReg{\is}{\cois}$ is indeed a fixed point of $\InfOp{\is}$, that is, an interpretation of $\is$. 

An important property of inference systems with corules is that standard interpretations (the inductive and the coinductive one) are particular cases. 
Analogously, the inductive and the regular interpretations are particular cases of the regular interpretation generated by corules. 
Let us denote by $\is_\universe$ the inference system containing one axiom for each $\judg\in\universe$, that is, 
$\RulePair{\prem}{\judg}\in\is_\universe$ iff $\prem = \emptyset$. 
We have the following proposition:
\begin{prop} \label{prop:particular-cases}
Let $\is$ be an inference system, then 
$\Ind{\is} = \FlexReg{\is}{\emptyset}$ and $\Reg{\is} = \FlexReg{\is}{\is_\universe}$. 
\end{prop}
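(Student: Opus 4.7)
The plan is to reduce both equalities to the lattice-theoretic results already established, avoiding any direct manipulation of proof trees. By Proposition \ref{prop:brfp-corules}, we have $\FlexReg{\is}{\cois} = \brfp{\InfOp{\is}}{\InfOp{\cois}}$ for every inference system with corules $\Pair{\is}{\cois}$, so the problem reduces to identifying the two bounding operators $\InfOp{\emptyset}$ and $\InfOp{\is_\universe}$, and then invoking Proposition \ref{prop:brfp-lfp-rfp} (together with Theorem \ref{thm:rational-sem} for the regular case).

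First I would compute $\InfOp{\emptyset}$. Unfolding the definition of the inference operator, $\InfOp{\emptyset}(X) = \{ \judg \in \universe \mid \exists \prem \subseteq X.\ \RulePair{\prem}{\judg} \in \emptyset\} = \emptyset$ for every $X \subseteq \universe$. Hence $\InfOp{\emptyset}$ is the constant function $K_\bot$ on the lattice $\wp(\universe)$. Combining Proposition \ref{prop:brfp-corules} with part \ref{prop:brfp-lfp-rfp:1} of Proposition \ref{prop:brfp-lfp-rfp}, we get
\[ \FlexReg{\is}{\emptyset} = \brfp{\InfOp{\is}}{\InfOp{\emptyset}} = \brfp{\InfOp{\is}}{K_\bot} = \lfp\InfOp{\is} = \Ind{\is}, \]
where the last equality is the standard model-theoretic characterisation of the inductive interpretation.

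Next I would compute $\InfOp{\is_\universe}$. By definition of $\is_\universe$, every judgement $\judg \in \universe$ is the conclusion of an axiom, so $\judg \in \InfOp{\is_\universe}(X)$ for all $X$ (taking $\prem = \emptyset \subseteq X$). Therefore $\InfOp{\is_\universe}(X) = \universe = \top$ for every $X$, i.e.\ $\InfOp{\is_\universe} = K_\top$. Then, using Proposition \ref{prop:brfp-corules}, part \ref{prop:brfp-lfp-rfp:2} of Proposition \ref{prop:brfp-lfp-rfp}, and Theorem \ref{thm:rational-sem}, we obtain
\[ \FlexReg{\is}{\is_\universe} = \brfp{\InfOp{\is}}{\InfOp{\is_\universe}} = \brfp{\InfOp{\is}}{K_\top} = \rfp\InfOp{\is} = \Reg{\is}. \]

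There is no real obstacle here: once Propositions \ref{prop:brfp-corules} and \ref{prop:brfp-lfp-rfp} are in place, the proof is essentially an identification of $\InfOp{\cois}$ with the constant functions $K_\bot$ and $K_\top$ in the two corner cases. The only point worth being a bit careful about is that $\InfOp{\is_\universe}$ really is the constant $\top$, which requires observing that each $\judg$ has an \emph{axiom} (empty premises) in $\is_\universe$, so no restriction on $X$ is ever imposed.
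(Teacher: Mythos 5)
Your proof is correct and follows exactly the route the paper takes: reduce via Proposition \ref{prop:brfp-corules} to the bounded rational fixed point, observe that $\InfOp{\emptyset}$ and $\InfOp{\is_\universe}$ are the constant functions $K_\bot$ and $K_\top$, and conclude by Proposition \ref{prop:brfp-lfp-rfp}. The only difference is that you spell out the computation of the two inference operators in more detail than the paper's one-line proof.
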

\begin{proof}
It follows from \refToProp{brfp-lfp-rfp}, because $\FlexReg{\is}{\cois} = \brfp{\InfOp{\is}}{\InfOp{\cois}}$, by \refToProp{brfp-corules}, and we have $\InfOp{\is_\universe}(X) = \universe$ and $\InfOp{\emptyset}(X) = \emptyset$, for all $X\subseteq \universe$. 
\end{proof}
In other words, when the set of corules is empty, we allow only rules with conclusion in $\Ind{\is\cup\emptyset} = \Ind{\is}$, hence we cannot derive anything outside $\Ind{\is}$, 
and, on the other hand, when the set of corules is $\is_\universe$, we do not remove any rule, because $\Ind{\is\cup\is_\universe} = \universe$, thus we get exactly the regular interpretation of $\is$. 

\subsection{Cycle detection for corules}
As the standard regular interpretation, also the regular interpretation of an inference system with corules has a sound and complete algorithm \EZ{to find a derivation for a judgment, if any}, and may not terminate otherwise. 

Let us assume an inference system with corules $\Pair{\is}{\cois}$. 
Since its regular interpretation is defined as the regular interpretation of $\is\Restrict{\Ind{\is\cup\cois}}$, which is the inference system obtained from $\is$ by keeping only rules with conclusion in $\Ind{\is\cup\cois}$, 
we could get an inductive characterisation of $\FlexReg{\is}{\cois}$ by applying the construction in  \refToDef{cycle-is} to the inference system $\is\Restrict{\Ind{\is\cup\cois}}$. 
This provides us with a sound and complete algorithm to \EZ{find a derivation for a judgement which} belongs to $\FlexReg{\is}{\cois}$, which works the same way as the one introduced in \refToSect{cycle}, but, in addition, 
each time we apply the rule \rn{unfold} with $\RulePair{\prem}{\judg}\in\is$, we have to check that $\judg\in\Ind{\is\cup\cois}$. 
However, we will see that  this additional check is necessary only to apply circular hypotheses, thus defining a cleaner procedure. 

To this end we construct the inference system $\Loopcois{\is}{\cois}$ as follows: 
\begin{defi} \label{def:cycle-cois}
The inference system $\Loopcois{\is}{\cois}$ consists of the following rules: 
\[
\MetaRule{b-hp}{}{ \LoopJ{\LoopHp}{\judg}}{\judg\in\LoopHp \\ \judg \in \Ind{\is\cup\cois}}\BigSpace 
\MetaRule{b-unfold}{ 
  \LoopJ{\LoopHp\cup\{\judg\}}{\judg_1} 
  \Space\ldots\Space
  \LoopJ{\LoopHp\cup\{\judg\}}{\judg_n}
}{ \LoopJ{\LoopHp}{\judg} }
{\RulePair{\{\judg_1,\ldots,\judg_n\}}{\judg}\in\is}
\]
\end{defi}
This definition is basically the same as \refToDef{cycle-is}, except the additional side condition in rule \rn{b-hp} $\judg\in\Ind{\is\cup\cois}$, which enforces the additional check. 
We have the following fundamental properties:

\begin{prop}\label{prop:loop-bound}
If $\validInd{\Loopcois{\is}{\cois}}{\LoopJ{\LoopHp}{\judg}}$ then $\judg\in\Ind{\is\cup\cois}$.
\end{prop}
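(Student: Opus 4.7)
The plan is to proceed by straightforward induction on the finite derivation of $\LoopJ{\LoopHp}{\judg}$ in $\Loopcois{\is}{\cois}$, since $\validInd{\Loopcois{\is}{\cois}}{\cdot}$ means exactly that such a finite proof tree exists. There are only two rule schemes to consider, so the case analysis is short.

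In the base case, the last rule applied is \rn{b-hp}. Here the side condition of the rule, as given in \refToDef{cycle-cois}, explicitly requires $\judg\in\Ind{\is\cup\cois}$, which is the conclusion we want. This is precisely the reason for adding the extra check in \rn{b-hp} compared with the rule \rn{hp} of \refToDef{cycle-is}.

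In the inductive case, the last rule applied is \rn{b-unfold}, so there is a rule $\RulePair{\{\judg_1,\ldots,\judg_n\}}{\judg}\in\is$ and the premises $\LoopJ{\LoopHp\cup\{\judg\}}{\judg_i}$ are derivable in $\Loopcois{\is}{\cois}$ for all $i\in 1..n$. By the induction hypothesis, $\judg_i\in\Ind{\is\cup\cois}$ for all $i$. Since $\is\subseteq\is\cup\cois$, the rule above belongs to $\is\cup\cois$ as well, and $\Ind{\is\cup\cois}$ is closed under $\InfOp{\is\cup\cois}$ (being its least fixed point), hence $\judg\in\Ind{\is\cup\cois}$.

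I do not expect any real obstacle here: the statement has been engineered exactly so that the side condition of \rn{b-hp} carries the information needed for the base case, while the inductive step only uses monotonicity/closure of the inductive interpretation under rules of $\is$. The only thing to keep in mind is that the induction must be on the proof tree in $\Loopcois{\is}{\cois}$ (equivalently, on the derivation of $\InfOp{\Loopcois{\is}{\cois}}$), not on rules of $\is$ or on $\LoopHp$.
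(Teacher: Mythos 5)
Your proof is correct and follows exactly the paper's argument: induction on the rules of $\Loopcois{\is}{\cois}$, with the base case discharged by the side condition of \rn{b-hp} and the inductive case by closure of $\Ind{\is\cup\cois}$ under the rules of $\is$. No differences worth noting.
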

\begin{proof}
By induction on rules of $\Loopcois{\is}{\cois}$: 
the case for rule \rn{b-hp} is trivial, 
for the rule \rn{b-unfold}, by \refToDef{cycle-cois}, we have a rule $\RulePair{\{\judg_1,\ldots,\judg_n\}}{\judg}\in\is$ and, by induction hypothesis, we know that $\judg_k\in\Ind{\is\cup\cois}$, for all $k\in 1..n$, hence $\judg\in\Ind{\is\cup\cois}$, as $\Ind{\is\cup\cois}$ is closed with respect to $\is$. 
\end{proof}

\begin{prop} \label{prop:loop-cois}
If $\LoopHp\subseteq\Ind{\is\cup\cois}$, then 
$\validInd{\Loopcois{\is}{\cois}}{\LoopJ{\LoopHp}{\judg}}$ iff $\validInd{\Loopis{\is\Restrict{\Ind{\is\cup\cois}}}}{\LoopJ{\LoopHp}{\judg}}$. 
\end{prop}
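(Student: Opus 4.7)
The plan is to prove both implications by a straightforward induction on rules, exploiting the close parallel between \refToDef{cycle-is} (applied to $\is\Restrict{\Ind{\is\cup\cois}}$) and \refToDef{cycle-cois}. The key observation is that the two systems differ only in where the check $\judg \in \Ind{\is\cup\cois}$ is enforced: the restricted system $\is\Restrict{\Ind{\is\cup\cois}}$ enforces it on every rule application of \rn{unfold}, whereas $\Loopcois{\is}{\cois}$ enforces it only on the axiom \rn{b-hp}. The hypothesis $\LoopHp \subseteq \Ind{\is\cup\cois}$ is what allows one to transfer the check from one place to the other.

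For the left-to-right direction, I would induct on the derivation in $\Loopcois{\is}{\cois}$. The case of \rn{b-hp} is immediate, since $\judg \in \LoopHp$ directly yields a proof by rule \rn{hp} in $\Loopis{\is\Restrict{\Ind{\is\cup\cois}}}$. For the \rn{b-unfold} case, with $\RulePair{\{\judg_1,\ldots,\judg_n\}}{\judg} \in \is$, I would invoke \refToProp{loop-bound} on the overall premise to conclude $\judg \in \Ind{\is\cup\cois}$; hence the rule lies in $\is\Restrict{\Ind{\is\cup\cois}}$, and combining this with the inductive hypotheses on the premises gives a derivation via rule \rn{unfold}.

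For the right-to-left direction, I would induct on the derivation in $\Loopis{\is\Restrict{\Ind{\is\cup\cois}}}$, strengthening the statement so that the induction applies uniformly to all $\LoopHp \subseteq \Ind{\is\cup\cois}$. In the base case \rn{hp} we have $\judg \in \LoopHp \subseteq \Ind{\is\cup\cois}$, so the side condition of \rn{b-hp} is satisfied. In the \rn{unfold} case, the very fact that the rule is in $\is\Restrict{\Ind{\is\cup\cois}}$ guarantees $\judg \in \Ind{\is\cup\cois}$, so $\LoopHp \cup \{\judg\} \subseteq \Ind{\is\cup\cois}$ still holds and the induction hypothesis applies to each premise, letting us conclude by \rn{b-unfold}.

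I do not expect any real obstacle: the only subtle point is to maintain the invariant $\LoopHp \subseteq \Ind{\is\cup\cois}$ when passing to the enlarged set $\LoopHp \cup \{\judg\}$ in the \rn{unfold}/\rn{b-unfold} step. In the left-to-right direction this is ensured by \refToProp{loop-bound}, and in the right-to-left one by the very definition of $\is\Restrict{\Ind{\is\cup\cois}}$.
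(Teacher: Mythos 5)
Your proposal is correct and follows essentially the same route as the paper: induction on rules in each direction, using \refToProp{loop-bound} in the \rn{b-unfold} case to place the rule in $\is\Restrict{\Ind{\is\cup\cois}}$, and using the restriction itself in the \rn{unfold} case to preserve the invariant $\LoopHp\cup\{\judg\}\subseteq\Ind{\is\cup\cois}$. No gaps.
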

\begin{proof}
The proof of the left-to-right implication is by induction on rules in $\Loopcois{\is}{\cois}$. 
\begin{description}
\item [\rn{b-hp}] By hypothesis $\judg\in\Ind{\is\cup\cois}$, then the thesis follows by rule \rn{hp}. 
\item [\rn{b-unfold}] By \refToDef{cycle-cois}, we have a rule $\RulePair{\{\judg_1,\ldots,\judg_n\}}{\judg} \in \is$ and by \refToProp{loop-bound} we have $\judg\in\Ind{\is\cup\cois}$, hence $\LoopHp\cup\{\judg\}\subseteq \Ind{\is\cup\cois}$ and $\RulePair{\{\judg_1,\ldots,\judg_n\}}{\judg} \in \is\Restrict{\Ind{\is\cup\cois}}$. 
Therefore, by induction hypothesis, we get $\validInd{\Loopis{\is\Restrict{\Ind{\is\cup\cois}}}}{\LoopJ{\LoopHp\cup\{\judg\}}{\judg_k}}$, for all $k\in 1..n$, then the thesis follows by rule \rn{unfold}. 
Therefore, we get the thesis applying rule \rn{unfold}. 
\end{description}
The proof of the right-to-left implication is by induction on rules in $\Loopis{\is\Restrict{\Ind{\is\cup\cois}}}$.
\begin{description}
\item [\rn{hp}] Immediate by rule \rn{b-hp}, as $\judg\in\LoopHp\subseteq\Ind{\is\cup\cois}$. 
\item [\rn{unfold}] By \refToDef{cycle-is}, we have a rule  $\RulePair{\{\judg_1,\ldots,\judg_n\}}{\judg}\in \is\Restrict{\Ind{\is\cup\cois}}\subseteq \is$, hence $\judg\in\Ind{\is\cup\cois}$, and so $\LoopHp\cup\{\judg\}\subseteq\Ind{\is\cup\cois}$. 
Therefore, by induction hypothesis, we get $\validInd{\Loopcois{\is}{\cois}}{\LoopJ{\LoopHp\cup\{\judg\}}{\judg_k}}$, for all $k\in 1..n$, then the thesis follows by rule \rn{b-unfold}. 
\qedhere
\end{description}
\end{proof}

Then, we get the following result, proving that the inductive characterisation is correct, that is, sound and complete, with respect to the regular interpretation of $\Pair{\is}{\cois}$. 
\begin{cor} \label{cor:loop-cois}
$\validInd{\Loopcois{\is}{\cois}}{\LoopJ{\emptyset}{\judg}}$ iff $\validReg{\Pair{\is}{\cois}}{\judg}$. 
\end{cor}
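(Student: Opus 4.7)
The plan is to assemble this corollary as a short chain of equivalences, using the three main ingredients already established in the paper: \refToProp{loop-cois}, \refToThm{cycle-is} applied to an appropriate restricted system, and \refToDef{reg-corules}. No new machinery should be needed.

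First, I would instantiate \refToProp{loop-cois} with $\LoopHp = \emptyset$. The hypothesis $\emptyset \subseteq \Ind{\is\cup\cois}$ is trivial, so we immediately obtain
\[ \validInd{\Loopcois{\is}{\cois}}{\LoopJ{\emptyset}{\judg}} \iff \validInd{\Loopis{\is\Restrict{\Ind{\is\cup\cois}}}}{\LoopJ{\emptyset}{\judg}}. \]
This is the key step that strips off the additional side condition $\judg \in \Ind{\is\cup\cois}$ from rule \rn{b-hp} and re-expresses it as working inside the restricted system $\is\Restrict{\Ind{\is\cup\cois}}$.

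Next, since $\is$ is finitary, so is its restriction $\is\Restrict{\Ind{\is\cup\cois}}$, and therefore \refToThm{cycle-is} can be applied to this restricted inference system, yielding
\[ \validInd{\Loopis{\is\Restrict{\Ind{\is\cup\cois}}}}{\LoopJ{\emptyset}{\judg}} \iff \validReg{\is\Restrict{\Ind{\is\cup\cois}}}{\judg}. \]
Finally, \refToDef{reg-corules} gives $\FlexReg{\is}{\cois} = \Reg{\is\Restrict{\Ind{\is\cup\cois}}}$, i.e.\ $\validReg{\is\Restrict{\Ind{\is\cup\cois}}}{\judg}$ iff $\validReg{\Pair{\is}{\cois}}{\judg}$. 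Composing the three equivalences yields the corollary.

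There is no real obstacle here: all the conceptual work has already been done in \refToProp{loop-cois} (which handles the compatibility between the two cycle-detection systems) and in \refToThm{cycle-is} (which provides the inductive characterisation for the plain regular case). The corollary is essentially the observation that these two results compose cleanly at $\LoopHp = \emptyset$, where the boundedness condition vacuously holds. The proof is therefore a one-line chain of biconditionals citing these three facts.
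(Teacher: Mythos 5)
Your proof is correct and follows essentially the same route as the paper, which derives the corollary as an immediate consequence of \refToProp{loop-cois} (at $\LoopHp=\emptyset$, where boundedness is vacuous) and \refToThm{cycle-is} applied to the restricted system $\is\Restrict{\Ind{\is\cup\cois}}$, with \refToDef{reg-corules} supplying the final identification. Your version merely spells out the chain of biconditionals that the paper leaves implicit.
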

\begin{proof}
It is immediate by \refToProp{loop-cois} and \refToThm{cycle-is}. 
\end{proof}

The resulting algorithm behaves as follows: 
we start from a judgement $\judg$ with an empty set of circular \EZ{hypotheses}, 
then we try to build a regular derivation for $\judg$ using rules in $\is$, exactly the same way as for standard regular coinduction; 
but, this time, when we find a cycle, say for a judgement $\judg'$,  we trigger another procedure, which looks for a finite derivation  for $\judg'$ in $\Ind{\is\cup\cois}$. 

\FDComm{vedere se dire che nella semantica proof-theoretic basta un albero finito per un nodo preso da ogni ciclo} 

\subsection{Flexible regular reasoning}
We now adapt proof techniques presented in \refToSect{reasoning} to this generalised setting. 
For completeness proofs, in \cite{AnconaDZ17esop,Dagnino19}, the standard coinduction principle is extended to generalised inference systems, by adding an additional constraint, which takes into account corules. 
The regular coinduction principle (\refToProp{reg-coind}) can be smoothly extended to this generalised setting following the same strategy, as expressed in the next proposition. 
We call the resulting principle the \emph{bounded regular coinduction principle}.

\begin{prop}[Bounded regular coinduction] \label{prop:breg-coind} 
Let $\Spec \subseteq \universe$ be a set of judgements, then 
if, for all $\judg \in \Spec$, there is a finite set $X \subseteq \universe$ such that 
\begin{itemize}
\item $X\subseteq\Ind{\is\cup\cois}$, 
\item $X \subseteq \InfOp{\is}(X)$, and 
\item $\judg \in X$, 
\end{itemize} 
then, $\Spec \subseteq \FlexReg{\is}{\cois}$. 
\end{prop}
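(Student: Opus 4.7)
The plan is to reduce the statement to the lattice-theoretic proof principle \refToProp{below-rfp}, applied to the operator whose rational fixed point computes $\FlexReg{\is}{\cois}$.

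First, I would use \refToProp{brfp-corules} to rewrite the target inclusion as $\Spec\subseteq\brfp{\InfOp{\is}}{\InfOp{\cois}}$. By \refToDef{brfp} we have
\[ \brfp{\InfOp{\is}}{\InfOp{\cois}} = \rfp\,\FMeet{(\InfOp{\is})}{\lfp(\InfOp{\is}\join\InfOp{\cois})}, \]
and, since $\InfOp{\is\cup\cois} = \InfOp{\is}\join\InfOp{\cois}$, the bound coincides with $\Ind{\is\cup\cois}$. Hence $\brfp{\InfOp{\is}}{\InfOp{\cois}} = \rfp\,\FMeet{(\InfOp{\is})}{\Ind{\is\cup\cois}}$, and the operator $\FMeet{(\InfOp{\is})}{\Ind{\is\cup\cois}}$ is finitary by \refToProp{finitary-is} and \refToProp{fmeet-finitary}, so \refToProp{below-rfp} is applicable.

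Next, for each $\judg\in\Spec$ I would take the finite set $X$ provided by hypothesis and observe that it is a compact post-fixed point of $\FMeet{(\InfOp{\is})}{\Ind{\is\cup\cois}}$: compactness follows from finiteness in the power-set lattice, while the two inclusions $X\subseteq\InfOp{\is}(X)$ and $X\subseteq\Ind{\is\cup\cois}$ combine into the single condition
\[ X\subseteq \InfOp{\is}(X)\cap\Ind{\is\cup\cois} = \FMeet{(\InfOp{\is})}{\Ind{\is\cup\cois}}(X). \]
Applying \refToProp{below-rfp} to $X$ (viewed as a singleton family of compact post-fixed points) yields $X\subseteq \rfp\,\FMeet{(\InfOp{\is})}{\Ind{\is\cup\cois}} = \FlexReg{\is}{\cois}$, and since $\judg\in X$ we conclude $\judg\in\FlexReg{\is}{\cois}$.

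There is no real obstacle: the argument is essentially a transfer of \refToProp{reg-coind} through \refToProp{brfp-corules}, with the only subtlety being the reformulation of the two hypotheses as a single post-fixed point condition for the meet operator. In particular, monotonicity of $\InfOp{\is}$ is what makes the intersection closed, and finitariness of the meet operator (\refToProp{fmeet-finitary}) is what legitimates invoking the rational fixed point machinery.
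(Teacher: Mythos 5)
Your proposal is correct and follows essentially the same route as the paper: the paper's proof is exactly the one-line observation that $\FlexReg{\is}{\cois}$ is the rational fixed point of $\FMeet{(\InfOp{\is})}{\Ind{\is\cup\cois}}$ (by \refToProp{brfp-corules} and \refToDef{brfp}), so \refToProp{below-rfp} applies to the finite set $X$, whose two hypotheses combine into the single post-fixed point condition for the meet operator. You have merely spelled out the details the paper leaves implicit; the only inessential remark is the appeal to monotonicity for combining the two inclusions, which is just set-theoretic.
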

This proposition immediately follows from  \refToProp{below-rfp}, as 
$\FlexReg{\is}{\cois}$ is a rational fixed point by \refToProp{brfp-corules} and \refToDef{brfp}. 
The additional constraint $\Spec\subseteq\Ind{\is\cup\cois}$, named \emph{boundedness}, reflects the fact that, using corules, we are only allowed to build proof trees using judgements in $\Ind{\is\cup\cois}$. 
Note that, when $\cois = \is_\universe$, thus $\FlexReg{\is}{\cois} = \Reg{\is}$, the additional constraint is trivially true, because it requires $\Spec\subseteq \universe$, 
hence we recover the regular coinduction principle in \refToProp{reg-coind}. 

We illustrate this technique on our running example: the definition of $\minElem{x}{s}$, which should hold when $x$ is the minimum of the rational stream of integers $s$. 
We denote by $\minSpec$ the set of judgements $\minElem{x}{s}$ where $x$ is indeed the minimum of $s$. 
We prove, using \refToProp{breg-coind}, the following statement:
\begin{quote}
if $\minElem{x}{s}\in \minSpec$ then it has a regular derivation with corules
\end{quote}
\begin{proof}
Let $\minElem{x}{s}\in\minSpec$ and define $X$ as the set of judgements $\minElem{z}{r}\in\minSpec$ such that $r$ is a substream of $s$. 
Trivially $\minElem{x}{s}\in X$ and, since $s$ is rational, it has finitely many different substreams, hence $X$ is finite. 
The boundedness condition, that is, if $\minElem{z}{r}\in X$ then it has a finite proof tree using also the coaxioms,  is easy to check, because, 
if $y$ is the minimum of $r$, then $y$ occurs somewhere in $r$,
hence we can prove the thesis  by induction on the least position of $y$ in $r$.
In order to check that $X$ is consistent, consider $\minElem{z}{r}\in X$, with $r = y\cons r'$
Since $z$ is the minimum of $r$ and $r'$ is a substream of $r$, $z$ is a lower bound of $r'$, thus it has a minimum, say $y'$, and so $\minElem{y'}{r'}\in X$. 
To conclude, we have to show that $z = \min\{y,y'\}$. 
The inequality $z\le \min\{y,y'\}$ is trivial, 
for the other inequality, since $z$ belongs to $r$, we have two cases:
if $z = y$, then $\min\{y,y'\}\le z$, 
otherwise $z$ belongs to $r'$ and so $y'\le z$, thus $\min\{y,y'\}\le z$. 
\end{proof}

Differently from the standard coinductive interpretation, for the regular interpretation we have also defined a proof technique to show soundness (\refToProp{sound-ind}). 
Such a technique relies on the inductive characterisation of the regular interpretation. 
As also the regular interpretation of an inference system with corules has an inductive characterisation (\refToCor{loop-cois}), we can provide a proof technique to show soundness also in this generalised setting, which smoothly extends the one of standard regular coinduction. 

\begin{prop} \label{prop:sound-ind-corules}
Let $\Spec \subseteq \universe$ be a set of judgements, then, 
if there is a family \EZ{$(\Spec_\LoopHp)_{\LoopHp \in \finwp(\universe)}$} such that $\Spec_\LoopHp \subseteq \universe$ and $\Spec_\emptyset \subseteq  \Spec$, and, 
for all $\LoopHp \in \finwp(\universe)$, 
\begin{itemize}
\item $\LoopHp\cap\Ind{\is\cup\cois} \subseteq \Spec_\LoopHp$, and 
\item for all rules $\RulePair{\prem}{\judg} \in \is$, if $\prem \subseteq \Spec_{\LoopHp \cup\{\judg\}}$ then $\judg \in \Spec_\LoopHp$, 
\end{itemize}
then $\FlexReg{\is}{\cois}  \subseteq \Spec$. 
\end{prop}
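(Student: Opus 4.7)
The plan is to mimic the proof of \refToProp{sound-ind}, replacing the inductive characterisation $\Loopis{\is}$ by $\Loopcois{\is}{\cois}$, whose equivalence with $\FlexReg{\is}{\cois}$ is given by \refToCor{loop-cois}. Concretely, I will first establish by induction on derivations in $\Loopcois{\is}{\cois}$ the key lemma
\[ \validInd{\Loopcois{\is}{\cois}}{\LoopJ{\LoopHp}{\judg}} \;\Longrightarrow\; \judg \in \Spec_\LoopHp \]
for every $\LoopHp \in \finwp(\universe)$, and then specialise it at $\LoopHp = \emptyset$ to conclude.

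The induction has exactly two cases, one per rule shape in \refToDef{cycle-cois}. For the base case \rn{b-hp}, the side condition of the rule gives $\judg \in \LoopHp$ together with $\judg \in \Ind{\is\cup\cois}$, hence $\judg \in \LoopHp \cap \Ind{\is\cup\cois}$, and the first assumption of the proposition yields $\judg \in \Spec_\LoopHp$. For the step case \rn{b-unfold}, there is a rule $\RulePair{\{\judg_1,\ldots,\judg_n\}}{\judg} \in \is$ and finite derivations of $\LoopJ{\LoopHp\cup\{\judg\}}{\judg_i}$ for each $i$; the inductive hypothesis gives $\judg_i \in \Spec_{\LoopHp\cup\{\judg\}}$, and the closure assumption on $\Spec_{(-)}$ with respect to the rules of $\is$ delivers $\judg \in \Spec_\LoopHp$, as required.

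With the lemma in hand the conclusion is immediate: given $\judg \in \FlexReg{\is}{\cois}$, \refToCor{loop-cois} provides $\validInd{\Loopcois{\is}{\cois}}{\LoopJ{\emptyset}{\judg}}$, and the lemma yields $\judg \in \Spec_\emptyset$, which is contained in $\Spec$ by hypothesis.

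The only mildly delicate point, and the conceptual reason the statement is genuinely weaker than \refToProp{sound-ind}, is the matching between the relaxed first hypothesis on circular assumptions (now only $\LoopHp \cap \Ind{\is\cup\cois} \subseteq \Spec_\LoopHp$ instead of $\LoopHp \subseteq \Spec_\LoopHp$) and the additional side condition on rule \rn{b-hp} (namely $\judg \in \Ind{\is\cup\cois}$). This weakening is exactly what rule \rn{b-hp} permits and what is needed to close the base case; moreover it correctly specialises back to \refToProp{sound-ind} when $\cois = \is_\universe$, since in that case $\Ind{\is\cup\cois} = \universe$ and the two formulations coincide, consistently with \refToProp{particular-cases}.
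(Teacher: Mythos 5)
Your proof is correct and follows exactly the paper's argument: an induction on derivations in $\Loopcois{\is}{\cois}$ showing that $\validInd{\Loopcois{\is}{\cois}}{\LoopJ{\LoopHp}{\judg}}$ implies $\judg\in\Spec_\LoopHp$, followed by an appeal to \refToCor{loop-cois} and the inclusion $\Spec_\emptyset\subseteq\Spec$. The paper states this only as a ``straightforward induction,'' so your explicit case analysis (and the remark on how the relaxed hypothesis matches the side condition of rule \rn{b-hp}) is simply a fuller write-up of the same proof.
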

\begin{proof}
By a straightforward induction on rules in $\Loopcois{\is}{\cois}$, we get that if $\validInd{\Loopcois{\is}{\cois}}{\LoopJ{\LoopHp}{\judg}}$, then $\judg\in\Spec_\LoopHp$; 
thus, the thesis follows from \refToCor{loop-cois}. 
\end{proof}
Again, this proof principle is almost the same as \refToProp{sound-ind}, but with an additional constraint, this time on sets  of circular hypothesis, which takes into account corules. 

We illustrate this technique proving that the definition of $\minElem{x}{s}$ is sound, that is, 
\begin{quote}
if $\minElem{x}{s}$ has a regular derivation with corules then $x$ is the minimum of $s$. 
\end{quote}
\begin{proof} 
First of all, we note that, if $\minElem{x}{s}$ has a finite proof tree using also the coaxiom, then $x$ belongs to $s$. 
Then, we define $\minSpec_\LoopHp$ as follows: 
$\minElem{x}{s}\in \minSpec_\LoopHp$ iff $x$ is the minimum of $s$ or $s = x_1\cons\ldots\cons x_n\cons r$, $\minElem{y}{r}\in\LoopHp$, $\minElem{y}{r}$ has a finite proof tree using also the coaxiom  and $x = \min\{x_1,\ldots,x_n,y\}$. 
We have trivialy that $\minSpec_\emptyset\subseteq \minSpec$. 

Assume a finite set of judgements $\LoopHp$. 
Clearly, if $\minElem{x}{s}\in \LoopHp$ has a finite proof tree using also the coaxiom, then $\minElem{x}{s}\in\minSpec_\LoopHp$. 
Now, suppose $s = x\cons r$, $\LoopHp'=\LoopHp\cup\{\minElem{z}{s}\}$, $\minElem{y}{r}\in\minSpec_{\LoopHp'}$ and $z = \min\{x,y\}$, then we have two cases:
\begin{itemize}
\item if $y$ is the minimum of $r$, then $z$ is the minimum of $s = y\cons r$, hence $\minElem{z}{s}\in\minSpec_\LoopHp$; 
\item if $r = x_1\cons\ldots\cons x_n\cons r'$, $\minElem{y'}{r'}\in\LoopHp'$, $\minElem{y'}{r'}$ has a finite proof tree using also the coaxiom  and $y = \min\{x_1,\ldots,x_n,y'\}$, then  
$s = x\cons r = x\cons x_1\cons \ldots\cons x_n\cons r'$ and $z = \min\{x,x_1,\ldots,x_n,y'\}$. 
We distinguish two subcases:
\begin{itemize}
\item if $\minElem{y'}{r'}\in\LoopHp$, then $\minElem{z}{s}\in\minSpec_\LoopHp$ by definition, and 
\item if $y' = z$ and $r' = s$, then $s = x\cons x_1\cons\ldots\cons x_n\cons s$ and $\minElem{z}{s}$ has a finite proof tree using also the coaxiom, thus $z$ belongs to $s$ and  $z = \min\{x,x_1,\ldots,x_n,z\}$, that is, $z$ is the minimum of $s$. 
  \qedhere
\end{itemize}
\end{itemize}
\end{proof}

We now consider a more involved example. 
\begin{exa}[Addition of rational numbers]
It is well-known that real numbers in $[0,1]$  can be represented as, not necessarily rational,  streams of digits in some basis. 
Let $\NPos$ be the set of positive natural numbers and assume a basis $b\in\NPos$. 
A digit $d$ is a natural number in $0..b-1$, then, 
given a stream $r = (d_i)_{i\in\NPos}$ of digits, the series $\sum_{i=1}^\infty d_ib^{-i}$ converges and its limit is the real number represented by $r$ and denoted by $\sem{r}$. 
It is also well-known that every real number $x\in[0,1]$ has at most two different representations as a stream, 
for instance, with $b = 10$, the number $1/2$ can be represented as either $5\cons \rep 0$ or $4\cons \rep 9$, where, for any digit $d$, $\rep d$ is the stream $d\cons d \cons d \cons \ldots$. 

Consider the following inference system with corules, defining the judgement $\addNum{r_1}{r_2}{r}{c}$, where $c$ is an integer representing the carry, and which should hold when $\sem{r_1}+\sem{r_2} = \sem{r}+c$. 
\[
\Rule{
  \addNum{r_1}{r_2}{r}{c}
}{ \addNum{d_1\cons r_1}{d_2\cons r_2}{(x\mod b)\cons r}{x\div b} }\, x = d_1+d_2+c
\BigSpace
\CoAxiom{ \addNum{r_1}{r_2}{r}{c} }\, c\in -1..2
\]
In \cite{AnconaDZ17esop,Dagnino19}  it is proved that this definition is correct. 
It is also well-known that rational streams of digits represent rational numbers, that is, if $r$ is a rational stream of digits, then $\sem{r}$ is a  rational number. 
We show here that the regular interpretation of the above inference system with corules is correct with respect to the addition of rational numbers. 

Define the set $\addSpec$ of correct judgements as follows:
$\addNum{r_1}{r_2}{r}{c} \in\addSpec$ iff $r_1$, $r_2$ and $r$ are rational and $\sem{r_1}+\sem{r_2} = \sem{r}+c$. 
We start by proving completeness, stated below:
\begin{quote}
for rational streams $r_1,r_2,r$, if $\sem{r_1}+\sem{r_2} = \sem{r}+c$ then $\addNum{r_1}{r_2}{r}{c}$ has a regular derivation with corules. 
\end{quote}
We use the bounded regular coinduction principle.
First of all, note that, since $\sem{r}\in [0,1]$ for any stream $r$, if $\addNum{r_1}{r_2}{r}{c}\in\addSpec$, then 
$c = \sem{r_1}+\sem{r_2}-\sem{r}$, hence $c\ge -1$ and $c\le 2$. 
Therefore, we immediately have that all judgements in $\addSpec$ have a finite proof tree using also the coaxiom. 

Assume $\addNum{r_1}{r_2}{r}{c}\in\addSpec$ and define \EZ{$X$} as follows:
$\addNum{s_1}{s_2}{s}{c'}\in X$ iff $\sem{s_1}+\sem{s_2}=\sem{s}+c'$ and $s_1$ and $s_2$ are substreams of $r_1$ and $r_2$, respectively. 
Trivially, $\addNum{r_1}{r_2}{r}{c}\in X$ and $X$ is finite because, 
since $r_1$ and $r_2$ are rational, they have finitely many different substreams, and $c'$ can assume only four values, hence $\sem{s} = \sem{s_1}+\sem{s_2}-c'$ can assume only finitely many values, and so there are finitely many $s$ satisfying that equation. 
Now we have to check that $X$ is  consistent.
Assume $\addNum{d_1\cons s_1}{d_2\cons s_2}{d\cons s}{c'} \in X$, then we have 
$\sem{d_1\cons s_1}+\sem{d_2\cons s_2} = \sem{d\cons s} +c'$. 
It is easy to check that, for any stream $t$ and digit $d$, $b\sem{d\cons t} = d + \sem{t}$. 
Hence, we get $\sem{s_1}+\sem{s_2} = \sem{s} + c''$, with $c'' = bc' + d -d_1-d_2$. 
Since $s_1$ and $s_2$ are still substream of $r_1$ and $r_2$, respectively, we get $\addNum{s_1}{s_2}{s}{c''} \in X$, as needed. 

We now prove soundness, as stated below:
\begin{quote}
if $\addNum{r_1}{r_2}{r}{c}$ has a regular derivation with corules, then $\sem{r_1}+\sem{r_2} = \sem{r}+c$ and $r_1$,$r_2$ and $r$ are rational stream. 
\end{quote}
For any finite set $\LoopHp$, we define $\addSpec_\LoopHp$ as follows: 
$\addNum{r_1}{r_2}{r}{c_0}\in \addSpec_\LoopHp$ iff $r_1 = d_{11}\cons\ldots\cons d_{1n}\cons s_1$, 
$r_2 = d_{21}\cons\ldots\cons d_{2n}\cons s_2$, 
$r = d_1\cons\ldots\cons d_n\cons s$ and,
there are $c_1,\ldots,c_n \in -1..2$ such that, 
for all $i\in 1..n$, $d_{1i}+d_{2i}+c_i = bc_{i-1} + d_i$, and 
either $\addNum{s_1}{s_2}{s}{c_n} \in \LoopHp$ or $s_1 = r_1$, $s_2 = r_2$, $s = r$ and $c_0 = c_n$. 
The two closure properties in \refToProp{sound-ind-corules} are easy to check. 
Hence, to conclude it is enough we show that $\addSpec_\emptyset \subseteq \addSpec$. 
To this end, assume $\addNum{r_1}{r_2}{r}{c_0}\in \addSpec_\emptyset$, then, by definition, we have 
$r_1 = d_{11}\cons\ldots \cons d_{1n}\cons r_1$, 
$r_2 = d_{21}\cons \ldots \cons d_{2n} \cons r_2$, 
$r   = d_1 \cons \ldots \cons d_n \cons r$ and, 
there are $c_1,\ldots,c_n \in -1..2$ such that, 
for all $i \in 1..n$,  $d_{1i}+d_{2i}+c_i = bc_{i-1}+d_i$. 
This implies that 
$r_1=(d_{1_i})_{i\in\NPos}$, $r_2=(d_{2i})_{i\in\NPos}$ and $r=(d_i)_{i\in\NPos}$ are rational streams and, 
for all $i\in\NPos$, $d_{1i}+d_{2i} + c_{j+1} = bc_j +d_i$ where $j = i\mod n$. 
Hence, we have only to check that 
$\sem{r_1}+\sem{r_2} = \sem{r}+c_0$. 
We define sequences $(x_k)_{k\in\NPos}$ and $(y_k)_{k\in\NPos}$ as 
$x_k = \sum_{i=1}^k (d_{1i}+d_{2i})b^{-i}$ and $y_k = \sum_{i=1}^k d_ib^{-i}$. 
Then we have to show that $\lim x_k -\lim y_k = \lim (x_k-y_k) = c_0$, 
because $\sem{r_1}+\sem{r_2} = \lim x_k$ and $\sem{r} = \lim y_k$. 
As, for all $k\in\NPos$ we have $d_{1k}+d_{2k}+c_{j+1} = bc_j +d_k$ with $j = k \mod n$, we get 
$c_0 - (x_k - y_k) = c_0 + y_k - x_k = c_{j+1}b^{-k}$, that, when $k$ tends to $\infty$, converges to $0$, hence  we get 
$\lim (x_k-y_k) = c_0$. 
\end{exa}

\section{Related work} \label{sect:related}

The regular approach has been adopted in many different contexts, notably to define proof systems for several kinds of logics, and to define operational models of programming languages supporting cyclic structures. 

Concerning proof systems allowing regular proofs, usually called \emph{circlar proofs},  
we find proposals in  \cite{Santocanale02,FortierS13,Doumane17} for logics with fixed point operators, and in \cite{Brotherston05,BrotherstonS11} for classical first order logic with inductive definitions. 
In both cases, regular proofs allow to naturally handle the unfolding of fixed point and recursive definitions, respectively. 
However, regular proofs allow the derivation of wrong sequents, such as the empty one; 
hence, to solve this issue, they have to impose additional constraints on regular proofs, such as parity conditions in \cite{FortierS13}, thus filtering out undesired derivations. 
These additional requirements on regular proofs  are expressed at the meta-level and typically require some condition to hold infinitely often in the regular proof.
As inference systems with corules have been designed precisely to filter out undesired infinite derivations, 
and they seem pretty good at capturing requirements that should hold infinitely often in the proof, 
it would be interesting to investigate whether these additional constraints can be enforced by an appropriate set of corules.

In \cite{BasoldKL19}, the authors present a proof system supporting coinductive reasoning with Horn clauses. 
Such a system allows only finite derivations, enabling coinduction thanks to a cycle detection mechanism similar to the one we use in the context of inference systems. 
However, in their setting, they have to perform an additional `productivity'' check, that is, they have to ensure that a cycle  is closed only after the application of a clause, otherwise they would fall in  inconsistency. 
This check is not required in our setting, as our derivations are built only using rules. 

Even though all these proof systems are tightly related to  our work, there is an important difference: 
we study regular reasoning by itself,  without fixing a specific syntax. 
In this way, we can work abstractly, focusing only on the essential feature of regularity, thus providing a common abstract and simple background  to all these proof systems. 

The other context where we can find applications of regular coinduction is in programming languages supporting cyclic structures. 
In this case, we use the term \emph{regular corecursion} for a  semantics of recursive definitions which detects cycles, analogously to the inductive characterization of the regular interpretation in \refToSect{cycle}. 

We can find proposals of language constructs for regular corecursion in all common programming paradigms: logic \cite{SimonMBG06,SimonBMG07,AnconaD15},  functional  \cite{JeanninKS13,JeanninKS17} and object-oriented  \cite{AnconaZ12}. 
There are also proposals, inspired by corules, supporting a flexible form of regular corecursion in the logic \cite{AnconaDZ17coalp,DagninoAZ20} and object-oriented \cite{AnconaBDZ20} paradigms.

The approach proposed in the logic paradigm is particularly interesting. 
Indeed, logic programs can be regarded as particular, syntactic instances of inference systems: \EZ{judgments are ground atoms, and inference rules are ground instances of clauses.}

The declarative semantics of logic programs \EZ{turns out to be defined exactly} the same way as \EZ{that of the underlying} inference systems, as fixed point of the associated inference operator. 
Moreover, 
in coinductive logic programming, the resolution procedure, called \emph{coSLD resolution} \cite{SimonMBG06,SimonBMG07,AnconaD15}, keeps track of already encountered goals, so that, 
if it finds again the same goal, up-to unification, it can accept it. 
This mechanism looks very similar to our inductive characterization of the regular interpretation, hence, the analogy with inference systems holds also at the operational level. The same analogies hold between generalised inference systems and \emph{logic programs with coclauses} introduced in \cite{DagninoAZ20}. 

\EZ{Basing on this correspondence, in \cite{DagninoAZ20} soundness and completeness of the resolution procedure with respect to the regular declarative semantics are proved, rather than in an ad-hoc way, by relying on the inductive characterisation of the regular interpretation given in this paper (\refToThm{cycle-is} and \refToCor{loop-cois}). That is, it is enough to show that the resolution procedure is equivalent to the inductive characterisation of the regular interpretation.}\EZComm{dire in future work che cercheremo di farlo in altri casi? tipo quello funzionale?}

\section{Conclusion} \label{sect:conclu} 

Inference systems \EZ{\cite{Aczel77,Sangiorgi11}} are \EZ{a} widely used framework to define and reason on several kinds of judgements (small-step and big-step operational semantics, type systems, proof systems, etc.) by means of inference rules. 
They naturally support both inductive and coinductive reasoning: in the former approach only finite derivations are allowed, while in the latter one, arbitrary derivations (finite or not) can be used. 

In this paper, we have considered regular reasoning, an interesting middle way between induction and coinduction, combining advantages of both approaches: 
it is not restricted to finite derivations, thus overcoming limitations of the inductive approach,  but it still has a finite nature, as a regular derivation can only contain finitely many judgements. 
We started from a natural proof-theoretic definition of the regular interpretation of an inference system, as the set of judgements derivable by a regular proof tree. 
After presenting the construction of the rational fixed point in a lattice-theoretic setting, we proved that the regular interpretation coincides with the rational fixed point of the inference operator associated with the inference system. 
Then, we showed that the regular interpretation has an equivalent inductive characterization, which provides us with an algorithm to \EZ{find a derivation for a judgment, if any}. 
Relying on these results, we discussed proof techniques for regular reasoning: from the fact that the regular interpretation is a rational fixed point, we got the regular coinduction principle, which allows us to prove completeness, 
while, from the inductive characterization, we derived a proof technique to show soundness. 

Finally, we focused on inference systems with corules \cite{AnconaDZ17esop,Dagnino19}, a recently introduced generalisation of inference systems, allowing refinements of the coinductive interpretation. 
We \EZ{showed} that all results presented for regular coinduction can be smoothly extended to this generalised framework, thus providing a flexible approach also to regular reasoning. 

Concerning future work, 
an interesting direction is the development of more sophisticated proof techniques for regular reasoning. 
Indeed, several enhanced coinductive techniques have been proposed, such as \emph{parametrized coinduction} \cite{HurNDV13} and \emph{coinduction up-to} \cite{BonchiPPR17}, which have been proved to be effective in several contexts. 
Adapting such techniques to the (flexible) regular case would provide us with powerful tools to support regular reasoning. 
A further development in this direction would be to provide support to regular reasoning in proof assistants, which usually provide primitives only for plain induction and coinduction. 
To this end, we could start from existing approaches \cite{Spadotti16,UustaluV17} to implement regular terms in proof assistants.
Finally, it would be interesting to apply results in this paper to build, in a principled way,  abstract and operational models of languages supporting regular corecursion, going beyond the logic paradigm.

\bibliographystyle{alphaurl}
\bibliography{bib}

\end{document}